\newcommand\mbb[1]{\mathbb{#1}}
\def\HC{\mathcal{H}}
\def\LC{\mathcal{L}}
\def\ad{^{\dagger}}
\newcommand{\BC}{\mathcal{B}}
\newcommand{\GC}{\mathcal{G}}
 \newcommand{\Tr}{{\rm Tr}}
\renewcommand{\geq}{\geqslant}
\renewcommand{\vec}[1]{\boldsymbol{#1}}  % Bold vectors instead of arrow vectors
\newcommand*{\id}{\openone}
\newcommand{\bs}{\textsf{BS}}
\newcommand{\lm}{\lambda }
\newcommand{\thv}{\vec{\theta}}
\newcommand{\losalamos}{Theoretical Division, Los Alamos National Laboratory, Los Alamos, New Mexico 87545, USA}
\newcommand{\courant}{Courant Institute of Mathematical Sciences, New York University, New York, New York 10012, USA}
\def\be{\begin{equation}}
\def\ee{\end{equation}}
\def\bs{\begin{split}}
\def\e{\end{split}}
\def\ba{\begin{eqnarray}}
\def\bea{\begin{eqnarray}}
\def\tea{\end{eqnarray}}
\def\ea{\end{eqnarray}}
\def\eea{\end{eqnarray}}
\def\comm{\mf{comm}}
\def\tn{^{\otimes n}}
\def\tn{^{\otimes n}}
\def\comm{\mf{comm}}
\newcommand\mf[1]{\mathfrak{#1}}
\newcommand\spn{\mathrm{span}}
\def\Lie{{\rm Lie}}
\newtheorem{theorem}{Theorem}
\newtheorem{lemma}{Lemma}
\newtheorem{corollary}{Corollary}
\newtheorem{proposition}{Proposition}
\newtheorem{definition}{Definition}
\def\be{\begin{equation}}
\def\te{\end{equation}}
\def\ee{\end{equation}}
\def\ba{\begin{eqnarray}}
\def\bea{\begin{eqnarray}}
\def\tea{\end{eqnarray}}
\def\ea{\end{eqnarray}}
\def\eea{\end{eqnarray}}
\begin{document}

\title{On the universality of $S_n$-equivariant $k$-body gates}

\author{Sujay Kazi}
\thanks{sujay.kazi@gmail.com}
\affiliation{\courant}

\author{Mart\'{i}n Larocca}
\affiliation{\losalamos}
\affiliation{Center for Nonlinear Studies, Los Alamos National Laboratory, Los Alamos, New Mexico 87545, USA}

\author{M. Cerezo}
\thanks{cerezo@lanl.gov}
\affiliation{Information Sciences, Los Alamos National Laboratory, Los Alamos, NM 87545, USA}

\begin{abstract}
 The importance of symmetries has recently been recognized in  quantum machine learning from the simple motto: if a task exhibits a symmetry (given by a group $\mathfrak{G}$), the learning model should respect said symmetry. This can be instantiated via $\mathfrak{G}$-equivariant Quantum Neural Networks (QNNs), i.e., parametrized quantum circuits whose gates are generated by operators commuting with a given representation of $\mathfrak{G}$. In practice, however, there might be additional restrictions to the types of gates one can use, such as being able to act on at most $k$ qubits. In this work we study how the interplay between symmetry and $k$-bodyness in the QNN generators affect its expressiveness for the special case of $\mathfrak{G}=S_n$, the symmetric group. Our results show that if the QNN is generated by one- and two-body $S_n$-equivariant gates, the QNN is semi-universal but not universal. That is, the QNN can generate any arbitrary special unitary matrix in the invariant subspaces, but has no control over the relative phases between them.  Then, we show that in order to reach universality one needs to include $n$-body generators  (if $n$ is even) or $(n-1)$-body generators (if $n$ is odd). As such, our results brings us a step closer to better understanding the capabilities and limitations of equivariant QNNs.
\end{abstract}

\maketitle

\let\clearpage\relax

\section{Introduction}

%QML
Quantum Machine Learning (QML) aims to use quantum computers to classify, cluster, and make predictions from  classical data encoded in quantum states, or from quantum data produced by some quantum mechanical process~\cite{biamonte2017quantum,schuld2015introduction}. Despite the tremendous attention QML has received, its true potential and limitations are still unclear~\cite{cerezo2022challenges}.  Recent results have shown that QML models lacking inductive biases (i.e., models that do not contain information about the specific problem being tackled) can encounter serious trainability and generalization issues~\cite{cerezo2020cost,holmes2021connecting,kubler2021inductive,larocca2021diagnosing}. This can be understood as a form of ``no-free-luch''  whereby high-expressiveness, multi-purpose algorithms will have overall poor performance.

%GQML
Numerous endeavors have been undertaken to create learning models that are tailored specifically to a given task. Among these, Geometric Quantum Machine Learning (GQML) has emerged as one of the most promising approaches~\cite{larocca2022group,meyer2022exploiting,skolik2022equivariant,nguyen2022atheory,schatzki2022theoretical,ragone2022representation,sauvage2022building,anschuetz2022efficient}. The fundamental idea behind GQML is to leverage the symmetries present in the task to develop sharp inductive biases for the learning models. For instance, when dealing with the task of classifying $n$-qubit pure states from $n$-qubit mixed states, GQML suggests using models whose outputs remain invariant under arbitrary unitaries applied to the qubits~\cite{larocca2022group}. This is because these unitaries cannot change the purity and are therefore symmetries of the task.

%Symmetry Groups
The GQML program consists of several steps. First, one needs to identify the group of transformations preserving some important property of the data (e.g., a symmetry that preserves the labels in supervised learning). In the case of purity, the group is $\mf{G}=\mathbb{U}(2^n)$. The next step is to build a model that generates labels that remain invariant under $\mf{G}$. Several recent works provide templates for developing such $\mf{G}$-invariant models, which can be found in Refs.~\cite{meyer2022exploiting,nguyen2022atheory}. For models based on Quantum Neural Networks (QNNs), one needs to leverage the concept of \textit{equivariance}~\cite{bronstein2021geometric,nguyen2022atheory}.

%Equivariance + other constraints
An equivariant QNN is a parametrized quantum circuit ansatz that commutes, for all values of parameters, with the representation of $\mf{G}$.  Figure~\ref{fig:Overview} illustrates this constraint on the set of unitaries that the QNN can express (i.e., to the subspace of $\mf{G}$-\textit{symmetric} operators). Given the practical limitations on implementing many-body gates, it is reasonable to ask whether having access to elementary up-to-$k$-local equivariant gates suffices for the QNN to express \textit{any} possible equivariant unitary. This question was first studied in Ref.~\cite{marvian2022restrictions}, where the authors found that, as opposed to the well-known result which  states that any unitary can be decomposed in one- and two-qubit gates, certain equivariant unitaries cannot be decomposed in equivariant $k$-local gates. These result were then extended in Ref.~\cite{marvian2022rotationally,marvian2021qudit,marvian2023non} to rotationally-invariant circuits, tensor product representations of groups acting on qudits, and tensor product representations of Abelian symmetries.

In this work, we will focus on $\mf{G}=S_n$, the symmetric group of permutations, with its action on $n$ qubits. This group is of special interest as it is the relevant symmetry group for a wide range of learning tasks related to problems defined on sets, graphs and grids, molecular systems, multipartite entanglement, and distributed quantum sensing~\cite{maron2020learning,maron2018invariant, keriven2019universal, maron2019provably,verdon2019quantumgraph,cong2019quantum,beckey2021computable,guo2020distributed,huerta2022inference,anschuetz2022efficient}. Our results can be summarized as follows. First, we show that that $S_n$-equivariant QNNs with $2$-body gates are enough to reach semi-universality (and hence subspace controllability), but not universality. This means that the circuit can generate \textit{any} arbitrary special unitary matrix in each of the subspaces but that one has no control over all their relative phases. We demonstrate that this small, albeit important, difference follows from the central projection theorem of Ref.~\cite{zimboras2015symmetry}. Next, we prove that if one wishes to attain subspace universality, then the QNN must contain $n$-body $S_n$-equivariant gates (if $n$ is even) or $(n-1)$-body gates (if $n$ is odd). Hence, our theorems impose restrictions on the set of unitaries that can be generated when combining symmetry and few-bodyness. We further argue that our results are important for the field of quantum optimal control~\cite{d2007introduction} as they correct a previous result in the literature~\cite{albertini2018controllability}.

\section{Results}

\subsection{Background: Dynamical Lie algebra and dynamical Lie group for QNNs}

Parametrized quantum circuits, or more generally, QNNs, are one of the most widely used computational paradigms to process information in near-term quantum computers due to their high versatility~\cite{cerezo2020variationalreview,cerezo2022challenges}. In what follows, we will consider QNNs acting on systems of $n$ qubits (with associated $d=2^n$-dimensional Hilbert space $\HC$) of the form
\begin{equation}\label{eq:qnn}
U(\thv)=\prod_{m=1}^M e^{-i \theta_m H_m}\,,
\end{equation}
where $H_m$ are Hermitian operators taken from a given \textit{set of generators} $\GC$, and $\thv=(\theta_1,\ldots\theta_M)\in\mathbb{R}^M$ are trainable parameters. Here we note that the operators in $\GC$ constitute a veritable fingerprint for the QNN, as they allow us to differentiate a given architecture from another.

\begin{figure}[t]
    \centering
\includegraphics[width=.94\columnwidth]{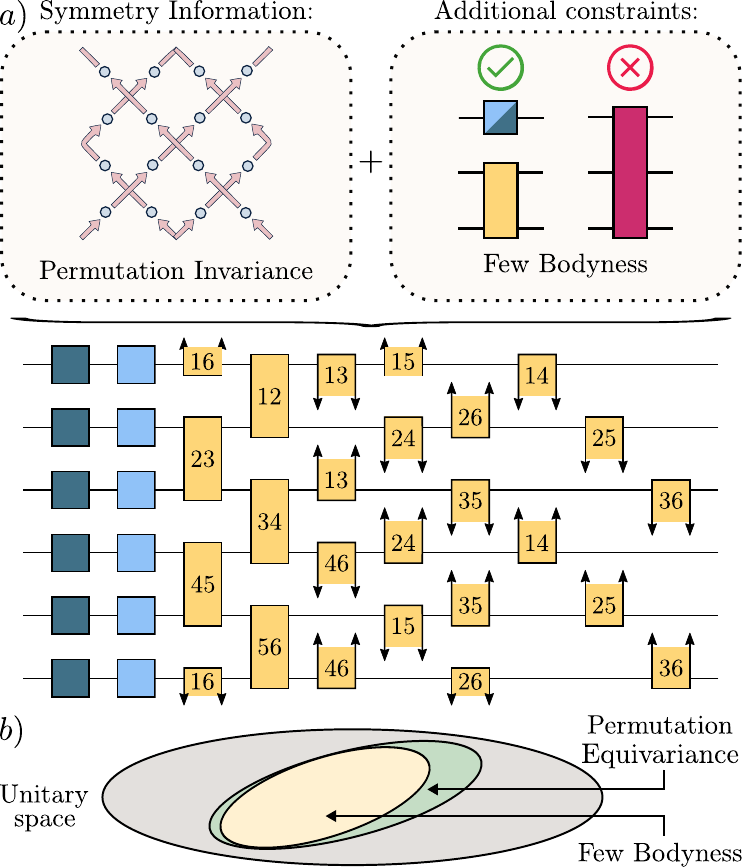}
    \caption{\textbf{Summary of our main results.} a) Recent results in the field of GQML indicate that if the problem has a given relevant symmetry $\mf{G}$, then the gates in the QNN should be $\mf{G}$-equivariant~\cite{nguyen2022atheory}. In this work we consider the case where $\mf{G}=S_n$, the permutation group, and study how the expressiveness of the QNN changes when one imposes the additional constraint of few-bodyness on the QNN's gates. In the circuit, gates with the same color share a common parameter. b) The effect of restricting the set of available elementary gates in the QNN is to restrict its expressiveness, i.e., how much of the unitary space it can cover by varying its parameters. Imposing $\mf{G}$-equivariance can appropriately reduce the QNN's expressiveness to a region of unitaries respecting the task's symmetry. Imposing additional restrictions, such as few-bodyness, further restricts its expressiveness.
    \label{fig:Overview}}
\end{figure}

In this work we consider the case where $\GC$ contains only up-to-$k$-body gates, i.e., gates acting non-trivially on at most $k$ qubits. Such restriction is usually physically motivated and arises when working with gates that are  native to some specific hardware~\cite{kandala2017hardware}. At this point, we find it convenient to recall that in the absence of symmetries, $2$-local gates are universal~\cite{divincenzo1995two,lloyd1995almost}. That is, they suffice to generate any $d$-dimensional unitary. However, the same is generally not true when the operators in $\GC$ are local, but also chosen to respect a certain symmetry group $\mf{G}$~\cite{marvian2022restrictions,marvian2023non}. Hence, given these constraints, it is critical to quantify the QNN's expressiveness, i.e., the breadth of unitaries that $U(\thv)$ can generate when varying the parameters $\thv$ (see Fig.~\ref{fig:Overview}(b)). By inspecting Eq.~\eqref{eq:qnn} one can see that two main factors come into play: the number of parameters $M$, and the set of generators $\GC$. While several measures of expressiveness exist~\cite{sim2019expressibility,holmes2021connecting,zimboras2015symmetry}, here we will focus on the so-called Dynamical Lie Algebra (DLA)~\cite{zeier2011symmetry}, which captures the potential expressiveness of the QNN (under arbitrary depth, or number of parameters).

\begin{figure*}[t]
    \centering
\includegraphics[width=.94\linewidth]{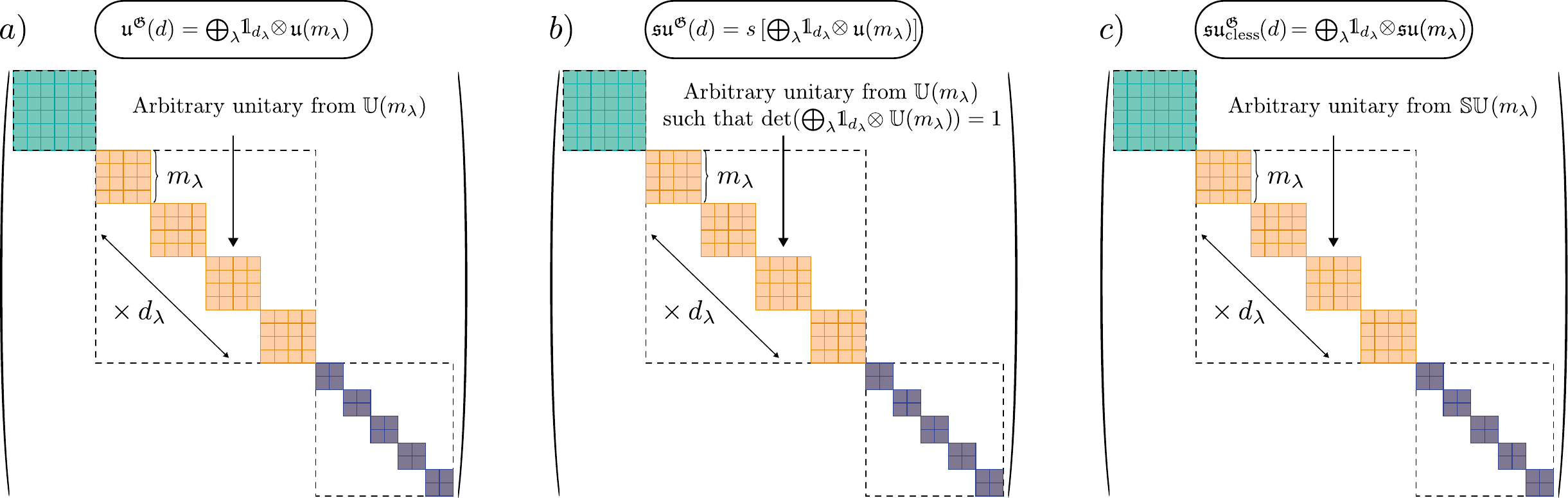}
    \caption{\textbf{Important Lie algebras and the irrep structure of the elements in the associated Lie groups}. In the main text we have defined three important subalgebras. a) The first is the maximal $\mf{G}$-symmetric subalgebra  $\mf{u}^{\mf{G}}(d)$. As schematically  shown, the associated Lie group contains the set of matrices where an arbitrary unitary can be prepared in each isotypic component. b) Next, we define the special subalgebra of $\mf{u}^{\mf{G}}(d)$, which we denote as $\mf{su}^{\mf{G}}(d)$. Now the associated Lie group contains all the matrices where an arbitrary unitary can be prepared in each isotypic component, under the additional restriction that the determinant of the overall matrix must be equal to one. c) Finally, we define the centerless subalgebra of $\mf{u}^{\mf{G}}(d)$, which we denote as $\mf{su}^{\mf{G}}_{\rm{cless}}(d)$. In this case, the Lie group contains all matrices where an arbitrary special unitary can be prepared in each isotypic component. Note that the difference between $\mf{su}^{\mf{G}}_{\rm{cless}}(d)$ and $\mf{u}^{\mf{G}}(d)$ is that the former does not allow us to control the relative phases between the isotypic components. }
    \label{fig:algebras}
\end{figure*}

Given a set of Hermitian generators $\GC$, the DLA is the subspace of operator space $\mf{u}(d)$ spanned by the repeated nested commutators of the elements in $\GC$. That is,
\begin{equation}
\mf{g} = \langle i \GC\rangle_{\Lie}\,,
\end{equation}
where $\langle \,\cdot\,\rangle_{\Lie}$ denotes the Lie closure.
Notably, the DLA fully determines the ultimate expressiveness of the QNN, as we have
\begin{equation}
U(\thv)\in \mathbb{G}=e^{\mf{g}}\subseteq \mathbb{U}(d)\,.
\end{equation} 
Here, $\mathbb{G}$ denotes the dynamical Lie group, and is composed of any possible unitary generated by a unitary $U(\thv)$ as in Eq.~\eqref{eq:qnn} for any possible choice of $\thv$ and $M$. When $\mathbb{G}=\mathbb{SU}(d)$ or $\mathbb{G}=\mathbb{U}(d)$, the QNN is said to be \textit{controllable}, or \textit{universal}, as any unitary can be generated from $U(\thv)$ (up to a global phase)~\cite{d2007introduction,schirmer2003controllability}. In what follows, we will study how constraints in $\GC$ (such as equivariance or few-bodyness) alter the QNN's ability to become universal. For this purpose, we recall the mathematical definition of $\mf{G}$-\textit{equivariance}~\cite{nguyen2022atheory,schatzki2022theoretical,ragone2022representation,meyer2022exploiting}. Given a  symmetry group $\mf{G}$ (which we henceforth assume to be compact), we have the following definition.
\begin{definition}[Equivariant operators]\label{def:equiv}
    An operator $H$ is said to be $\mf{G}$-\textit{equivariant} if it commutes with a representation of a given compact group $\mf{G}$. That is, if
\begin{equation}\label{eq:equiv}
[H,R(g)]=0\,, \quad g\in\mf{G}\,,
\end{equation}
where $R$ is a unitary representation of $\mf{G}$ on the $d$-dimensional Hilbert space $\HC$. 
\end{definition}
Note that if all the generators in $\GC$ are equivariant, the QNN can be readily shown to be equivariant itself~\cite{nguyen2022atheory,meyer2022exploiting}. We also highlight the fact that Definition~\ref{def:equiv} implies 
\begin{equation}
H\in\comm(\mf{G})\,,
\end{equation}
where $\comm({\mf{G}})$ denotes the \textit{commutant algebra} of the (representation $R$ of the) group $\mf{G}$, i.e., the associative matrix algebra of linear operators that commute with every element in $\mf{G}$:
\begin{equation}
    \comm({\mf{G}})=\{A\in\BC(\HC)\,\,|\,\,[A,R(g)]=0\,,\,\,\forall g\in\mf{G}\}\,.
\end{equation}
Here, $\BC(\HC)$ denotes  the space of bounded linear operators in $\HC$.

Next, it is fundamental to recall that the representation $R$ admits an isotypic decomposition
\begin{equation}
R(g\in\mf{G})\cong\bigoplus_{\lambda=1}^L r_\lambda(g)\otimes \id_{m_\lambda}\,,    
\end{equation}
where $r_\lambda$ is a $d_\lambda=\dim(r_\lambda)$-dimensional irreducible representation (irrep) of $\mf{G}$, and $\id_{m_\lambda}$ is an identity of dimension $m_\lambda$. Hence, $m_\lambda$ denotes the multiplicity associated to each irrep. In this case, the Hilbert space $\HC$ can be expressed as
\begin{equation}
    \HC=\bigoplus_{\lambda,\nu} \HC_{\lambda}^\nu\,,
\end{equation}
where $\HC_{\lambda}^\nu$ is a $m_\lambda$-dimensional space, and $\nu=1,\cdots,d_\lambda$.  Moreover, using $\mbb{P}_{\lambda}^\nu$ to denote the projector onto the subspace $\HC_{\lambda}^\nu$, we can focus on the part of the DLA that acts non-trivially on each $\HC_{\lambda}^\nu$, given by
\begin{equation}
   \mf{g}_{\lambda}^\nu=\{\mbb{P}_{\lambda}^\nu\,iH \,,\,\,iH\in\mf{g}\}\,.
\end{equation}
Note that by definition $\mf{g}_{\lambda}^\nu=\mf{g}_{\lambda}^{\nu'}$, which allows us to simply use the notation $\mf{g}_{\lambda}:=\mf{g}_{\lambda}^\nu$.

The previous motivates us to define three important Lie subalgebras of $\mf{u}(d)$. First, we define the \textit{maximal} $\mf{G}$-symmetric subalgebra $\mf{u}^{\mf{G}}(d)=\comm(\mf{G})\cap \mf{u}(d)$ as 
\begin{align}\label{eq:maximal-g-sym-u}
\mf{u}^{\mf{G}}(d)&=Q\left(\bigoplus_{\lambda=1}^L\mf{u}(m_\lambda)\right)=\bigoplus_{\lambda=1}^L\id_{d_\lambda}\otimes \mf{u}(m_\lambda)\,,
\end{align}
where $Q$ is a representation defined by the right-hand-side of Eq.~\eqref{eq:maximal-g-sym-u}. We then define the  \textit{maximal special} subalgebra $\mf{su}^{\mf{G}}(d)=\comm(\mf{G})\cap \mf{su}(d)$, given by
\begin{align}\label{eq:maximal-g-sym-su}
\mf{su}^{\mf{G}}(d)&=s\left[\bigoplus_{\lambda=1}^L\id_{d_\lambda}\otimes \mf{u}(m_\lambda)\right]\,,
\end{align}
where $s[\cdot]$ denotes keeping the operators with vanishing trace. 
Finally, we also define the \textit{maximal centerless} $\mf{G}$-symmetric subalgebra
\small
\begin{align}\label{eq:maximal-g-sym}
\mf{su}^{\mf{G}}_{\rm{cless}}(d)&=Q\left(\bigoplus_{\lambda=1}^L\mf{su}(m_\lambda)\right)=\bigoplus_{\lambda=1}^L\id_{d_\lambda}\otimes \mf{su}(m_\lambda)\,.
\end{align}
\normalsize
In Fig.~\ref{fig:algebras} we show how the previous three algebras lead to different unitaries in their associated Lie groups. 

With the previous algebras in mind, we can introduce three key definitions that will allow us to study controllability and degrees of universality when there are symmetries in play.

\begin{definition}\label{def:subspace-control-new}
The QNN, or its associated DLA, is said to be:
\begin{itemize}
\item \textit{Subspace controllable}, if for all $\lambda$,
\begin{equation}
     \mf{su}(m_\lambda)\subseteq \mf{g}_\lambda\subseteq\mf{u}(m_\lambda)\,.
\end{equation}
\item \textit{Semi-universal}, if 
\begin{align}\label{eq:maximal-g-sym-su-min}
\mf{su}^{\mf{G}}_{\rm{cless}}(d)\subseteq\mf{g}\subseteq\mf{u}^{\mf{G}}(d)\,.
\end{align}
\item \textit{Universal}, or \textit{completely controllable}, if
\begin{equation}
     \mf{su}^{\mf{G}}(d)\subseteq \mf{g} \subseteq\mf{u}^{\mf{G}}(d)\,.
\end{equation}
\end{itemize}
\end{definition}

It is clear that complete controllability~\cite{d2007introduction} implies semi-universality~\cite{marvian2023non}, which in turn implies subspace controllability~\cite{wang2016subspace}. The key difference between the unitaries arising from a DLA which is semi-universal but \textit{not} universal is that one cannot control the \textit{relative} phases between the unitaries acting within each isotypic component (see Fig.~\ref{fig:algebras}). On the other hand, a QNN that is subspace controllable can generate arbitrary unitaries in each subspace, but not necessarily independently as in the semi-universal case. For instance, if $\HC=\HC_{\lambda_1}\otimes \HC_{\lambda_2}$ with $\dim(\HC_{\lambda_1})=\dim(\HC_{\lambda_2})=d/2$, then a QNN producing unitaries of the form $V\oplus V$ with $V\in e^{\mf{su}(d/2)}$  is subspace controllable but not semi-universal. As we will see below, distinguishing between these three cases will be extremely important to better understand the expressive capacity of  QNNs with equivariance and few-bodyness constraints.

To finish this section we recall a fundamental concept: Given a Lie algebra $\mf{h}\subseteq\mf{u}(d)$, its center $\mf{z}(\mf{h})$ is composed of all the elements in $\mf{h}$ that commute with every element in $\mf{h}$, i.e.,
\begin{equation}
    \mf{z}(\mf{h})=\{iH\in\mf{h}\,\,|\,\,[H,H']=0\,,\,\,\forall iH'\in\mf{h}\}\,.
\end{equation}
In particular, $\mf{z}(\mf{u}^{\mf{G}}(d))=Q\left( \bigoplus_{\lm=1}^L \mf{u}(1) \right)$ and it follows that every element $iE\in \mf{z}(\mf{u}^{\mf{G}}(d))$ has the form (see Ref.~\cite{sagan2001symmetric}, Theorem 1.7.8)  
\begin{equation}\label{eq:isotypic-center}
  iE  =\bigoplus_{\lambda=1}^L ie_\lambda \id_{d_\lambda}\otimes  \id_{m_\lambda}\,,
\end{equation}
where $e_\lambda \in\mathbb{R}$.  Moreover, it also follows that
\begin{equation}
    \dim\left(\mf{z}(\mf{u}^{\mf{G}}(d))\right)=L\,,
\end{equation}
i.e., there are as many elements in the center as irreps in Eq.~\eqref{eq:maximal-g-sym-u}. Finally, let us note that $\mf{su}^{\mf{G}}_{\rm{cless}}(d)$ is the maximal $\mf{G}$-symmetric centerless subalgebra of $\mf{u}^{\mf{G}}(d)$. That is,
\begin{equation}\label{eq:centerless-center}
\mf{u}^{\mf{G}}(d)=
\mf{su}^{\mf{G}}_{\rm{cless}}(d)\cup \mf{z}(\mf{u}^{\mf{G}}(d))\,.
\end{equation}

\subsection{Main results}

In what follows we will consider $\mf{G}$ to be the symmetric group $S_n$ and  $R$ the qubit-permuting representation of $S_n$ on $n$ qubits,
\begin{equation}
R(\pi\in S_n)\bigotimes_{i=1}^n |\psi_i\rangle = \bigotimes_{i=1}^n |\psi_{\pi^{-1}(i)}\rangle\,.
\end{equation}
Given a set of equivariant $k$-local generators, it is natural to ask:  \textit{can we achieve subspace controllability, or even (semi-)universality?} 
In this section we characterize the expressiveness of permutation-equivariant few-bodyness-constrained ansatze.

First, let us investigate the case of $1$-body symmetric operators. For instance, consider the (single element) set
\begin{equation}
    \GC_1=\left\{\sum_{j=1}^nX_j\right\}\,,
\end{equation}
where we use $X_j,Y_j,Z_j$ to denote the corresponding Pauli operator acting on the $j^\text{th}$ qubit. By inspection, $\sum_{j=1}^nX_j$ is $S_n$-equivariant, as it is invariant under any qubit permutation. Moreover, we can also see that the  corresponding DLA $\mf{g}_1$ is a representation of $\mf{u}(1)$. Next, consider the set
\begin{equation}
    \GC_1'=\left\{\sum_{j=1}^nX_j,\sum_{j=1}^nY_j\right\}\,,
\end{equation}
for which the DLA $\mf{g}_1'$ is a representation of $\mf{su}(2)$. These two cases show that, unsurprisingly, one cannot obtain subspace controllability (let alone semi-universality or universality) using $1$-body symmetric operators.

%\subsection{Two-body generators}
Let us now consider up-to-$2$-body $S_n$-equivariant generators. In particular, consider the set
\begin{equation}\label{eq:genertors-2-body}
    \GC_2=\left\{\sum_{j=1}^nX_j,\sum_{j=1}^nY_j,\sum_{j_1<j_2}^nZ_{j_1}Z_{j_2}\right\}\,.
\end{equation}
The expressiveness of $\GC_2$ is captured by the following theorem. 
\begin{theorem}\label{theo:subspace-not-universal}
Consider the set $\GC_2$ of $S_n$-equivariant generators in Eq.~\eqref{eq:genertors-2-body}. The associated  DLA is 
\begin{align}
\mf{g}_2&=Q\left(\bigoplus_{\lambda=1}^L\mf{su}(m_\lambda)\oplus\mf{u}(1)\right)\\
&=\mf{su}^{S_n}_{\rm{cless}}(d) \boxplus Q\left(\mf{u}(1)\right) \,,
\end{align}
where $\boxplus$ denotes the Minkowski sum\footnote{Given two sets $A$ and $B$, their Minkowski sum is defined as $A\boxplus B = \{a+b\,|\,a\in A, b\in B\}$.} and where $Q(\mf{u}(1))\subset \mf{z}(\mf{u}^{S_n}(d))$.

\end{theorem}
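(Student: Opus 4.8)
The plan is to pass to the Schur--Weyl picture, in which the qubit-permuting representation $R$ decomposes so that the commutant $\comm(S_n)$ is generated by the collective spin operators. Writing $S_x=\tfrac12\sum_j X_j$, $S_y=\tfrac12\sum_j Y_j$, $S_z=\tfrac12\sum_j Z_j$, one has $\sum_{j_1<j_2}Z_{j_1}Z_{j_2}=2S_z^2-\tfrac{n}{2}\id$, so up to a harmless identity shift the generating set is $\{S_x,S_y,S_z^2\}$. The irreps are labelled by the total spin $j$, with multiplicity $m_\lambda=2j+1$, and the key structural fact I would record at the outset is that these multiplicities are \emph{pairwise distinct} (the odd values $1,3,\dots,n+1$ when $n$ is even, the even values $2,4,\dots,n+1$ when $n$ is odd). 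The containment $\mf{g}_2\subseteq\mf{u}^{S_n}(d)$ is immediate, since every generator is $S_n$-equivariant and the Lie closure of equivariant operators is equivariant.

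First I would establish subspace controllability by analysing a single spin-$j$ block. Since $[S_x,S_y]=iS_z$, the full angular-momentum algebra $\mf{su}(2)$ lies in the closure, and the point is that adjoining the quadratic operator $S_z^2$ generates all of $\mf{su}(2j+1)$ on the $(2j+1)$-dimensional multiplicity space. Concretely, commutators such as $[S_z^2,S_x]\propto\{S_z,S_y\}$ produce symmetric (anticommutator-type) combinations that leave the $\mf{su}(2)$ orbit, and one shows the generated Lie algebra acts irreducibly and exhausts $\mf{su}(2j+1)$ --- the statement that a single large spin driven by $S_x,S_y$ together with the nonlinearity $S_z^2$ is fully controllable. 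The degenerate cases are handled separately: for $j=0$ the block is one-dimensional and contributes nothing, while for $j=\tfrac12$ one already has $\mf{su}(2)$ and $S_z^2$ is a scalar. Thus $\mf{g}_\lambda=\mf{su}(m_\lambda)$ for every $\lambda$, i.e. the projection of $\mf{g}_2$ onto each simple factor $\mf{su}(m_\lambda)$ is surjective.

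Next I would upgrade subspace controllability to control of the full direct sum. Because $\mf{g}_2$ sits inside $\bigoplus_\lambda\mf{su}(m_\lambda)\oplus\mf{z}(\mf{u}^{S_n}(d))$ and projects onto each simple factor, a Goursat-type argument (the central projection theorem of Ref.~\cite{zimboras2015symmetry}) shows that its semisimple part is a product of diagonally embedded copies in which only \emph{isomorphic} factors can be linked. Since the $m_\lambda$ are pairwise distinct, no two factors $\mf{su}(m_\lambda),\mf{su}(m_{\lambda'})$ are isomorphic, so no linking is possible and the semisimple part must be the entire direct sum $\bigoplus_\lambda\mf{su}(m_\lambda)=\mf{su}^{S_n}_{\rm{cless}}(d)$.

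It remains to pin down the center. Every nested commutator of elements of $\mf{u}^{S_n}(d)$ is traceless on each isotypic block and hence lies in $\mf{su}^{S_n}_{\rm{cless}}(d)$, so central elements can enter $\mf{g}_2$ only through the generators themselves. Of these, $S_x$ and $S_y$ are block-traceless, whereas $C=\sum_{j_1<j_2}Z_{j_1}Z_{j_2}$ has nonzero, $\lambda$-dependent block-traces: using $\Tr(S_z^2)=\tfrac13 j(j+1)(2j+1)$, its central projection is the single vector $iC_{\rm center}=\bigoplus_\lambda i\big(\tfrac23 j(j+1)-\tfrac n2\big)\id_{d_\lambda}\otimes\id_{m_\lambda}$. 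Subtracting its block-traceless part (already contained in $\mf{su}^{S_n}_{\rm{cless}}(d)$) leaves exactly $\spn_{\mathbb R}\{iC_{\rm center}\}=Q(\mf{u}(1))\subset\mf{z}(\mf{u}^{S_n}(d))$, a one-dimensional piece of the $L$-dimensional center. Combining the pieces gives $\mf{g}_2=\mf{su}^{S_n}_{\rm{cless}}(d)\boxplus Q(\mf{u}(1))$, as claimed. I expect the main obstacle to be the per-block step: rigorously establishing $\langle S_x,S_y,S_z^2\rangle_{\Lie}=\mf{su}(2j+1)$ for all $j\ge 1$, since this requires controlling the dimension of the generated \emph{Lie} algebra, not merely the associative algebra its elements span.
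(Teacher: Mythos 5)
Your architecture---pass to the Schur--Weyl/collective-spin picture, prove controllability block-by-block, de-link the blocks with a Goursat argument exploiting the pairwise-distinct multiplicities $m_\lambda = 2j+1$, then do the central bookkeeping---is a genuinely different route from the paper's. The paper never works per-block: it combines the central-projections upper bound (its Theorems~2 and~3) with an explicit generation argument in the basis of symmetrized Pauli strings $P_{(k_x,k_y,k_z)}$, ``hopping'' around a barycentric lattice to construct every element of $\mf{su}^{S_n}_{\rm{cless}}(d)$ together with $C_1$. Your center computation is consistent with the paper's: your $C_{\rm center}$, acting as $\tfrac{2}{3}j(j+1)-\tfrac{n}{2}$ on the spin-$j$ block, equals $\tfrac{1}{6}C_1$, since $C_1 = 4S^2 - 3n\,\id$. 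One misattribution to fix: the de-linking of non-isomorphic simple factors is Goursat's lemma (in the language of Ref.~\cite{zimboras2015symmetry} it corresponds to the absence of linear and quadratic symmetries); the central projection theorem constrains only the center and plays no role in that step.

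The genuine gap is the step you flag yourself: establishing $\langle S_x, S_y, S_z^2\rangle_{\Lie} \supseteq \mf{su}(2j+1)$ on every spin-$j$ block with $j\ge 1$. This is not a peripheral technicality---it is essentially the entire mathematical content of the theorem, and it is exactly what the paper's hopping-lemma construction in the Supplemental Material proves, transported to the dual side of Schur--Weyl. Your proposed filler, ``one shows the generated Lie algebra acts irreducibly and exhausts $\mf{su}(2j+1)$,'' does not work as stated: irreducibility does not imply exhaustion, because $\mf{su}(2j+1)$ has proper subalgebras that act irreducibly. Indeed the image of $\mf{su}(2)$ itself acts irreducibly on the spin-$j$ space; worse, for integer $j$ the spin-$j$ representation is real, so $\{iJ_x,iJ_y,iJ_z\}$ lies inside a conjugate of $\mf{so}(2j+1)$, and for half-integer $j$ it is quaternionic and lies inside a symplectic subalgebra. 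A correct proof must positively show that adjoining $S_z^2$ escapes these orthogonal/symplectic (and any other irreducible) subalgebras, e.g.\ by generating the full tower of spherical tensor operators up to rank $2j$ and verifying the relevant structure constants do not vanish---which is morally what the paper's lattice induction does. This is also precisely the kind of step where the literature has already stumbled: the claimed proof in Ref.~\cite{albertini2018controllability} fails because of miscounted combinatorial coefficients in exactly such a generation argument, which is why the paper redoes the construction in detail rather than citing it. So your proposal is a valid and attractive skeleton, but without a proof (or a correct, checked citation) of the single-spin controllability statement for all $j\ge 1$, the theorem is not established.
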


Theorem~\ref{theo:subspace-not-universal} has several important implications. First, it shows  that adding a single $S_n$-equivariant $2$-body operator to $\GC_1'$ significantly increases its expressiveness, as it leads to a DLA $\mf{g}_2$ that contains $\mf{su}^{S_n}_{\rm{cless}}(d)$ plus the span of an element in the center $\mf{z}(\mf{u}^{S_n}(d))$.

The second implication of Theorem~\ref{theo:subspace-not-universal} is that the system is semi-universal (and thus subspace controllable) according to Definition~\ref{def:subspace-control-new}.  However, we note that the previous result also implies that  $\mf{g}_2$  fails to be universal (see  Definition~\ref{def:subspace-control-new}). This is in contrast to the claims in Ref.~\cite{albertini2018controllability} which state that $\mf{g}_2$ is universal. To clarify this discrepancy, we show in the Supplemental Material that the proof of universality in Ref.~\cite{albertini2018controllability} contains an error.

While we leave the full proof of Theorem~\ref{theo:subspace-not-universal} for the Supplemental Material, here we will explain the intuition and main steps behind it. In particular, we will constructively show that $\mf{g}_2$ contains the centerless subalgebra $\mf{su}^{S_n}_{\rm{cless}}(d)$ plus the span of a single element in the center $\mf{z}\left(\mf{u}^{S_n}(d)\right)$. The main tool behind deriving the latter result is the central projections condition of~\cite{zimboras2015symmetry}.

Let us introduce some useful notation. First, we will define the $S_n$-symmetrized Pauli strings.

\begin{definition}[Symmetrized Pauli strings]
Define $P_{(k_x,k_y,k_z)}$ to be the operator corresponding to the sum of all distinct Pauli strings that have $k_x$ $X$ symbols, $k_y$ $Y$ symbols, and $k_z$ $Z$ symbols. 
\end{definition}

With this definition we can rewrite the set of generators in Eq.~\eqref{eq:genertors-2-body} as $\GC_2= \{P_{(1,0,0)},P_{(0,1,0)},P_{(0,0,2)}\}$. Clearly, the set of all $P_{(k_x,k_y,k_z)}$ spans the maximal $S_n$-equivariant operator algebra. That is,
\small
\begin{align}\label{eq:maximal-algebra-P}
\mf{u}^{S_n}(d) = \spn_{\mbb{R}}\{iP_{(k_x,k_y,k_z)} \,|\, k_x+k_y+k_z\le n\}\,.
\end{align}
\normalsize
Moreover, if we leave outside $P_{(0,0,0)}=\id$, we get the \textit{special} maximal subalgebra
\small
\begin{equation}
\mf{su}^{S_n}(d) = \spn_{\mbb{R}}\{iP_{(k_x,k_y,k_z)} \,|\, 0< k_x+k_y+k_z\le n\}.
\end{equation}
\normalsize

Next, let us define the following set of operators:

\begin{definition}
For each integer $0\le \mu\le\lfloor n/2\rfloor$, define:
\begin{equation}\label{eq:Ck}
    C_\mu = \sum_{a+b+c=\mu}\frac{(2a)!(2b)!(2c)!}{a!b!c!}P_{(2a,2b,2c)}\,.
\end{equation}
\end{definition}
Each $C_\mu$ consists exclusively of $P_{(k_x,k_y,k_z)}$ terms where $k_x, k_y, k_z$ are all even and add to $2\mu$. Thus, $C_\mu$ is a $2\mu$-body operator. For example, consider the case
\begin{align}
    C_2 =& 12(P_{(4,0,0)} + P_{(0,4,0)} + P_{(0,0,4)})\nonumber \\
    &+ 4(P_{(2,2,0)} + P_{(2,0,2)} + P_{(0,2,2)})\nonumber\,.
\end{align}
The significance of the $C_\mu$ operators is explained by the following proposition, proved in the Supplemental Material. (For additional properties of  the $C_\mu$ operators, we also refer the reader to  Ref.~\cite{marvian2022rotationally}, where they are referred to as $C_l$ with $l = 2\mu$.)

\begin{proposition}[Center of the $S_n$-equivariant Lie algebra]
\label{lem:Ck-span-Sn-center-main}
The center of the $S_n$-equivariant Lie algebra $\mf{u}^{S_n}(d)$ is spanned by the $C_\mu$ operators:
\begin{equation}
    \mf{z}(\mf{u}^{S_n}(d)) = \spn_{\mbb{R}}\Big\{i C_\mu \,|\, 0\le \mu\le \Big\lfloor\frac{n}{2}\Big\rfloor \Big\}.
\end{equation}
\end{proposition}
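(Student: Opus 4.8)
The plan is to prove the two inclusions by first showing that every $C_\mu$ lies in the center and then matching dimensions. The structural fact I would invoke is Schur--Weyl duality for the qubit-permuting representation: writing $\HC=(\mathbb{C}^2)^{\otimes n}\cong\bigoplus_j V_j\otimes M_j$ as a module over $SU(2)\times S_n$, where $V_j$ is the spin-$j$ irrep carrying the collective $SU(2)$ action and $M_j$ is the $S_n$-multiplicity space, one has $\comm(S_n)=\bigoplus_j \mathrm{End}(V_j)\otimes\id_{M_j}$ and $\mf{z}(\mf{u}^{S_n}(d))=\bigoplus_j i\,\mathbb{R}\,\id_{V_j}\otimes\id_{M_j}$. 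Consequently, an operator that already belongs to $\comm(S_n)$ is central as soon as it commutes with the collective $SU(2)$ action, since by Schur's lemma it must then act as a scalar on each irreducible $V_j$. Each $C_\mu$ manifestly lies in $\comm(S_n)$, being a sum of permutation-symmetric Pauli strings. Moreover the distinct values of $j$ range over $\{n/2,n/2-1,\dots\}$ down to $0$ or $1/2$, giving exactly $L=\lfloor n/2\rfloor+1$ sectors, which matches the number of operators $C_\mu$.

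Thus the crux reduces to showing each $C_\mu$ commutes with the three collective generators $P_{(1,0,0)},P_{(0,1,0)},P_{(0,0,1)}$, i.e.\ that $C_\mu$ is invariant under collective rotations. Here I would use a generating-function argument. For a real unit vector $\hat n=(n_x,n_y,n_z)$, set $W(\hat n)=n_x P_{(1,0,0)}+n_y P_{(0,1,0)}+n_z P_{(0,0,1)}=\sum_j(n_x X_j+n_y Y_j+n_z Z_j)$. The single-qubit identity $(n_x X_j+n_y Y_j+n_z Z_j)^2=\id$ (valid since $\hat n$ is a unit vector and distinct Paulis anticommute) lets me expand $W(\hat n)^{2\mu}$ cleanly: the terms in which all $2\mu$ factors act on distinct qubits contribute the full weight-$2\mu$ part, equal to $(2\mu)!\sum_{k_x+k_y+k_z=2\mu}n_x^{k_x}n_y^{k_y}n_z^{k_z}P_{(k_x,k_y,k_z)}$, while repeated indices collapse via $(\cdot)^2=\id$ and produce only strictly lower-weight terms. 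Integrating over the sphere, $I_\mu:=\int_{S^2}W(\hat n)^{2\mu}\,d\hat n$, kills the odd powers and, using the standard moment $\int_{S^2}n_x^{2a}n_y^{2b}n_z^{2c}\,d\hat n=\frac{4\pi}{2^\mu(2\mu+1)!!}\,\frac{(2a)!(2b)!(2c)!}{a!\,b!\,c!}$, shows that the weight-$2\mu$ part of $I_\mu$ is precisely a nonzero constant times $C_\mu$.

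The payoff is that $I_\mu$ is manifestly invariant under collective rotations (conjugation by $e^{-i\theta(\cdot)}$ merely rotates the integration variable $\hat n$), hence central. Since a collective rotation sends each $P_{(k_x,k_y,k_z)}$ to a combination of $P$'s of the same total weight $k_x+k_y+k_z$, the weight is itself a rotation invariant, so the weight-$2\mu$ homogeneous component of the central operator $I_\mu$ is again central. That component equals a nonzero multiple of $C_\mu$, so $C_\mu$ is central, establishing $\spn_{\mathbb{R}}\{iC_\mu\}\subseteq\mf{z}(\mf{u}^{S_n}(d))$.

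Finally I would upgrade this to equality by a dimension count. The $C_\mu$ are linearly independent because $C_\mu$ is homogeneous of weight exactly $2\mu$ and nonzero (it contains $P_{(2\mu,0,0)}$, legitimate as $2\mu\le n$), so any vanishing linear combination must vanish weight by weight and hence have all coefficients zero. This produces $L=\lfloor n/2\rfloor+1$ independent central elements, matching $\dim\mf{z}(\mf{u}^{S_n}(d))=L$, and the two inclusions force equality. I anticipate the main obstacle to be the generating-function step: making the coefficients $\frac{(2a)!(2b)!(2c)!}{a!\,b!\,c!}$ emerge \emph{exactly}, which hinges on the clean single-qubit squaring identity together with the precise spherical moments, and on the observation that weight is preserved under collective rotations so that one may isolate the top-weight part of the central operator $I_\mu$.
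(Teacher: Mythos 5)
Your proof is correct, but it takes a genuinely different route from the paper's. Both arguments lean on Schur--Weyl duality, but on opposite sides of it. The paper works on the permutation side: it identifies $\mf{z}(\mf{u}^{S_n}(d))$ with the center of the algebra $\mf{S}_n$ spanned by the permutation operators, invokes the general fact that such a center is spanned by conjugacy-class sums, counts Young diagrams with at most two rows to get $\dim = \lfloor n/2\rfloor+1$, and then performs an explicit combinatorial expansion of the class sums $L_\mu$ (sums over products of $\mu$ disjoint transpositions) in the symmetrized Pauli basis, obtaining a triangular relation $L_\mu = \frac{4^{-\mu}}{(n-2\mu)!}\sum_{\mu'\le\mu}\frac{(n-2\mu')!}{(\mu-\mu')!}C_{\mu'}$ that forces the two spans to coincide. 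You instead work on the $\mathbb{SU}(2)$ side: you characterize the center via Schur's lemma as those elements of $\comm(S_n)$ that also commute with the collective rotations $u^{\otimes n}$, and then exhibit $C_\mu$ (up to the nonzero constant $\frac{4\pi(2\mu)!}{2^\mu(2\mu+1)!!}$) as the top-weight component of the manifestly rotation-invariant operator $I_\mu=\int_{S^2}W(\hat n)^{2\mu}\,d\hat n$, using the facts that weight is preserved under collective conjugation and that the spherical moments produce exactly the coefficients $\frac{(2a)!(2b)!(2c)!}{a!b!c!}$; a dimension count then upgrades the inclusion to equality. Both routes still need the same external input, namely that the number of irreducible sectors is $L=\lfloor n/2\rfloor+1$ (your angular-momentum ladder $j=n/2,n/2-1,\dots$ is exactly the paper's count of two-row Young diagrams). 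The trade-off: the paper's proof yields extra structural information of independent interest, namely the explicit dictionary between permutation class sums and the $C_\mu$ (which is what connects to the $B_m$, $C_l$ operators of Ref.~\cite{marvian2022rotationally}), whereas your proof avoids expanding permutation operators in the Pauli basis altogether and makes transparent \emph{why} those particular coefficients define central elements --- they are spherical moments --- at the cost of a slightly more delicate analytic step (isolating the top-weight part of $I_\mu$, which works precisely because collective rotations preserve weight).
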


We can see from Eqs.~\eqref{eq:maximal-algebra-P} and~\eqref{eq:Ck} that the maximal $S_n$-equivariant centerless subalgebra $\mf{su}^{S_n}_{\rm{cless}}(d)$ is given by all linear combinations of the form
\begin{equation}
    i\sum_{0\le k_x+k_y+k_z\le n}c_{(k_x,k_y,k_z)}P_{(k_x,k_y,k_z)}\,,
\end{equation}
 with $c_{(k_x,k_y,k_z)}\in\mathbb{R}$, such that 
 \begin{equation}
     \sum_{a+b+c=\mu}\frac{c_{(2a,2b,2c)}}{a!b!c!} = 0 \,,
 \end{equation}
for each integer $0\le \mu\le\lfloor n/2\rfloor$.

As previously mentioned (and as seen in Eq.~\eqref{eq:centerless-center}), for a DLA to be universal it needs to generate all the elements in $\mf{su}^{S_n}_{\rm{cless}}(d)$ plus all the traceless elements in the center $\mf{z}(\mf{u}^{S_n}(d))$. Notably, one can prove the following necessary condition for a DLA to be able to generate elements in the center:

\begin{theorem}[Central Projections. Result 1 in~\cite{zimboras2015symmetry}, Restated]\label{theo:central-projection}
Let $\mf{g}$ be the DLA obtained from a set of generators $\GC$. If $[H,C]=0$ and $\Tr[HC]=0$ for every $H$ in $\GC$, then $\Tr[MC]=0$ for every $M$ in $\mf{g}$. In particular, for $\mf{g}$ to contain a nonzero central element $C$, it is necessary that $\Tr[HC]\neq 0$ for some $H$ in $\GC$.
\end{theorem}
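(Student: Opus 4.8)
The plan is to prove this in two stages: first show that \emph{every} element of $\mf{g}$ commutes with $C$, and then use this fact together with a cyclic-trace identity to force $\Tr[MC]=0$. The point is that the commutation hypothesis, imposed only on the generators, propagates to the entire Lie closure, after which the trace condition becomes almost automatic.

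First I would establish the commuting part via the centralizer. Consider $\mf{z}_C=\{A\in\mf{u}(d)\,:\,[A,C]=0\}$. This is itself a Lie subalgebra: if $[A,C]=0$ and $[B,C]=0$, the Jacobi identity gives $[[A,B],C]=[A,[B,C]]-[B,[A,C]]=0$. By hypothesis $[H,C]=0$ for every $H\in\GC$, so $i\GC\subseteq\mf{z}_C$. Since $\mf{g}=\langle i\GC\rangle_{\Lie}$ is by definition the smallest Lie algebra containing $i\GC$, and $\mf{z}_C$ is a Lie algebra containing $i\GC$, we conclude $\mf{g}\subseteq\mf{z}_C$. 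Thus every $M\in\mf{g}$ commutes with $C$.

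Next I would handle the trace. As a vector space, $\mf{g}$ is spanned by the generators $i\GC$ together with all nested commutators of depth at least two. For a generator, $\Tr[(iH)C]=i\Tr[HC]=0$ by hypothesis. For a nested commutator of depth at least two, write it as $[A,B]$ with $A,B\in\mf{g}$ (take $A$ to be the outermost generator and $B$ the remaining inner commutator, which lies in $\mf{g}$ by closure). Cyclicity of the trace yields
\begin{equation}
\Tr[[A,B]C]=\Tr[A(BC-CB)]=\Tr[A[B,C]]\,,
\end{equation}
and since $B\in\mf{g}\subseteq\mf{z}_C$ we have $[B,C]=0$, so $\Tr[[A,B]C]=0$. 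By linearity, $\Tr[MC]=0$ for every $M\in\mf{g}$, establishing the main claim.

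Finally, the ``in particular'' statement follows by contraposition. If $\mf{g}$ contains a nonzero central element $C$, then $C$ commutes with all of $\mf{g}$ and in particular $[H,C]=0$ for every $H\in\GC$. Were it also true that $\Tr[HC]=0$ for every $H\in\GC$, the main claim would give $\Tr[MC]=0$ for all $M\in\mf{g}$; taking $M=C$ yields $\Tr[C^2]=0$. But $C\in\mf{g}\subseteq\mf{u}(d)$ is anti-Hermitian, so $\Tr[C^2]=-\Tr[C^\dagger C]$, which is strictly negative unless $C=0$, a contradiction. Hence $\Tr[HC]\neq 0$ for some $H\in\GC$. I expect the only place requiring genuine care to be the first step — recognizing that the commutation condition on the generators lifts, through the centralizer/Jacobi argument, to all of $\mf{g}$; once that is secured, the trace identity makes everything else routine.
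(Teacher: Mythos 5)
Your proof is correct. Note, however, that the paper itself never proves this statement: it is imported verbatim as Result~1 of Ref.~\cite{zimboras2015symmetry} (hence the label ``Restated''), and the Supplemental Material only proves Theorems~1 and 3--6, treating Theorem~2 as an external citation. So there is no in-paper argument to compare against; what you have written is a complete, self-contained derivation of the cited result. Each step checks out: the centralizer $\mf{z}_C$ is a Lie subalgebra by the Jacobi identity, so the Lie closure $\mf{g}=\langle i\GC\rangle_{\Lie}$ lands inside it; the identity $\Tr\bigl[[A,B]C\bigr]=\Tr\bigl[A[B,C]\bigr]$ then kills the trace on every nested commutator of depth at least two, and linearity handles general elements of $\mf{g}$; and the contrapositive correctly exploits that a nonzero anti-Hermitian $C$ has $\Tr[C^2]=-\Tr[C^\dagger C]<0$, so taking $M=C$ yields the contradiction. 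One cosmetic remark: the paper abuses notation between Hermitian $H$ and anti-Hermitian $iH\in\mf{g}$ (as it acknowledges in its Supplemental Material), and your proof is phrased consistently in the anti-Hermitian convention, which is the right choice here since the sign of $\Tr[C^2]$ is what drives the final contradiction.
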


Theorem~\ref{theo:central-projection} indicates that an element in the center cannot be in the DLA if all of the generators have zero projection onto it (in the sense that they are orthogonal with respect to the Frobenius inner product $\langle A,B\rangle = \Tr[A^\dagger B]$). Now we can use the following Theorem, proved in the Supplemental Material.

\begin{theorem}
\label{theo:central-projection-two-body}
Consider the set $\GC_2$ of $S_n$-equivariant generators in Eq.~\eqref{eq:genertors-2-body}. Then, one has that
\begin{equation}
    \Tr[H C_\mu]=0\,,
\end{equation}
for all $H\in\mathcal{G}_2$ and for all $C_\mu$ with $0\le \mu\le\lfloor n/2\rfloor$ except $\mu=1$.
\end{theorem}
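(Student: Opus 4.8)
The plan is to reduce the entire statement to the mutual orthogonality of the symmetrized Pauli strings $P_{(k_x,k_y,k_z)}$ under the Frobenius inner product $\langle A,B\rangle=\Tr[A^\dagger B]$, followed by an elementary parity count. First I would record the two standard facts: any two $n$-qubit Pauli strings $\sigma,\sigma'$ obey $\Tr[\sigma\sigma']=2^n\delta_{\sigma,\sigma'}$, and, by construction, $P_{(k_x,k_y,k_z)}$ is the sum over the disjoint collection of Pauli strings carrying exactly $k_x$ $X$'s, $k_y$ $Y$'s, and $k_z$ $Z$'s. Since strings labelled by different triples can never coincide, bilinearity of the trace gives
\begin{equation}
\Tr[P_{(k_x,k_y,k_z)}\,P_{(k_x',k_y',k_z')}]=0 \quad\text{unless } (k_x,k_y,k_z)=(k_x',k_y',k_z')\,.
\end{equation}
Thus the $P_{(k_x,k_y,k_z)}$ form an orthogonal family, and computing $\Tr[H C_\mu]$ reduces to checking whether the single type carried by $H$ appears among the types summed into $C_\mu$.

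Next I would read off the types. The three generators in $\GC_2$ carry the single triples $(1,0,0)$, $(0,1,0)$, and $(0,0,2)$, while by Eq.~\eqref{eq:Ck} $C_\mu$ is supported only on triples of the form $(2a,2b,2c)$ with $a+b+c=\mu$, i.e.\ triples whose three entries are \emph{all even} and sum to $2\mu$. The parity argument is then immediate: $(1,0,0)$ and $(0,1,0)$ each contain an odd entry, so they match no triple occurring in any $C_\mu$, whence $\Tr[P_{(1,0,0)} C_\mu]=\Tr[P_{(0,1,0)} C_\mu]=0$ for \emph{every} $\mu$. The remaining generator $P_{(0,0,2)}$ carries the all-even triple $(0,0,2)$, which equals some $(2a,2b,2c)$ only for $(a,b,c)=(0,0,1)$, forcing $\mu=a+b+c=1$. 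Hence $\Tr[P_{(0,0,2)} C_\mu]=0$ for all $\mu\neq 1$, and combining the three cases yields $\Tr[H C_\mu]=0$ for all $H\in\GC_2$ whenever $\mu\neq1$, as claimed.

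There is no genuinely hard step here; the only point requiring care is a clean statement of the orthogonality claim. I would be explicit that symmetrization does not spoil orthogonality — namely that the Pauli strings summed into $P_{(k_x,k_y,k_z)}$ and those summed into a distinct $P_{(k_x',k_y',k_z')}$ are truly disjoint sets, which holds because the triple $(k_x,k_y,k_z)$ is invariant across the whole sum (the number of each non-identity letter is fixed). For completeness I would also verify that the excluded case $\mu=1$ is genuinely excluded: since $C_1=2\big(P_{(2,0,0)}+P_{(0,2,0)}+P_{(0,0,2)}\big)$ contains the $(0,0,2)$ term with nonzero coefficient, one has $\Tr[P_{(0,0,2)} C_1]=2\,\Tr[P_{(0,0,2)}^2]\neq0$. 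This is precisely the single nonvanishing overlap with the center that, through Theorem~\ref{theo:central-projection}, permits $\mf{g}_2$ to reach the one central direction appearing in Theorem~\ref{theo:subspace-not-universal}.
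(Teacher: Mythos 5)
Your proof is correct and follows essentially the same route as the paper's: both arguments reduce to the observation that the Pauli-string supports of $P_{(1,0,0)}$, $P_{(0,1,0)}$, $P_{(0,0,2)}$ are disjoint from that of every $C_\mu$ with $\mu\neq 1$ (the paper phrases this as the product being a sum of non-identity, hence traceless, Pauli strings; you phrase it as Frobenius orthogonality of distinct Pauli strings — the same fact). Your extra check that $\Tr[P_{(0,0,2)}C_1]\neq 0$ is a nice complement but not part of the stated claim.
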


Theorem~\ref{theo:central-projection-two-body} shows that the set of generators have no projection onto all center elements $C_\mu$ with $0\le \mu\le\lfloor n/2\rfloor$ except for that with $\mu=1$. Therefore, according to Theorem~\ref{theo:central-projection}, no $C_\mu$ with $0\le \mu\le\lfloor n/2\rfloor$ (except for that with $\mu=1$) can be generated within the DLA. It is important to remark that  the previous does not imply that $C_1$ actually appears in the DLA, as Theorem~\ref{theo:central-projection} only provides a necessary condition. However, one can prove by direct construction that $\mf{g}_2$ indeed  contains $C_1$, as well as any operator in $\mf{su}^{S_n}_{\rm{cless}}(d)$. That is, one can prove that the following theorem holds.

\begin{theorem}
\label{thm:true-AD-algebra}
Consider the set $\GC_2$ of $S_n$-equivariant generators in Eq.~\eqref{eq:genertors-2-body}. The associated DLA $\mf{g}_2$ contains all linear combinations of the form
\begin{equation}
    i\sum_{0\le k_x+k_y+k_z\le n}c_{(k_x,k_y,k_z)}P_{(k_x,k_y,k_z)},
\end{equation}
for a collection of real coefficients $c_{(k_x,k_y,k_z)}$ that satisfy
\begin{equation}
    \sum_{a+b+c=\mu}\frac{c_{(2a,2b,2c)}}{a!b!c!} = 0
\end{equation}
for each integer $0\le \mu\le\lfloor n/2\rfloor$ except $\mu=1$.
\end{theorem}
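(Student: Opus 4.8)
The plan is to prove the equivalent inclusion $\mf{g}_2 \supseteq \mf{su}^{S_n}_{\rm{cless}}(d)\boxplus\spn_{\mbb{R}}\{iC_1\}$: dropping the $\mu=1$ constraint from the definition of $\mf{su}^{S_n}_{\rm{cless}}(d)$ adds exactly the line spanned by $iC_1 = 2i(P_{(2,0,0)}+P_{(0,2,0)}+P_{(0,0,2)})$, so this is the same statement. The organizing principle is that $\mf{g}_2$ is a module for the \emph{collective} $\mf{su}(2)$: since $[iP_{(1,0,0)},iP_{(0,1,0)}] = -2\,iP_{(0,0,1)}$, the operators $J_\alpha\propto iP_{\alpha}$ ($\alpha\in\{(1,0,0),(0,1,0),(0,0,1)\}$) lie in $\mf{g}_2$, and $\mf{g}_2$ is closed under their adjoint action (global single-qubit rotations). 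Grading $\mf{u}^{S_n}(d)=\bigoplus_{k=0}^n W_k$ by body-ness, where $W_k=\spn_{\mbb{R}}\{iP_{(k_x,k_y,k_z)}:k_x+k_y+k_z=k\}$, each $W_k$ is an invariant module that decomposes \emph{multiplicity-freely} as $W_k\cong V_k\oplus V_{k-2}\oplus\cdots$ into spin-$j$ irreps. The spin-$0$ summands are exactly the central operators $C_{k/2}$ (for even $k$), so $\mf{su}^{S_n}_{\rm{cless}}(d)=\bigoplus_k A_k$ with $A_k$ the sum of the spin-$\ge 1$ blocks of $W_k$. Because each block is irreducible with multiplicity one, $\mf{g}_2\cap W_k$ is a sum of entire blocks, so to prove $A_k\subseteq\mf{g}_2$ it \emph{suffices to exhibit, for each admissible spin $1\le j\le k$, a single operator of $\mf{g}_2\cap W_k$ with nonzero spin-$j$ component}.

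Two structural facts drive the construction. First, every commutator of symmetric operators lies in the derived subalgebra $[\mf{u}^{S_n}(d),\mf{u}^{S_n}(d)]=\mf{su}^{S_n}_{\rm{cless}}(d)$, which is \emph{centerless}; hence any bracket we form is automatically free of $C_\mu$ contributions, and — the grading being homogeneous — each of its body-graded pieces already lies in the corresponding $A_k$. Second, the quadratic Casimir $\mathrm{Cas}=\sum_\alpha \mathrm{ad}_{J_\alpha}^2$ acts as $j(j+1)$ on the spin-$j$ block, letting us project onto blocks using operations internal to $\mf{g}_2$. For the base cases, $A_1=W_1$ is the collective $\mf{su}(2)\subseteq\mf{g}_2$; applying $\tfrac{1}{6}\mathrm{Cas}$ to $iP_{(0,0,2)}$ extracts its spin-$2$ component (the Casimir annihilates the spin-$0$ part), placing $A_2\subseteq\mf{g}_2$, after which the spin-$0$ remainder $iC_1\propto iP_{(0,0,2)}-\tfrac{1}{6}\mathrm{Cas}\big(iP_{(0,0,2)}\big)$ lies in $\mf{g}_2$ by subtraction, giving $W_2\subseteq\mf{g}_2$.

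The induction then climbs in body-ness. Assuming $A_1,\dots,A_k\subseteq\mf{g}_2$, I form brackets $[O,O']$ with $O\in A_k$ and $O'\in W_2\subseteq\mf{g}_2$. A short support-counting argument shows such a bracket has body-ness at most $k+1$, with all graded pieces of the same parity as $k+1$; by the first structural fact these pieces live in $A_{k+1},A_{k-1},A_{k-3},\dots$. The lower pieces are already in $\mf{g}_2$ by the inductive hypothesis, so subtracting them isolates the body-$(k+1)$ piece inside $\mf{g}_2$. It remains to check that, as $O$ and $O'$ vary, these top pieces realize a nonzero spin-$j$ component for every admissible $1\le j\le k+1$; the multiplicity-one reduction above then upgrades this to $A_{k+1}\subseteq\mf{g}_2$. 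Carrying the induction to $k=n$ yields $\mf{su}^{S_n}_{\rm{cless}}(d)\subseteq\mf{g}_2$, and together with $iC_1\in\mf{g}_2$ this completes the proof.

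The crux — and the step I expect to be the main obstacle — is the non-vanishing of these commutator structure constants for \emph{every} intermediate spin $j$, since accidental cancellations from the $S_n$-symmetrization are a priori possible. I would handle it in the ladder basis $\sigma_\pm=X\pm iY$, working with highest-weight vectors: the top block $V_{k+1}$ is generated by the symmetrized operator $\Lambda_{k+1}\propto\sum(\sigma_+)^{\otimes(k+1)}$, and a direct computation using $\sigma_+ Z=-\sigma_+$, $Z\sigma_+=\sigma_+$, and $\sigma_+^2=0$ shows $[\Lambda_k,M]\propto\Lambda_{k+1}$ with a nonzero coefficient, where $M\propto\sum \sigma_+\otimes Z\in W_2$ is the weight-$(+1)$ two-body operator. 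Producing analogous explicit highest-weight brackets (feeding $V_j\subseteq A_k$ through the $V_2,V_0\subseteq W_2$ factors) for each lower $j$, and verifying the combinatorial prefactors are nonzero, is the technical heart; the multiplicity-freeness of $W_{k+1}$ is precisely what makes a single nonvanishing witness per block sufficient.
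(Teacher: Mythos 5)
Your framework is elegant and genuinely different from the paper's proof: the paper proceeds by explicit combinatorics in the barycentric lattice of symmetrized Pauli strings (one- and two-step ``hopping'' lemmas via commutators with $P_{(1,0,0)},P_{(0,1,0)},P_{(0,0,1)}$, an induction that fills every level having at least one odd coordinate, and a ``snaking'' chain of two-coordinate operators spanning the even-coordinate combinations orthogonal to each $C_\mu$), whereas you organize everything by the adjoint action of the collective $\mf{su}(2)$ and the multiplicity-free decomposition $W_k\cong V_k\oplus V_{k-2}\oplus\cdots$. Your structural reductions are all correct: $\mf{g}_2\cap W_{k}$ is a submodule and hence a sum of whole spin blocks; brackets land in the derived algebra $[\mf{u}^{S_n}(d),\mf{u}^{S_n}(d)]=\mf{su}^{S_n}_{\rm{cless}}(d)$, which is graded by body-ness, so each graded piece of a bracket is automatically center-free; and your base case correctly extracts $A_2$ and $iC_1$ with the Casimir projector.

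However, there is a genuine gap, and you flag it yourself: the whole argument reduces to showing that for \emph{every} admissible spin $1\le j\le k+1$ in $W_{k+1}$, some bracket $[O,O']$ with $O\in A_k$, $O'\in W_2$ has a top graded piece with nonzero $V_j$-component, and you verify this only for the top spin $j=k+1$ (your $[\Lambda_k,M]\propto\Lambda_{k+1}$ computation, which is correct). For the intermediate spins this is exactly where all the work lies, and it cannot be waved through: the relevant structure constants in this realization are $n$-dependent (the paper's analogous computations produce coefficients such as $n-k+1$ and $\overline{k}(\overline{k}-1)$), and an unnoticed linear dependence among precisely such bracket outputs is what invalidates the universality claim of Ref.~\cite{albertini2018controllability} that this paper corrects — so ``accidental cancellation'' is not a hypothetical worry here. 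Note also that the $V_0\subseteq W_2$ factor you propose to feed through is useless: it is spanned by the central element $C_1$, so all brackets with it vanish; only $V_2\subseteq W_2$ can drive the induction, which by the triangle inequality and parity means $V_j\subseteq W_{k+1}$ must be reached from $V_{j\pm 1}\subseteq W_k$, and the non-vanishing of each such Clebsch--Gordan-type coefficient remains unproved. Until those verifications are supplied (they are what the paper's hopping lemmas, odd-level induction, and snaking construction accomplish in a different basis), your proposal is a sound plan rather than a proof.
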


Here we note that Theorem~\ref{thm:true-AD-algebra} is simply a reformulation of Theorem~\ref{theo:subspace-not-universal} which allows us to directly identify $\mf{g}_2$ as containing all elements in $\mf{su}^{S_n}_{\rm{cless}}(d)$ plus the span of the center element $Q\left(\mf{u}(1)\right)=\spn_{\mbb{R}}\{iC_1\}$.  In addition, we can infer from Theorem~\ref{thm:true-AD-algebra} that the dimension of $\mf{g}_2$ is
\begin{equation}
    \text{dim}(\mf{g}_2) = \binom{n+3}{3} - \left\lfloor\frac{n}{2}\right\rfloor.
\end{equation}

%\subsection{$k$-body generators}

Our previous results  proved that sets of generators with $S_n$-equivariant  $1$-body and $2$-body operators are not sufficient to generate $\mf{su}^{S_n}(d) $. A natural question then is whether this can be fixed by including in the set of $k$-body   $S_n$-equivariant generators (for $k\geq 3$). 
Defining the set of generators 
\begin{equation}\label{eq:set-gen-k}
    \GC_k=\{P_{(1,0,0)},P_{(0,1,0)}\}\cup\{P_{(0,0,\kappa)} \}_{\kappa=2}^k\,,
\end{equation}
we find that the following result also holds (see the Supplementary Information for a proof).
\begin{theorem}\label{theo:k-body-DLA}
Consider the set $\GC_k$ of $S_n$-equivariant generators in Eq.~\eqref{eq:set-gen-k}. Then, the associated DLA $\mf{g}_k$ is
\begin{align}
\mf{g}_k&=Q\left(\bigoplus_{\lambda=1}^L\mf{su}(m_\lambda)\oplus\underbrace{\mf{u}(1)\oplus\cdots\oplus\mf{u}(1)}_{\lfloor k/2\rfloor}\right)\\
&=\mf{su}^{S_n}_{\rm{cless}}(d) \boxplus Q\left( \underbrace{\mf{u}(1)\oplus\cdots\oplus\mf{u}(1)}_{\lfloor k/2\rfloor}  \right) \,,
\end{align}
where $Q\left( \mf{u}(1)\oplus\cdots\oplus\mf{u}(1)  \right) = \spn_{\mbb{R}}\{iC_1,\cdots,iC_{\lfloor k/2\rfloor}\}$ is a $\lfloor k/2\rfloor$-dimensional subalgebra of $\mf{z}(\mf{u}^{S_n}(d))$.

\end{theorem}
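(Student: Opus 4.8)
The plan is to prove the stated identity by double inclusion, mirroring the two-pronged argument already used for the $k=2$ case in Theorem~\ref{theo:subspace-not-universal}, and using throughout that the central operators $C_\mu$ form a mutually Frobenius-orthogonal basis of $\mf{z}(\mf{u}^{S_n}(d))$ that is moreover orthogonal to $\mf{su}^{S_n}_{\rm{cless}}(d)$. The centerless half of the claim will be inherited essentially for free from the $k=2$ result; the genuinely new content for $k>2$ is showing that exactly $\lfloor k/2\rfloor$ independent central directions survive in $\mf{g}_k$.

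For the inclusion $\mf{g}_k\subseteq\mf{su}^{S_n}_{\rm{cless}}(d)\boxplus\spn_{\mbb{R}}\{iC_1,\dots,iC_{\lfloor k/2\rfloor}\}$, I would first note that every generator in $\GC_k$ is a sum of nontrivial Pauli strings, hence traceless, so that $\mf{g}_k\subseteq\mf{su}^{S_n}(d)=\mf{su}^{S_n}_{\rm{cless}}(d)\oplus\mf{z}_0$, where $\mf{z}_0=\spn_{\mbb{R}}\{iC_1,\dots,iC_{\lfloor n/2\rfloor}\}$ is the traceless part of the center. The key step is to generalize the overlap computation of Theorem~\ref{theo:central-projection-two-body}: because $C_\mu$ only contains $P_{(2a,2b,2c)}$ terms with all-even multiplicities, the odd-weight generators $P_{(1,0,0)},P_{(0,1,0)}$ are orthogonal to every $C_\mu$, while $P_{(0,0,\kappa)}$ overlaps $C_\mu$ precisely when $(0,0,\kappa)=(2a,2b,2c)$ for some $a+b+c=\mu$, i.e. only when $\kappa=2\mu$; since $2\le\kappa\le k$ this forces $1\le\mu\le\lfloor k/2\rfloor$. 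Thus $\Tr[HC_\mu]=0$ for all $H\in\GC_k$ whenever $\mu=0$ or $\mu>\lfloor k/2\rfloor$, and as each such $C_\mu$ is central it also commutes with every generator. The Central Projections condition (Theorem~\ref{theo:central-projection}) then yields $\Tr[MC_\mu]=0$ for all $M\in\mf{g}_k$ and all such $\mu$, and the orthogonality of the $C_\mu$ among themselves and from $\mf{su}^{S_n}_{\rm{cless}}(d)$ confines the central component of every element of $\mf{g}_k$ to $\spn_{\mbb{R}}\{iC_1,\dots,iC_{\lfloor k/2\rfloor}\}$.

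For the reverse inclusion I would proceed in two parts. First, since $\GC_2\subseteq\GC_k$, monotonicity of the Lie closure together with Theorem~\ref{thm:true-AD-algebra} gives $\mf{su}^{S_n}_{\rm{cless}}(d)\subseteq\mf{g}_2\subseteq\mf{g}_k$, so the entire centerless subalgebra is already present. Second, to recover the central directions, fix $\mu$ with $1\le\mu\le\lfloor k/2\rfloor$ and use that $P_{(0,0,2\mu)}\in\GC_k\subseteq\mf{g}_k$. Decomposing this traceless generator orthogonally into its $\mf{su}^{S_n}_{\rm{cless}}(d)$ part and its central part, the central part is proportional to $C_\mu$ with nonzero coefficient $\Tr[P_{(0,0,2\mu)}C_\mu]/\Tr[C_\mu^2]$, because among all $C_\nu$ only $C_\mu$ contains the string type $(0,0,2\mu)$. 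Since the centerless part already lies in $\mf{g}_k$ by the first part, subtracting it shows $iC_\mu\in\mf{g}_k$. The two inclusions together give the stated algebra, and its rewriting as $Q(\bigoplus_\lambda\mf{su}(m_\lambda)\oplus\mf{u}(1)^{\oplus\lfloor k/2\rfloor})$ follows from Proposition~\ref{lem:Ck-span-Sn-center-main}, with linear independence of the $\lfloor k/2\rfloor$ central generators guaranteed by their orthogonality.

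The argument is modular, so there is no single hard obstacle once the $k=2$ theorem, the Central Projections condition, and the central-generator basis of Proposition~\ref{lem:Ck-span-Sn-center-main} are available. The technical heart is the overlap computation generalizing Theorem~\ref{theo:central-projection-two-body}, and the one place that genuinely leans on the prior machinery is the lower bound: extracting each individual $C_\mu$ is possible only because the full centerless subalgebra $\mf{su}^{S_n}_{\rm{cless}}(d)$ is already in hand to cancel the unwanted centerless component of the new generator $P_{(0,0,2\mu)}$. A useful consistency check is that $\dim(\mf{g}_k)=\dim(\mf{su}^{S_n}_{\rm{cless}}(d))+\lfloor k/2\rfloor$, extending the $k=2$ count.
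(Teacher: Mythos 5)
Your proof is correct, and its skeleton matches the paper's: both arguments inherit the centerless subalgebra from the $k=2$ result (since $\GC_2\subseteq\GC_k$, Theorem~\ref{thm:true-AD-algebra} gives $\mf{su}^{S_n}_{\rm{cless}}(d)\subseteq\mf{g}_k$), and both cap the number of central directions at $\lfloor k/2\rfloor$ by combining the generator--center overlap computation (Theorem~\ref{theo:central-projection-k-body}) with the central projections condition of Theorem~\ref{theo:central-projection}. Where you genuinely diverge is in the step that is the new content of this theorem relative to $k=2$: showing that $iC_1,\dots,iC_{\lfloor k/2\rfloor}$ actually lie in $\mf{g}_k$. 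The paper does this by commutator constructions, re-invoking the barycentric lattice-hopping lemma (Lemma~\ref{lem:lattice-hop-one} of the Supplemental Material) together with the one-body generators to fill out the \emph{entire} span of symmetrized Pauli strings at every level $\mu\le k$; since $C_\mu$ lives on level $2\mu\le k$, it follows that $C_\mu\in\mf{g}_k$. You instead use a purely linear-algebraic subtraction: decompose the traceless generator $P_{(0,0,2\mu)}$ orthogonally as a centerless part plus a central part, note that the central part is a nonzero multiple of $C_\mu$ alone (only $C_\mu$ contains the string type $(0,0,2\mu)$, and the projection coefficient $\Tr[P_{(0,0,2\mu)}C_\mu]/\Tr[C_\mu^2]$ is manifestly positive), and cancel the centerless part, which is already known to be in $\mf{g}_k$. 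Your route needs no further commutator machinery and makes transparent why exactly one new central direction appears per unit of even bodyness; the paper's route is more constructive and yields slightly more along the way (every individual symmetrized Pauli string of level at most $k$ lies in $\mf{g}_k$, not just the required linear combinations). Both are rigorous, and your dimension check $\dim(\mf{g}_k)=\dim\big(\mf{su}^{S_n}_{\rm{cless}}(d)\big)+\lfloor k/2\rfloor$ is consistent with the paper's count for $k=2$.
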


Theorem~\ref{theo:k-body-DLA} shows that one element in the center is generated in the DLA every time one adds a generator with \textit{even} bodyness. Crucially, this result is in accordance with the central projections condition of Theorem~\ref{theo:central-projection}. In particular, the central projections condition implies that one cannot generate $\mf{su}^{S_n}(d) $ unless one has, for every $\mu$ from $0$ to $\lfloor n/2\rfloor$, some generator with nonzero projection onto $C_\mu$.  However, it is easy to see that the following theorem holds.
\begin{theorem}
\label{theo:central-projection-k-body}
Consider the set $\GC_k$ of $S_n$-equivariant generators in Eq.~\eqref{eq:set-gen-k}. Then, one has that
\begin{equation}
    \Tr[H C_\mu]=0\,,
\end{equation}
for all $H\in\mathcal{G}_k$ and for all $C_\mu$ with $\mu=0$ and  $\lfloor \frac{k}{2}\rfloor< \mu\le\lfloor n/2\rfloor$.
\end{theorem}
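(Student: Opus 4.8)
The plan is to reduce the claim entirely to the trace-orthogonality of symmetrized Pauli strings, after which it becomes a finite combinatorial check. First I would record that any nonidentity Pauli string squares to the identity and is traceless, so that two Pauli strings have nonzero Frobenius overlap only if they are identical. Since distinct symbol-count triples label disjoint sets of strings, this immediately gives
\begin{equation}
\Tr\!\left[P_{(k_x,k_y,k_z)}\,P_{(k_x',k_y',k_z')}\right]\propto \delta_{(k_x,k_y,k_z),\,(k_x',k_y',k_z')}\,,
\end{equation}
with the proportionality constant a positive multinomial count times $d$. Thus $\Tr[H C_\mu]$ is controlled solely by whether the index triple of $H$ coincides with one of the all-even triples $(2a,2b,2c)$, $a+b+c=\mu$, that make up $C_\mu$ in Eq.~\eqref{eq:Ck}.

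Next I would run the case analysis over the three generator types in $\GC_k$. The sums $P_{(1,0,0)}$ and $P_{(0,1,0)}$ each carry an odd entry, so their triples can never equal an all-even triple; hence $\Tr[H C_\mu]=0$ for every $\mu$, and these two generators drop out at once. The remaining generators $P_{(0,0,\kappa)}$, $2\le\kappa\le k$, match an all-even triple only when $a=b=0$ and $\kappa=2c$, i.e. only when $\kappa$ is even, in which case $\mu=c=\kappa/2$. As $\kappa$ ranges over $\{2,\dots,k\}$ the admissible $\mu=\kappa/2$ ranges over exactly $\{1,\dots,\lfloor k/2\rfloor\}$, so a $Z$-only generator can have nonzero overlap with $C_\mu$ only for $\mu$ in this window.

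Finally I would intersect these constraints with the stated regime. For $\mu=0$ we have $C_0=\id$, so $\Tr[H C_0]=\Tr[H]=0$ because every generator is a sum of traceless Pauli strings; and for $\lfloor k/2\rfloor<\mu\le\lfloor n/2\rfloor$ the only potentially contributing terms (the case $P_{(0,0,\kappa)}$) would require $\mu\le\lfloor k/2\rfloor$, which is excluded. Either way $\Tr[H C_\mu]=0$, proving the theorem. The argument involves no real analytic difficulty once orthogonality is established; the one step deserving care is the $\kappa\mapsto\kappa/2$ bookkeeping of the second case, since it is exactly this correspondence---$Z$-strings of weight $\kappa$ reach the central element $C_{\kappa/2}$ and no other---that pins down the $\lfloor k/2\rfloor$ cutoff and dovetails with the nonvanishing content of Theorem~\ref{theo:k-body-DLA}.
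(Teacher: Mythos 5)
Your proof is correct and follows essentially the same route as the paper's: both reduce the claim to the trace-orthogonality of distinct (symmetrized) Pauli strings, so that a generator can have nonzero overlap with $C_\mu$ only if its index triple appears among the all-even triples $(2a,2b,2c)$, $a+b+c=\mu$, which confines any nonvanishing projection to $1\le\mu\le\lfloor k/2\rfloor$. Your write-up is simply more explicit than the paper's one-line argument (which notes that no generator shares a Pauli string with $C_\mu$ in the stated regime and invokes tracelessness of non-identity strings), and your $\kappa\mapsto\kappa/2$ bookkeeping is in fact more careful than the paper's phrasing, which refers to the exceptional generator as $P_{(0,0,\mu)}$ rather than $P_{(0,0,2\mu)}$.
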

Theorem~\ref{theo:central-projection-k-body} shows that the operators in $\GC_k$ have no projection onto $C_\mu$ with $\mu>\lfloor \frac{k}{2}\rfloor$ (or $\mu=0$). Thus, one finds  that the following corollary  holds.
\begin{corollary}\label{theo:k-body-conditions}
    Any set consisting of at-most-$k$-body $S_n$-equivariant operators will always be insufficient to generate $\mf{su}^{S_n}(d)$ unless $k=n$ for $n$ even or $k\geq n-1$ for $n$ odd.
\end{corollary}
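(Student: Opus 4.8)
The plan is to obtain the Corollary as a direct consequence of the central projections criterion (Theorem~\ref{theo:central-projection}) combined with the explicit description of the center in Proposition~\ref{lem:Ck-span-Sn-center-main}. By the definition of universality (Definition~\ref{def:subspace-control-new}), a DLA $\mf{g}=\langle i\GC\rangle_{\Lie}$ built from a set $\GC$ of at-most-$k$-body $S_n$-equivariant generators can contain $\mf{su}^{S_n}(d)$ only if it contains every traceless element of the center $\mf{z}(\mf{u}^{S_n}(d))$. Each $C_\mu$ with $\mu\ge 1$ is traceless, since it involves only $P_{(2a,2b,2c)}$ with $(a,b,c)\neq(0,0,0)$ and hence no identity term; Proposition~\ref{lem:Ck-span-Sn-center-main} then shows that universality requires $iC_\mu\in\mf{g}$ for all $1\le\mu\le\lfloor n/2\rfloor$, and in particular for the extremal value $\mu=\lfloor n/2\rfloor$. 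I would therefore reduce the whole statement to asking when the single element $C_{\lfloor n/2\rfloor}$ is even \emph{allowed} to appear in $\mf{g}$.

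The key step is a Frobenius-orthogonality estimate that upgrades Theorem~\ref{theo:central-projection-k-body} from the specific family $\GC_k$ to an arbitrary at-most-$k$-body equivariant set. First I would observe that any Hermitian $S_n$-equivariant operator lies in $\spn_{\mbb{R}}\{P_{(k_x,k_y,k_z)}\}$ by Eq.~\eqref{eq:maximal-algebra-P}, and that each $P_{(k_x,k_y,k_z)}$ is exactly $(k_x+k_y+k_z)$-body; consequently an at-most-$k$-body equivariant $H$ expands with only those $P_{(k_x,k_y,k_z)}$ whose total weight $k_x+k_y+k_z$ is at most $k$. Next I would use that distinct symmetrized strings are mutually orthogonal, i.e. $\Tr[P_{(k_x,k_y,k_z)}P_{(k_x',k_y',k_z')}]=0$ unless the two profiles coincide, which follows because the underlying Pauli strings are drawn from disjoint sets and obey $\Tr[\sigma\sigma']=2^n\delta_{\sigma\sigma'}$. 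Since $C_\mu$ is supported entirely on profiles $(2a,2b,2c)$ of total weight exactly $2\mu$, this orthogonality forces
\begin{equation}
\Tr[H C_\mu]=0 \quad\text{for every at-most-}k\text{-body equivariant } H \text{ whenever } 2\mu>k .
\end{equation}

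Finally I would assemble the arithmetic. By Theorem~\ref{theo:central-projection}, the vanishing of all projections $\Tr[H C_{\lfloor n/2\rfloor}]$ prohibits $C_{\lfloor n/2\rfloor}\in\mf{g}$ as soon as $2\lfloor n/2\rfloor>k$, so universality forces $k\ge 2\lfloor n/2\rfloor$. For $n$ even this is $k\ge n$, which together with the trivial bound $k\le n$ pins $k=n$; for $n$ odd it is $k\ge n-1$. This reproduces exactly the claimed dichotomy. I do not expect a genuine obstacle here: the only point needing a little care is the claim that an at-most-$k$-body equivariant operator expands with $P$-profiles of weight at most $k$ (which rests on the linear independence of the $P_{(k_x,k_y,k_z)}$ across weights and the fact that a weight-$\le k$ operator has no components on heavier Pauli strings), and the rest is orthogonality bookkeeping and floor-function arithmetic. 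For completeness one should recall that the matching sufficiency at the threshold is already furnished by Theorem~\ref{theo:k-body-DLA}, whose center contribution spans $iC_1,\dots,iC_{\lfloor k/2\rfloor}$ and hence exhausts the traceless center precisely when $\lfloor k/2\rfloor=\lfloor n/2\rfloor$.
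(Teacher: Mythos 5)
Your proof is correct and takes essentially the same route as the paper: it combines the central projections condition (Theorem~2) with Proposition~1, observing that the traceless central element $C_{\lfloor n/2\rfloor}$ is a $2\lfloor n/2\rfloor$-body operator onto which no at-most-$k$-body $S_n$-equivariant generator can have nonzero Frobenius projection when $k<2\lfloor n/2\rfloor$. Your explicit orthogonality step upgrading Theorem~6 from the specific set $\GC_k$ to an arbitrary at-most-$k$-body equivariant set is precisely the (implicit) content of the paper's remark preceding the corollary, so the two arguments coincide in substance.
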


It is worth noting that Corollary~\ref{theo:k-body-conditions} can also follow from the no-go theorem of Ref.~\cite{marvian2022restrictions}, along with the fact that the center $\mf{z}\left(\mf{u}^{S_n}(d)\right)$ contains $n$-body operators for $n$ even or $(n-1)$-body operators for $n$ odd.

Theorem~\ref{theo:k-body-DLA} and Corollary~\ref{theo:k-body-conditions} shows that in order for a QNN with $S_n$-equivariant $k$-body gates to be universal, one needs to include in the set of generators up-to-$n$-body gates (for $n$ even) or up-to-$(n-1)$-body gates (for $n$ odd). Hence, as schematically depicted in Fig.~\ref{fig:Summary}, this corollary imposes a fundamental limitation of the universality of QNNs with $S_n$-equivariant $k$-local gates.

\begin{figure}[t]
    \centering
\includegraphics[width=1\columnwidth]{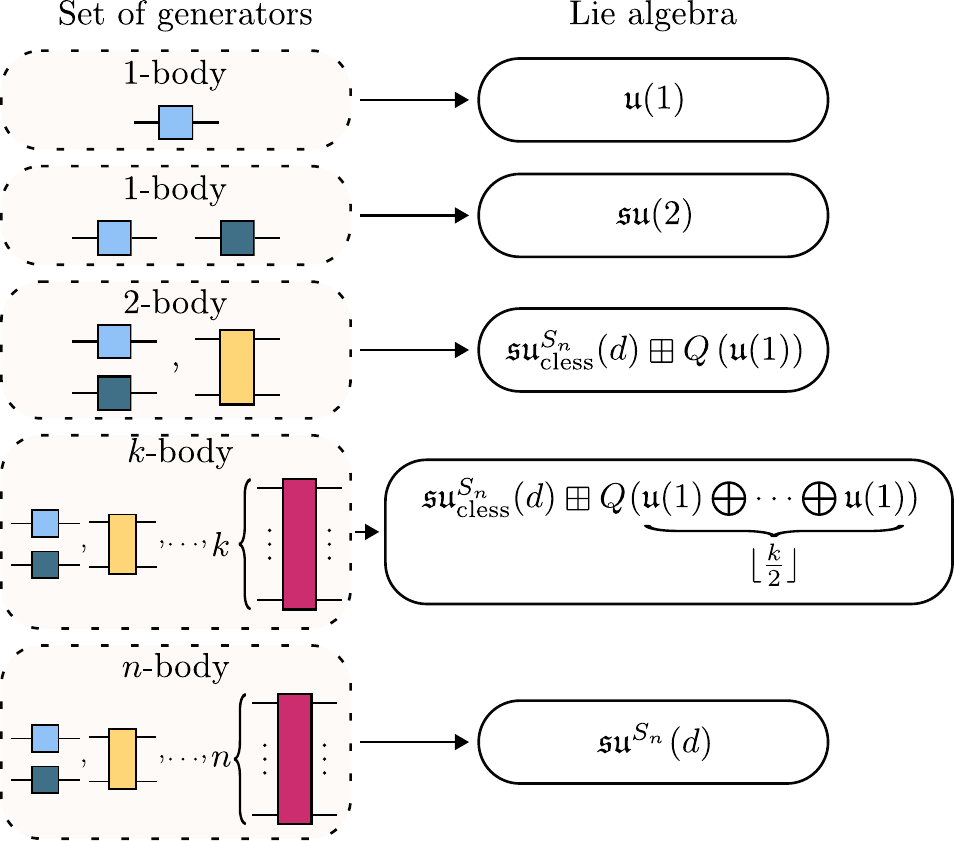}
    \caption{\textbf{$S_n$-equivariance, few-bodyness, and DLA.} Here we review the main results of our work.  In particular, we consider the case where the single-body operators in $\GC$ are  $P_{(1,0,0)}$ and  $P_{(0,1,0)}$ and where the $k$-body operators are of the form  $P_{(0,0,k)}$.
    \label{fig:Summary}}
\end{figure}

\section{Discussion}

The recent field of GQML has paved the way for studying how adding symmetry information into quantum learning models changes their performance in terms of expressiveness, trainability, and generalization. Recent results has shown that GQML can indeed provide a heuristic advantage for several machine learning tasks~\cite{west2023reflection,east2023all,Chang2023Approximately,zheng2023sncqa,wierichs2023symmetric,le2023symmetry,dong2024Z2,forestano2024comparison} over their symmetry agnostic counter-parts. In particular, it was shown that  the special case of  $S_n$-equivariant QNNs exhibit the holy grail of desirable properties~\cite{schatzki2022theoretical}: absence of barren plateaus, generalization from few training points, and the capacity to be efficiently overparametrized. Still, despite the great promise of GQML there are still many open questions regarding its full capabilities. As an example, the seminal works of Refs.~\cite{marvian2022restrictions,marvian2022rotationally,marvian2021qudit,marvian2023non}, have started to analyze restrictions to universality and their connections to the type of gates used. Moreover, it is has been documented that quantum noise can be quite detrimental for equivariant models~\cite{tuysuz2024symmetry,garcia2023effects,wang2020noise} and that reducing a quantum learning model's expressive power too much can potentially make it classically simulable~\cite{anschuetz2022efficient,cerezo2023does}. As such, we expect that the study of symmetry-enforced models will be a thriving area in the future.

In this context, we have shown that quantum circuits with elementary $k$-body $S_n$-equivariant gates are semi-universal and subspace controllable (see Definition~\ref{def:subspace-control-new}), but not universal. While this result might seem negative (as we cannot reach universality), we are inclined to note that reaching semi-universality is in itself a tall order. The fact that $S_n$-equivariant QNNs with only $2$-body gates have enough expressiveness to be semi-universal is a notable result in itself. Crucially, it is not obvious that such high degree of universality  will generally hold. Moreover, our results indicate that in order to reach universality, one must include up-to-$n$-body interactions if $n$ is even, and up-to-$(n-1)$-body interactions if $n$ is odd. These results have several implications. First, they highlight the existence of a fundamental limitation to achieving universality from local permutation-invariant gates. Second, they correct a result in the literature which states that one- and two-body $S_n$-equivariant gates are indeed enough for universality. Finally, they portray the power of the central projections condition of Ref.~\cite{zimboras2015symmetry}.

Here we also note that at a higher level, the central projections condition offers a significant obstruction to the possibility of using only local or few-body gates to achieve universality of gates that respect some symmetry group (see also~\cite{marvian2022rotationally}). Of course, symmetries are necessary for this phenomenon to present itself; for instance, as illustrated by a classical result in quantum computation~\cite{divincenzo1995two}, the collection of all $1$-body and $2$-body Pauli strings, which do not share any symmetry, is indeed sufficient to generate all Pauli strings.

It is also worth highlighting the fact that our results are intrinsically connected to those in Ref.~\cite{marvian2022rotationally} through the Schur-Weyl duality. Namely, while we study circuits with $S_n$-equivariance, the work in~\cite{marvian2022rotationally} studies $\mathbb{SU}(2)$-equivariant circuits. Since the qubit permuting representation of $S_n$ and the tensor product representation of $\mathbb{SU}(2)$ mutually centralize each other, their centers match. Hence, the conditions imposed by the central projections theorem  to the circuit's generators bodyness are exactly the same for $\mathbb{SU}(2)$- or $S_n$-equivariant circuits.  That is, Corollary~\ref{theo:k-body-conditions} in our work is precisely, and necessarily, the same as the direction of Theorem 1 in~\cite{marvian2022rotationally} that establishes constraints from the central projections condition. This realization then allows us to identify our work as being dual to that in Ref.~\cite{marvian2022rotationally}, where each work explores how the central projections conditions affect generators on each ``side'' of the Schur-Weyl duality. Finally, this connection shows that results obtained for one side of a dual reductive pair will naturally be relevant for the other side.

As a final note, both our work and those in 
Refs.~\cite{marvian2022restrictions,marvian2022rotationally} focus primarily on obstructions to universality coming from the central projections condition. However, the characterization of Lie algebras presented in Ref.~\cite{zimboras2015symmetry} has three components: linear symmetries, quadratic symmetries, and central projections, with central projections generally having the smallest impact in terms of reducing the dimension of the Lie algebra. As an example, the Lie algebra $\mf{g}_2$ studied in this paper has a relatively small dimension deficit of $\lfloor n/2\rfloor$ compared to the full $S_n$-invariant algebra $\mf{u}^{S_n}(d)$. The question of whether few-bodyness (or locality) of symmetry-equivariant qubit gates could enforce extra linear or quadratic symmetries has only begun to be studied, for instance in Refs.~\cite{marvian2021qudit,marvian2023non}, and such possibilities would cause the DLA dimension deficiency to be much larger. We thus believe that a more general investigation into the failure modes for gate sets with additional constraints (beyond symmetry-equivariance) to generate the full symmetry-invariant algebra will be a fruitful area of study.

\textit{Note added:} After completion of this manuscript, it was brought to our attention that Proposition~1 of our Supplemental Material is also presented in a slightly more general form as Proposition 1 in \cite{marvian2022rotationally}.

\section*{Acknowledgements}
We would like to thank Robert Zeier for useful discussions regarding his result on linear symmetries, quadratic symmetries, and central projections; as well as Iman Marvian for useful discussions regarding Schur-Weyl duality and dual reductive pairs. We also thank Domenico D'Alessandro for useful comments regarding our manuscript. SK was initially supported by the U.S. DOE through a quantum computing program sponsored by the Los Alamos National Laboratory (LANL) Information Science \& Technology Institute. 
ML was supported by the Center for Nonlinear Studies at LANL. ML and MC acknowledge  support by the Laboratory Directed Research and Development (LDRD) program of LANL  under project numbers 20230527ECR and 20230049DR.

\section*{Data availability statement}

No new data were created or analysed in this study.

\bibliography{quantum}

%apsrev4-2.bst 2019-01-14 (MD) hand-edited version of apsrev4-1.bst
%Control: key (0)
%Control: author (8) initials jnrlst
%Control: editor formatted (1) identically to author
%Control: production of article title (0) allowed
%Control: page (0) single
%Control: year (1) truncated
%Control: production of eprint (0) enabled
\begin{thebibliography}{57}%
\makeatletter
\providecommand \@ifxundefined [1]{%
 \@ifx{#1\undefined}
}%
\providecommand \@ifnum [1]{%
 \ifnum #1\expandafter \@firstoftwo
 \else \expandafter \@secondoftwo
 \fi
}%
\providecommand \@ifx [1]{%
 \ifx #1\expandafter \@firstoftwo
 \else \expandafter \@secondoftwo
 \fi
}%
\providecommand \natexlab [1]{#1}%
\providecommand \enquote  [1]{``#1''}%
\providecommand \bibnamefont  [1]{#1}%
\providecommand \bibfnamefont [1]{#1}%
\providecommand \citenamefont [1]{#1}%
\providecommand \href@noop [0]{\@secondoftwo}%
\providecommand \href [0]{\begingroup \@sanitize@url \@href}%
\providecommand \@href[1]{\@@startlink{#1}\@@href}%
\providecommand \@@href[1]{\endgroup#1\@@endlink}%
\providecommand \@sanitize@url [0]{\catcode `\\12\catcode `\$12\catcode
  `\&12\catcode `\#12\catcode `\^12\catcode `\_12\catcode `\%12\relax}%
\providecommand \@@startlink[1]{}%
\providecommand \@@endlink[0]{}%
\providecommand \url  [0]{\begingroup\@sanitize@url \@url }%
\providecommand \@url [1]{\endgroup\@href {#1}{\urlprefix }}%
\providecommand \urlprefix  [0]{URL }%
\providecommand \Eprint [0]{\href }%
\providecommand \doibase [0]{https://doi.org/}%
\providecommand \selectlanguage [0]{\@gobble}%
\providecommand \bibinfo  [0]{\@secondoftwo}%
\providecommand \bibfield  [0]{\@secondoftwo}%
\providecommand \translation [1]{[#1]}%
\providecommand \BibitemOpen [0]{}%
\providecommand \bibitemStop [0]{}%
\providecommand \bibitemNoStop [0]{.\EOS\space}%
\providecommand \EOS [0]{\spacefactor3000\relax}%
\providecommand \BibitemShut  [1]{\csname bibitem#1\endcsname}%
\let\auto@bib@innerbib\@empty
%</preamble>
\bibitem [{\citenamefont {Biamonte}\ \emph {et~al.}(2017)\citenamefont
  {Biamonte}, \citenamefont {Wittek}, \citenamefont {Pancotti}, \citenamefont
  {Rebentrost}, \citenamefont {Wiebe},\ and\ \citenamefont
  {Lloyd}}]{biamonte2017quantum}%
  \BibitemOpen
  \bibfield  {author} {\bibinfo {author} {\bibfnamefont {J.}~\bibnamefont
  {Biamonte}}, \bibinfo {author} {\bibfnamefont {P.}~\bibnamefont {Wittek}},
  \bibinfo {author} {\bibfnamefont {N.}~\bibnamefont {Pancotti}}, \bibinfo
  {author} {\bibfnamefont {P.}~\bibnamefont {Rebentrost}}, \bibinfo {author}
  {\bibfnamefont {N.}~\bibnamefont {Wiebe}},\ and\ \bibinfo {author}
  {\bibfnamefont {S.}~\bibnamefont {Lloyd}},\ }\bibfield  {title} {\bibinfo
  {title} {Quantum machine learning},\ }\href
  {https://doi.org/10.1038/nature23474} {\bibfield  {journal} {\bibinfo
  {journal} {Nature}\ }\textbf {\bibinfo {volume} {549}},\ \bibinfo {pages}
  {195} (\bibinfo {year} {2017})}\BibitemShut {NoStop}%
\bibitem [{\citenamefont {Schuld}\ \emph {et~al.}(2015)\citenamefont {Schuld},
  \citenamefont {Sinayskiy},\ and\ \citenamefont
  {Petruccione}}]{schuld2015introduction}%
  \BibitemOpen
  \bibfield  {author} {\bibinfo {author} {\bibfnamefont {M.}~\bibnamefont
  {Schuld}}, \bibinfo {author} {\bibfnamefont {I.}~\bibnamefont {Sinayskiy}},\
  and\ \bibinfo {author} {\bibfnamefont {F.}~\bibnamefont {Petruccione}},\
  }\bibfield  {title} {\bibinfo {title} {An introduction to quantum machine
  learning},\ }\href {https://doi.org/10.1080/00107514.2014.964942} {\bibfield
  {journal} {\bibinfo  {journal} {Contemporary Physics}\ }\textbf {\bibinfo
  {volume} {56}},\ \bibinfo {pages} {172} (\bibinfo {year} {2015})}\BibitemShut
  {NoStop}%
\bibitem [{\citenamefont {Cerezo}\ \emph {et~al.}(2022)\citenamefont {Cerezo},
  \citenamefont {Verdon}, \citenamefont {Huang}, \citenamefont {Cincio},\ and\
  \citenamefont {Coles}}]{cerezo2022challenges}%
  \BibitemOpen
  \bibfield  {author} {\bibinfo {author} {\bibfnamefont {M.}~\bibnamefont
  {Cerezo}}, \bibinfo {author} {\bibfnamefont {G.}~\bibnamefont {Verdon}},
  \bibinfo {author} {\bibfnamefont {H.-Y.}\ \bibnamefont {Huang}}, \bibinfo
  {author} {\bibfnamefont {L.}~\bibnamefont {Cincio}},\ and\ \bibinfo {author}
  {\bibfnamefont {P.~J.}\ \bibnamefont {Coles}},\ }\bibfield  {title} {\bibinfo
  {title} {Challenges and opportunities in quantum machine learning},\
  }\bibfield  {journal} {\bibinfo  {journal} {Nature Computational Science}\
  }\href {https://doi.org/10.1038/s43588-022-00311-3}
  {10.1038/s43588-022-00311-3} (\bibinfo {year} {2022})\BibitemShut {NoStop}%
\bibitem [{\citenamefont {Cerezo}\ \emph
  {et~al.}(2021{\natexlab{a}})\citenamefont {Cerezo}, \citenamefont {Sone},
  \citenamefont {Volkoff}, \citenamefont {Cincio},\ and\ \citenamefont
  {Coles}}]{cerezo2020cost}%
  \BibitemOpen
  \bibfield  {author} {\bibinfo {author} {\bibfnamefont {M.}~\bibnamefont
  {Cerezo}}, \bibinfo {author} {\bibfnamefont {A.}~\bibnamefont {Sone}},
  \bibinfo {author} {\bibfnamefont {T.}~\bibnamefont {Volkoff}}, \bibinfo
  {author} {\bibfnamefont {L.}~\bibnamefont {Cincio}},\ and\ \bibinfo {author}
  {\bibfnamefont {P.~J.}\ \bibnamefont {Coles}},\ }\bibfield  {title} {\bibinfo
  {title} {Cost function dependent barren plateaus in shallow parametrized
  quantum circuits},\ }\href {https://doi.org/10.1038/s41467-021-21728-w}
  {\bibfield  {journal} {\bibinfo  {journal} {Nature {C}ommunications}\
  }\textbf {\bibinfo {volume} {12}},\ \bibinfo {pages} {1} (\bibinfo {year}
  {2021}{\natexlab{a}})}\BibitemShut {NoStop}%
\bibitem [{\citenamefont {Holmes}\ \emph {et~al.}(2022)\citenamefont {Holmes},
  \citenamefont {Sharma}, \citenamefont {Cerezo},\ and\ \citenamefont
  {Coles}}]{holmes2021connecting}%
  \BibitemOpen
  \bibfield  {author} {\bibinfo {author} {\bibfnamefont {Z.}~\bibnamefont
  {Holmes}}, \bibinfo {author} {\bibfnamefont {K.}~\bibnamefont {Sharma}},
  \bibinfo {author} {\bibfnamefont {M.}~\bibnamefont {Cerezo}},\ and\ \bibinfo
  {author} {\bibfnamefont {P.~J.}\ \bibnamefont {Coles}},\ }\bibfield  {title}
  {\bibinfo {title} {Connecting ansatz expressibility to gradient magnitudes
  and barren plateaus},\ }\href {https://doi.org/10.1103/PRXQuantum.3.010313}
  {\bibfield  {journal} {\bibinfo  {journal} {PRX Quantum}\ }\textbf {\bibinfo
  {volume} {3}},\ \bibinfo {pages} {010313} (\bibinfo {year}
  {2022})}\BibitemShut {NoStop}%
\bibitem [{\citenamefont {K{\"u}bler}\ \emph {et~al.}(2021)\citenamefont
  {K{\"u}bler}, \citenamefont {Buchholz},\ and\ \citenamefont
  {Sch{\"o}lkopf}}]{kubler2021inductive}%
  \BibitemOpen
  \bibfield  {author} {\bibinfo {author} {\bibfnamefont {J.}~\bibnamefont
  {K{\"u}bler}}, \bibinfo {author} {\bibfnamefont {S.}~\bibnamefont
  {Buchholz}},\ and\ \bibinfo {author} {\bibfnamefont {B.}~\bibnamefont
  {Sch{\"o}lkopf}},\ }\bibfield  {title} {\bibinfo {title} {The inductive bias
  of quantum kernels},\ }\href
  {https://proceedings.neurips.cc/paper/2021/hash/69adc1e107f7f7d035d7baf04342e1ca-Abstract.html}
  {\bibfield  {journal} {\bibinfo  {journal} {Advances in Neural Information
  Processing Systems}\ }\textbf {\bibinfo {volume} {34}},\ \bibinfo {pages}
  {12661} (\bibinfo {year} {2021})}\BibitemShut {NoStop}%
\bibitem [{\citenamefont {Larocca}\ \emph
  {et~al.}(2022{\natexlab{a}})\citenamefont {Larocca}, \citenamefont {Czarnik},
  \citenamefont {Sharma}, \citenamefont {Muraleedharan}, \citenamefont
  {Coles},\ and\ \citenamefont {Cerezo}}]{larocca2021diagnosing}%
  \BibitemOpen
  \bibfield  {author} {\bibinfo {author} {\bibfnamefont {M.}~\bibnamefont
  {Larocca}}, \bibinfo {author} {\bibfnamefont {P.}~\bibnamefont {Czarnik}},
  \bibinfo {author} {\bibfnamefont {K.}~\bibnamefont {Sharma}}, \bibinfo
  {author} {\bibfnamefont {G.}~\bibnamefont {Muraleedharan}}, \bibinfo {author}
  {\bibfnamefont {P.~J.}\ \bibnamefont {Coles}},\ and\ \bibinfo {author}
  {\bibfnamefont {M.}~\bibnamefont {Cerezo}},\ }\bibfield  {title} {\bibinfo
  {title} {Diagnosing {B}arren {P}lateaus with {T}ools from {Q}uantum {O}ptimal
  {C}ontrol},\ }\href {https://doi.org/10.22331/q-2022-09-29-824} {\bibfield
  {journal} {\bibinfo  {journal} {{Quantum}}\ }\textbf {\bibinfo {volume}
  {6}},\ \bibinfo {pages} {824} (\bibinfo {year}
  {2022}{\natexlab{a}})}\BibitemShut {NoStop}%
\bibitem [{\citenamefont {Larocca}\ \emph
  {et~al.}(2022{\natexlab{b}})\citenamefont {Larocca}, \citenamefont {Sauvage},
  \citenamefont {Sbahi}, \citenamefont {Verdon}, \citenamefont {Coles},\ and\
  \citenamefont {Cerezo}}]{larocca2022group}%
  \BibitemOpen
  \bibfield  {author} {\bibinfo {author} {\bibfnamefont {M.}~\bibnamefont
  {Larocca}}, \bibinfo {author} {\bibfnamefont {F.}~\bibnamefont {Sauvage}},
  \bibinfo {author} {\bibfnamefont {F.~M.}\ \bibnamefont {Sbahi}}, \bibinfo
  {author} {\bibfnamefont {G.}~\bibnamefont {Verdon}}, \bibinfo {author}
  {\bibfnamefont {P.~J.}\ \bibnamefont {Coles}},\ and\ \bibinfo {author}
  {\bibfnamefont {M.}~\bibnamefont {Cerezo}},\ }\bibfield  {title} {\bibinfo
  {title} {Group-invariant quantum machine learning},\ }\href
  {https://doi.org/10.1103/PRXQuantum.3.030341} {\bibfield  {journal} {\bibinfo
   {journal} {PRX Quantum}\ }\textbf {\bibinfo {volume} {3}},\ \bibinfo {pages}
  {030341} (\bibinfo {year} {2022}{\natexlab{b}})}\BibitemShut {NoStop}%
\bibitem [{\citenamefont {Meyer}\ \emph {et~al.}(2023)\citenamefont {Meyer},
  \citenamefont {Mularski}, \citenamefont {Gil-Fuster}, \citenamefont {Mele},
  \citenamefont {Arzani}, \citenamefont {Wilms},\ and\ \citenamefont
  {Eisert}}]{meyer2022exploiting}%
  \BibitemOpen
  \bibfield  {author} {\bibinfo {author} {\bibfnamefont {J.~J.}\ \bibnamefont
  {Meyer}}, \bibinfo {author} {\bibfnamefont {M.}~\bibnamefont {Mularski}},
  \bibinfo {author} {\bibfnamefont {E.}~\bibnamefont {Gil-Fuster}}, \bibinfo
  {author} {\bibfnamefont {A.~A.}\ \bibnamefont {Mele}}, \bibinfo {author}
  {\bibfnamefont {F.}~\bibnamefont {Arzani}}, \bibinfo {author} {\bibfnamefont
  {A.}~\bibnamefont {Wilms}},\ and\ \bibinfo {author} {\bibfnamefont
  {J.}~\bibnamefont {Eisert}},\ }\bibfield  {title} {\bibinfo {title}
  {Exploiting symmetry in variational quantum machine learning},\ }\href
  {https://doi.org/10.1103/PRXQuantum.4.010328} {\bibfield  {journal} {\bibinfo
   {journal} {PRX Quantum}\ }\textbf {\bibinfo {volume} {4}},\ \bibinfo {pages}
  {010328} (\bibinfo {year} {2023})}\BibitemShut {NoStop}%
\bibitem [{\citenamefont {Skolik}\ \emph {et~al.}(2023)\citenamefont {Skolik},
  \citenamefont {Cattelan}, \citenamefont {Yarkoni}, \citenamefont {B{\"a}ck},\
  and\ \citenamefont {Dunjko}}]{skolik2022equivariant}%
  \BibitemOpen
  \bibfield  {author} {\bibinfo {author} {\bibfnamefont {A.}~\bibnamefont
  {Skolik}}, \bibinfo {author} {\bibfnamefont {M.}~\bibnamefont {Cattelan}},
  \bibinfo {author} {\bibfnamefont {S.}~\bibnamefont {Yarkoni}}, \bibinfo
  {author} {\bibfnamefont {T.}~\bibnamefont {B{\"a}ck}},\ and\ \bibinfo
  {author} {\bibfnamefont {V.}~\bibnamefont {Dunjko}},\ }\bibfield  {title}
  {\bibinfo {title} {Equivariant quantum circuits for learning on weighted
  graphs},\ }\href {https://doi.org/10.1038/s41534-023-00710-y} {\bibfield
  {journal} {\bibinfo  {journal} {npj Quantum Information}\ }\textbf {\bibinfo
  {volume} {9}},\ \bibinfo {pages} {47} (\bibinfo {year} {2023})}\BibitemShut
  {NoStop}%
\bibitem [{\citenamefont {Nguyen}\ \emph {et~al.}(2024)\citenamefont {Nguyen},
  \citenamefont {Schatzki}, \citenamefont {Braccia}, \citenamefont {Ragone},
  \citenamefont {Coles}, \citenamefont {Sauvage}, \citenamefont {Larocca},\
  and\ \citenamefont {Cerezo}}]{nguyen2022atheory}%
  \BibitemOpen
  \bibfield  {author} {\bibinfo {author} {\bibfnamefont {Q.~T.}\ \bibnamefont
  {Nguyen}}, \bibinfo {author} {\bibfnamefont {L.}~\bibnamefont {Schatzki}},
  \bibinfo {author} {\bibfnamefont {P.}~\bibnamefont {Braccia}}, \bibinfo
  {author} {\bibfnamefont {M.}~\bibnamefont {Ragone}}, \bibinfo {author}
  {\bibfnamefont {P.~J.}\ \bibnamefont {Coles}}, \bibinfo {author}
  {\bibfnamefont {F.}~\bibnamefont {Sauvage}}, \bibinfo {author} {\bibfnamefont
  {M.}~\bibnamefont {Larocca}},\ and\ \bibinfo {author} {\bibfnamefont
  {M.}~\bibnamefont {Cerezo}},\ }\bibfield  {title} {\bibinfo {title} {Theory
  for equivariant quantum neural networks},\ }\href
  {https://doi.org/10.1103/PRXQuantum.5.020328} {\bibfield  {journal} {\bibinfo
   {journal} {PRX Quantum}\ }\textbf {\bibinfo {volume} {5}},\ \bibinfo {pages}
  {020328} (\bibinfo {year} {2024})}\BibitemShut {NoStop}%
\bibitem [{\citenamefont {Schatzki}\ \emph {et~al.}(2024)\citenamefont
  {Schatzki}, \citenamefont {Larocca}, \citenamefont {Nguyen}, \citenamefont
  {Sauvage},\ and\ \citenamefont {Cerezo}}]{schatzki2022theoretical}%
  \BibitemOpen
  \bibfield  {author} {\bibinfo {author} {\bibfnamefont {L.}~\bibnamefont
  {Schatzki}}, \bibinfo {author} {\bibfnamefont {M.}~\bibnamefont {Larocca}},
  \bibinfo {author} {\bibfnamefont {Q.~T.}\ \bibnamefont {Nguyen}}, \bibinfo
  {author} {\bibfnamefont {F.}~\bibnamefont {Sauvage}},\ and\ \bibinfo {author}
  {\bibfnamefont {M.}~\bibnamefont {Cerezo}},\ }\bibfield  {title} {\bibinfo
  {title} {Theoretical guarantees for permutation-equivariant quantum neural
  networks},\ }\href {https://doi.org/10.1038/s41534-024-00804-1} {\bibfield
  {journal} {\bibinfo  {journal} {npj Quantum Information}\ }\textbf {\bibinfo
  {volume} {10}},\ \bibinfo {pages} {12} (\bibinfo {year} {2024})}\BibitemShut
  {NoStop}%
\bibitem [{\citenamefont {Ragone}\ \emph {et~al.}(2022)\citenamefont {Ragone},
  \citenamefont {Nguyen}, \citenamefont {Schatzki}, \citenamefont {Braccia},
  \citenamefont {Larocca}, \citenamefont {Sauvage}, \citenamefont {Coles},\
  and\ \citenamefont {Cerezo}}]{ragone2022representation}%
  \BibitemOpen
  \bibfield  {author} {\bibinfo {author} {\bibfnamefont {M.}~\bibnamefont
  {Ragone}}, \bibinfo {author} {\bibfnamefont {Q.~T.}\ \bibnamefont {Nguyen}},
  \bibinfo {author} {\bibfnamefont {L.}~\bibnamefont {Schatzki}}, \bibinfo
  {author} {\bibfnamefont {P.}~\bibnamefont {Braccia}}, \bibinfo {author}
  {\bibfnamefont {M.}~\bibnamefont {Larocca}}, \bibinfo {author} {\bibfnamefont
  {F.}~\bibnamefont {Sauvage}}, \bibinfo {author} {\bibfnamefont {P.~J.}\
  \bibnamefont {Coles}},\ and\ \bibinfo {author} {\bibfnamefont
  {M.}~\bibnamefont {Cerezo}},\ }\bibfield  {title} {\bibinfo {title}
  {Representation theory for geometric quantum machine learning},\ }\href
  {https://arxiv.org/abs/2210.07980} {\bibfield  {journal} {\bibinfo  {journal}
  {arXiv preprint arXiv:2210.07980}\ } (\bibinfo {year} {2022})}\BibitemShut
  {NoStop}%
\bibitem [{\citenamefont {Sauvage}\ \emph {et~al.}(2024)\citenamefont
  {Sauvage}, \citenamefont {Larocca}, \citenamefont {Coles},\ and\
  \citenamefont {Cerezo}}]{sauvage2022building}%
  \BibitemOpen
  \bibfield  {author} {\bibinfo {author} {\bibfnamefont {F.}~\bibnamefont
  {Sauvage}}, \bibinfo {author} {\bibfnamefont {M.}~\bibnamefont {Larocca}},
  \bibinfo {author} {\bibfnamefont {P.~J.}\ \bibnamefont {Coles}},\ and\
  \bibinfo {author} {\bibfnamefont {M.}~\bibnamefont {Cerezo}},\ }\bibfield
  {title} {\bibinfo {title} {Building spatial symmetries into parameterized
  quantum circuits for faster training},\ }\href
  {https://doi.org/10.1088/2058-9565/ad152e} {\bibfield  {journal} {\bibinfo
  {journal} {Quantum Science and Technology}\ }\textbf {\bibinfo {volume}
  {9}},\ \bibinfo {pages} {015029} (\bibinfo {year} {2024})}\BibitemShut
  {NoStop}%
\bibitem [{\citenamefont {Anschuetz}\ \emph {et~al.}(2023)\citenamefont
  {Anschuetz}, \citenamefont {Bauer}, \citenamefont {Kiani},\ and\
  \citenamefont {Lloyd}}]{anschuetz2022efficient}%
  \BibitemOpen
  \bibfield  {author} {\bibinfo {author} {\bibfnamefont {E.~R.}\ \bibnamefont
  {Anschuetz}}, \bibinfo {author} {\bibfnamefont {A.}~\bibnamefont {Bauer}},
  \bibinfo {author} {\bibfnamefont {B.~T.}\ \bibnamefont {Kiani}},\ and\
  \bibinfo {author} {\bibfnamefont {S.}~\bibnamefont {Lloyd}},\ }\bibfield
  {title} {\bibinfo {title} {Efficient classical algorithms for simulating
  symmetric quantum systems},\ }\href
  {https://doi.org/10.22331/q-2023-11-28-1189} {\bibfield  {journal} {\bibinfo
  {journal} {Quantum}\ }\textbf {\bibinfo {volume} {7}},\ \bibinfo {pages}
  {1189} (\bibinfo {year} {2023})}\BibitemShut {NoStop}%
\bibitem [{\citenamefont {Bronstein}\ \emph {et~al.}(2021)\citenamefont
  {Bronstein}, \citenamefont {Bruna}, \citenamefont {Cohen},\ and\
  \citenamefont {Veli{\v{c}}kovi{\'c}}}]{bronstein2021geometric}%
  \BibitemOpen
  \bibfield  {author} {\bibinfo {author} {\bibfnamefont {M.~M.}\ \bibnamefont
  {Bronstein}}, \bibinfo {author} {\bibfnamefont {J.}~\bibnamefont {Bruna}},
  \bibinfo {author} {\bibfnamefont {T.}~\bibnamefont {Cohen}},\ and\ \bibinfo
  {author} {\bibfnamefont {P.}~\bibnamefont {Veli{\v{c}}kovi{\'c}}},\
  }\bibfield  {title} {\bibinfo {title} {Geometric deep learning: Grids,
  groups, graphs, geodesics, and gauges},\ }\href
  {https://arxiv.org/abs/2104.13478} {\bibfield  {journal} {\bibinfo  {journal}
  {arXiv preprint arXiv:2104.13478}\ } (\bibinfo {year} {2021})}\BibitemShut
  {NoStop}%
\bibitem [{\citenamefont {Marvian}(2022)}]{marvian2022restrictions}%
  \BibitemOpen
  \bibfield  {author} {\bibinfo {author} {\bibfnamefont {I.}~\bibnamefont
  {Marvian}},\ }\bibfield  {title} {\bibinfo {title} {Restrictions on
  realizable unitary operations imposed by symmetry and locality},\ }\href
  {https://www.nature.com/articles/s41567-021-01464-0} {\bibfield  {journal}
  {\bibinfo  {journal} {Nature Physics}\ }\textbf {\bibinfo {volume} {18}},\
  \bibinfo {pages} {283} (\bibinfo {year} {2022})}\BibitemShut {NoStop}%
\bibitem [{\citenamefont {Marvian}\ \emph {et~al.}(2024)\citenamefont
  {Marvian}, \citenamefont {Liu},\ and\ \citenamefont
  {Hulse}}]{marvian2022rotationally}%
  \BibitemOpen
  \bibfield  {author} {\bibinfo {author} {\bibfnamefont {I.}~\bibnamefont
  {Marvian}}, \bibinfo {author} {\bibfnamefont {H.}~\bibnamefont {Liu}},\ and\
  \bibinfo {author} {\bibfnamefont {A.}~\bibnamefont {Hulse}},\ }\bibfield
  {title} {\bibinfo {title} {Rotationally invariant circuits: Universality with
  the exchange interaction and two ancilla qubits},\ }\href
  {https://doi.org/10.1103/PhysRevLett.132.130201} {\bibfield  {journal}
  {\bibinfo  {journal} {Physical Review Letters}\ }\textbf {\bibinfo {volume}
  {132}},\ \bibinfo {pages} {130201} (\bibinfo {year} {2024})}\BibitemShut
  {NoStop}%
\bibitem [{\citenamefont {Marvian}\ \emph {et~al.}(2021)\citenamefont
  {Marvian}, \citenamefont {Liu},\ and\ \citenamefont
  {Hulse}}]{marvian2021qudit}%
  \BibitemOpen
  \bibfield  {author} {\bibinfo {author} {\bibfnamefont {I.}~\bibnamefont
  {Marvian}}, \bibinfo {author} {\bibfnamefont {H.}~\bibnamefont {Liu}},\ and\
  \bibinfo {author} {\bibfnamefont {A.}~\bibnamefont {Hulse}},\ }\bibfield
  {title} {\bibinfo {title} {Qudit circuits with su (d) symmetry: Locality
  imposes additional conservation laws},\ }\href
  {https://arxiv.org/abs/2105.12877} {\bibfield  {journal} {\bibinfo  {journal}
  {arXiv preprint arXiv:2105.12877}\ } (\bibinfo {year} {2021})}\BibitemShut
  {NoStop}%
\bibitem [{\citenamefont {Marvian}(2023)}]{marvian2023non}%
  \BibitemOpen
  \bibfield  {author} {\bibinfo {author} {\bibfnamefont {I.}~\bibnamefont
  {Marvian}},\ }\bibfield  {title} {\bibinfo {title} {(non-)universality in
  symmetric quantum circuits: Why abelian symmetries are special},\ }\href
  {https://arxiv.org/abs/2302.12466} {\bibfield  {journal} {\bibinfo  {journal}
  {arXiv preprint arXiv:2302.12466}\ } (\bibinfo {year} {2023})}\BibitemShut
  {NoStop}%
\bibitem [{\citenamefont {Maron}\ \emph {et~al.}(2020)\citenamefont {Maron},
  \citenamefont {Litany}, \citenamefont {Chechik},\ and\ \citenamefont
  {Fetaya}}]{maron2020learning}%
  \BibitemOpen
  \bibfield  {author} {\bibinfo {author} {\bibfnamefont {H.}~\bibnamefont
  {Maron}}, \bibinfo {author} {\bibfnamefont {O.}~\bibnamefont {Litany}},
  \bibinfo {author} {\bibfnamefont {G.}~\bibnamefont {Chechik}},\ and\ \bibinfo
  {author} {\bibfnamefont {E.}~\bibnamefont {Fetaya}},\ }\bibfield  {title}
  {\bibinfo {title} {On learning sets of symmetric elements},\ }in\ \href
  {https://proceedings.mlr.press/v119/maron20a.html} {\emph {\bibinfo
  {booktitle} {Proceedings of the 37th International Conference on Machine
  Learning}}},\ \bibinfo {series} {Proceedings of Machine Learning Research},
  Vol.\ \bibinfo {volume} {119},\ \bibinfo {editor} {edited by\ \bibinfo
  {editor} {\bibfnamefont {H.~D.}\ \bibnamefont {III}}\ and\ \bibinfo {editor}
  {\bibfnamefont {A.}~\bibnamefont {Singh}}}\ (\bibinfo  {publisher} {PMLR},\
  \bibinfo {year} {2020})\ pp.\ \bibinfo {pages} {6734--6744}\BibitemShut
  {NoStop}%
\bibitem [{\citenamefont {Maron}\ \emph
  {et~al.}(2019{\natexlab{a}})\citenamefont {Maron}, \citenamefont {Ben-Hamu},
  \citenamefont {Shamir},\ and\ \citenamefont {Lipman}}]{maron2018invariant}%
  \BibitemOpen
  \bibfield  {author} {\bibinfo {author} {\bibfnamefont {H.}~\bibnamefont
  {Maron}}, \bibinfo {author} {\bibfnamefont {H.}~\bibnamefont {Ben-Hamu}},
  \bibinfo {author} {\bibfnamefont {N.}~\bibnamefont {Shamir}},\ and\ \bibinfo
  {author} {\bibfnamefont {Y.}~\bibnamefont {Lipman}},\ }\bibfield  {title}
  {\bibinfo {title} {Invariant and equivariant graph networks},\ }in\ \href
  {https://openreview.net/forum?id=Syx72jC9tm} {\emph {\bibinfo {booktitle}
  {International Conference on Learning Representations}}}\ (\bibinfo {year}
  {2019})\BibitemShut {NoStop}%
\bibitem [{\citenamefont {Keriven}\ and\ \citenamefont
  {Peyr\'{e}}(2019)}]{keriven2019universal}%
  \BibitemOpen
  \bibfield  {author} {\bibinfo {author} {\bibfnamefont {N.}~\bibnamefont
  {Keriven}}\ and\ \bibinfo {author} {\bibfnamefont {G.}~\bibnamefont
  {Peyr\'{e}}},\ }\bibfield  {title} {\bibinfo {title} {Universal invariant and
  equivariant graph neural networks},\ }in\ \href
  {https://proceedings.neurips.cc/paper/2019/file/ea9268cb43f55d1d12380fb6ea5bf572-Paper.pdf}
  {\emph {\bibinfo {booktitle} {Advances in Neural Information Processing
  Systems}}},\ Vol.~\bibinfo {volume} {32},\ \bibinfo {editor} {edited by\
  \bibinfo {editor} {\bibfnamefont {H.}~\bibnamefont {Wallach}}, \bibinfo
  {editor} {\bibfnamefont {H.}~\bibnamefont {Larochelle}}, \bibinfo {editor}
  {\bibfnamefont {A.}~\bibnamefont {Beygelzimer}}, \bibinfo {editor}
  {\bibfnamefont {F.}~\bibnamefont {d\textquotesingle Alch\'{e}-Buc}}, \bibinfo
  {editor} {\bibfnamefont {E.}~\bibnamefont {Fox}},\ and\ \bibinfo {editor}
  {\bibfnamefont {R.}~\bibnamefont {Garnett}}}\ (\bibinfo  {publisher} {Curran
  Associates, Inc.},\ \bibinfo {year} {2019})\BibitemShut {NoStop}%
\bibitem [{\citenamefont {Maron}\ \emph
  {et~al.}(2019{\natexlab{b}})\citenamefont {Maron}, \citenamefont {Ben-Hamu},
  \citenamefont {Serviansky},\ and\ \citenamefont
  {Lipman}}]{maron2019provably}%
  \BibitemOpen
  \bibfield  {author} {\bibinfo {author} {\bibfnamefont {H.}~\bibnamefont
  {Maron}}, \bibinfo {author} {\bibfnamefont {H.}~\bibnamefont {Ben-Hamu}},
  \bibinfo {author} {\bibfnamefont {H.}~\bibnamefont {Serviansky}},\ and\
  \bibinfo {author} {\bibfnamefont {Y.}~\bibnamefont {Lipman}},\ }\bibfield
  {title} {\bibinfo {title} {Provably powerful graph networks},\ }\bibfield
  {journal} {\bibinfo  {journal} {Advances in neural information processing
  systems}\ }\textbf {\bibinfo {volume} {32}},\ \href
  {https://doi.org/10.5555/3454287.3454480} {10.5555/3454287.3454480} (\bibinfo
  {year} {2019}{\natexlab{b}})\BibitemShut {NoStop}%
\bibitem [{\citenamefont {Verdon}\ \emph {et~al.}(2019)\citenamefont {Verdon},
  \citenamefont {McCourt}, \citenamefont {Luzhnica}, \citenamefont {Singh},
  \citenamefont {Leichenauer},\ and\ \citenamefont
  {Hidary}}]{verdon2019quantumgraph}%
  \BibitemOpen
  \bibfield  {author} {\bibinfo {author} {\bibfnamefont {G.}~\bibnamefont
  {Verdon}}, \bibinfo {author} {\bibfnamefont {T.}~\bibnamefont {McCourt}},
  \bibinfo {author} {\bibfnamefont {E.}~\bibnamefont {Luzhnica}}, \bibinfo
  {author} {\bibfnamefont {V.}~\bibnamefont {Singh}}, \bibinfo {author}
  {\bibfnamefont {S.}~\bibnamefont {Leichenauer}},\ and\ \bibinfo {author}
  {\bibfnamefont {J.}~\bibnamefont {Hidary}},\ }\bibfield  {title} {\bibinfo
  {title} {Quantum graph neural networks},\ }\href
  {https://arxiv.org/abs/1909.12264} {\bibfield  {journal} {\bibinfo  {journal}
  {arXiv preprint arXiv:1909.12264}\ } (\bibinfo {year} {2019})}\BibitemShut
  {NoStop}%
\bibitem [{\citenamefont {Cong}\ \emph {et~al.}(2019)\citenamefont {Cong},
  \citenamefont {Choi},\ and\ \citenamefont {Lukin}}]{cong2019quantum}%
  \BibitemOpen
  \bibfield  {author} {\bibinfo {author} {\bibfnamefont {I.}~\bibnamefont
  {Cong}}, \bibinfo {author} {\bibfnamefont {S.}~\bibnamefont {Choi}},\ and\
  \bibinfo {author} {\bibfnamefont {M.~D.}\ \bibnamefont {Lukin}},\ }\bibfield
  {title} {\bibinfo {title} {Quantum convolutional neural networks},\ }\href
  {https://doi.org/10.1038/s41567-019-0648-8} {\bibfield  {journal} {\bibinfo
  {journal} {Nature Physics}\ }\textbf {\bibinfo {volume} {15}},\ \bibinfo
  {pages} {1273} (\bibinfo {year} {2019})}\BibitemShut {NoStop}%
\bibitem [{\citenamefont {Beckey}\ \emph {et~al.}(2021)\citenamefont {Beckey},
  \citenamefont {Gigena}, \citenamefont {Coles},\ and\ \citenamefont
  {Cerezo}}]{beckey2021computable}%
  \BibitemOpen
  \bibfield  {author} {\bibinfo {author} {\bibfnamefont {J.~L.}\ \bibnamefont
  {Beckey}}, \bibinfo {author} {\bibfnamefont {N.}~\bibnamefont {Gigena}},
  \bibinfo {author} {\bibfnamefont {P.~J.}\ \bibnamefont {Coles}},\ and\
  \bibinfo {author} {\bibfnamefont {M.}~\bibnamefont {Cerezo}},\ }\bibfield
  {title} {\bibinfo {title} {Computable and operationally meaningful
  multipartite entanglement measures},\ }\href
  {https://doi.org/10.1103/PhysRevLett.127.140501} {\bibfield  {journal}
  {\bibinfo  {journal} {Phys. Rev. Lett.}\ }\textbf {\bibinfo {volume} {127}},\
  \bibinfo {pages} {140501} (\bibinfo {year} {2021})}\BibitemShut {NoStop}%
\bibitem [{\citenamefont {Guo}\ \emph {et~al.}(2020)\citenamefont {Guo},
  \citenamefont {Breum}, \citenamefont {Borregaard}, \citenamefont {Izumi},
  \citenamefont {Larsen}, \citenamefont {Gehring}, \citenamefont {Christandl},
  \citenamefont {Neergaard-Nielsen},\ and\ \citenamefont
  {Andersen}}]{guo2020distributed}%
  \BibitemOpen
  \bibfield  {author} {\bibinfo {author} {\bibfnamefont {X.}~\bibnamefont
  {Guo}}, \bibinfo {author} {\bibfnamefont {C.~R.}\ \bibnamefont {Breum}},
  \bibinfo {author} {\bibfnamefont {J.}~\bibnamefont {Borregaard}}, \bibinfo
  {author} {\bibfnamefont {S.}~\bibnamefont {Izumi}}, \bibinfo {author}
  {\bibfnamefont {M.~V.}\ \bibnamefont {Larsen}}, \bibinfo {author}
  {\bibfnamefont {T.}~\bibnamefont {Gehring}}, \bibinfo {author} {\bibfnamefont
  {M.}~\bibnamefont {Christandl}}, \bibinfo {author} {\bibfnamefont {J.~S.}\
  \bibnamefont {Neergaard-Nielsen}},\ and\ \bibinfo {author} {\bibfnamefont
  {U.~L.}\ \bibnamefont {Andersen}},\ }\bibfield  {title} {\bibinfo {title}
  {Distributed quantum sensing in a continuous-variable entangled network},\
  }\href {https://www.nature.com/articles/s41567-019-0743-x} {\bibfield
  {journal} {\bibinfo  {journal} {Nature Physics}\ }\textbf {\bibinfo {volume}
  {16}},\ \bibinfo {pages} {281} (\bibinfo {year} {2020})}\BibitemShut
  {NoStop}%
\bibitem [{\citenamefont {{Huerta Alderete}}\ \emph {et~al.}(2022)\citenamefont
  {{Huerta Alderete}}, \citenamefont {{Gordon}}, \citenamefont {{Sauvage}},
  \citenamefont {{Sone}}, \citenamefont {{Sornborger}}, \citenamefont
  {{Coles}},\ and\ \citenamefont {{Cerezo}}}]{huerta2022inference}%
  \BibitemOpen
  \bibfield  {author} {\bibinfo {author} {\bibfnamefont {C.}~\bibnamefont
  {{Huerta Alderete}}}, \bibinfo {author} {\bibfnamefont {M.~H.}\ \bibnamefont
  {{Gordon}}}, \bibinfo {author} {\bibfnamefont {F.}~\bibnamefont {{Sauvage}}},
  \bibinfo {author} {\bibfnamefont {A.}~\bibnamefont {{Sone}}}, \bibinfo
  {author} {\bibfnamefont {A.~T.}\ \bibnamefont {{Sornborger}}}, \bibinfo
  {author} {\bibfnamefont {P.~J.}\ \bibnamefont {{Coles}}},\ and\ \bibinfo
  {author} {\bibfnamefont {M.}~\bibnamefont {{Cerezo}}},\ }\bibfield  {title}
  {\bibinfo {title} {Inference-based quantum sensing},\ }\href
  {https://doi.org/10.1103/PhysRevLett.129.190501} {\bibfield  {journal}
  {\bibinfo  {journal} {Phys. Rev. Lett.}\ }\textbf {\bibinfo {volume} {129}},\
  \bibinfo {pages} {190501} (\bibinfo {year} {2022})}\BibitemShut {NoStop}%
\bibitem [{\citenamefont {Zimbor{\'a}s}\ \emph {et~al.}(2015)\citenamefont
  {Zimbor{\'a}s}, \citenamefont {Zeier}, \citenamefont
  {Schulte-Herbr{\"u}ggen},\ and\ \citenamefont
  {Burgarth}}]{zimboras2015symmetry}%
  \BibitemOpen
  \bibfield  {author} {\bibinfo {author} {\bibfnamefont {Z.}~\bibnamefont
  {Zimbor{\'a}s}}, \bibinfo {author} {\bibfnamefont {R.}~\bibnamefont {Zeier}},
  \bibinfo {author} {\bibfnamefont {T.}~\bibnamefont
  {Schulte-Herbr{\"u}ggen}},\ and\ \bibinfo {author} {\bibfnamefont
  {D.}~\bibnamefont {Burgarth}},\ }\bibfield  {title} {\bibinfo {title}
  {Symmetry criteria for quantum simulability of effective interactions},\
  }\href {https://doi.org/https://doi.org/10.1103/PhysRevA.92.042309}
  {\bibfield  {journal} {\bibinfo  {journal} {Physical Review A}\ }\textbf
  {\bibinfo {volume} {92}},\ \bibinfo {pages} {042309} (\bibinfo {year}
  {2015})}\BibitemShut {NoStop}%
\bibitem [{\citenamefont {d'Alessandro}(2007)}]{d2007introduction}%
  \BibitemOpen
  \bibfield  {author} {\bibinfo {author} {\bibfnamefont {D.}~\bibnamefont
  {d'Alessandro}},\ }\href@noop {} {\emph {\bibinfo {title} {Introduction to
  quantum control and dynamics}}}\ (\bibinfo  {publisher} {CRC press},\
  \bibinfo {year} {2007})\BibitemShut {NoStop}%
\bibitem [{\citenamefont {Albertini}\ and\ \citenamefont
  {D’Alessandro}(2018)}]{albertini2018controllability}%
  \BibitemOpen
  \bibfield  {author} {\bibinfo {author} {\bibfnamefont {F.}~\bibnamefont
  {Albertini}}\ and\ \bibinfo {author} {\bibfnamefont {D.}~\bibnamefont
  {D’Alessandro}},\ }\bibfield  {title} {\bibinfo {title} {Controllability of
  symmetric spin networks},\ }\href
  {https://doi.org/https://doi.org/10.1063/1.5004652} {\bibfield  {journal}
  {\bibinfo  {journal} {Journal of Mathematical Physics}\ }\textbf {\bibinfo
  {volume} {59}},\ \bibinfo {pages} {052102} (\bibinfo {year}
  {2018})}\BibitemShut {NoStop}%
\bibitem [{\citenamefont {Cerezo}\ \emph
  {et~al.}(2021{\natexlab{b}})\citenamefont {Cerezo}, \citenamefont
  {Arrasmith}, \citenamefont {Babbush}, \citenamefont {Benjamin}, \citenamefont
  {Endo}, \citenamefont {Fujii}, \citenamefont {McClean}, \citenamefont
  {Mitarai}, \citenamefont {Yuan}, \citenamefont {Cincio},\ and\ \citenamefont
  {Coles}}]{cerezo2020variationalreview}%
  \BibitemOpen
  \bibfield  {author} {\bibinfo {author} {\bibfnamefont {M.}~\bibnamefont
  {Cerezo}}, \bibinfo {author} {\bibfnamefont {A.}~\bibnamefont {Arrasmith}},
  \bibinfo {author} {\bibfnamefont {R.}~\bibnamefont {Babbush}}, \bibinfo
  {author} {\bibfnamefont {S.~C.}\ \bibnamefont {Benjamin}}, \bibinfo {author}
  {\bibfnamefont {S.}~\bibnamefont {Endo}}, \bibinfo {author} {\bibfnamefont
  {K.}~\bibnamefont {Fujii}}, \bibinfo {author} {\bibfnamefont {J.~R.}\
  \bibnamefont {McClean}}, \bibinfo {author} {\bibfnamefont {K.}~\bibnamefont
  {Mitarai}}, \bibinfo {author} {\bibfnamefont {X.}~\bibnamefont {Yuan}},
  \bibinfo {author} {\bibfnamefont {L.}~\bibnamefont {Cincio}},\ and\ \bibinfo
  {author} {\bibfnamefont {P.~J.}\ \bibnamefont {Coles}},\ }\bibfield  {title}
  {\bibinfo {title} {Variational quantum algorithms},\ }\href
  {https://doi.org/10.1038/s42254-021-00348-9} {\bibfield  {journal} {\bibinfo
  {journal} {Nature Reviews Physics}\ }\textbf {\bibinfo {volume} {3}},\
  \bibinfo {pages} {625–644} (\bibinfo {year}
  {2021}{\natexlab{b}})}\BibitemShut {NoStop}%
\bibitem [{\citenamefont {Kandala}\ \emph {et~al.}(2017)\citenamefont
  {Kandala}, \citenamefont {Mezzacapo}, \citenamefont {Temme}, \citenamefont
  {Takita}, \citenamefont {Brink}, \citenamefont {Chow},\ and\ \citenamefont
  {Gambetta}}]{kandala2017hardware}%
  \BibitemOpen
  \bibfield  {author} {\bibinfo {author} {\bibfnamefont {A.}~\bibnamefont
  {Kandala}}, \bibinfo {author} {\bibfnamefont {A.}~\bibnamefont {Mezzacapo}},
  \bibinfo {author} {\bibfnamefont {K.}~\bibnamefont {Temme}}, \bibinfo
  {author} {\bibfnamefont {M.}~\bibnamefont {Takita}}, \bibinfo {author}
  {\bibfnamefont {M.}~\bibnamefont {Brink}}, \bibinfo {author} {\bibfnamefont
  {J.~M.}\ \bibnamefont {Chow}},\ and\ \bibinfo {author} {\bibfnamefont
  {J.~M.}\ \bibnamefont {Gambetta}},\ }\bibfield  {title} {\bibinfo {title}
  {Hardware-efficient variational quantum eigensolver for small molecules and
  quantum magnets},\ }\href {https://doi.org/10.1038/nature23879} {\bibfield
  {journal} {\bibinfo  {journal} {Nature}\ }\textbf {\bibinfo {volume} {549}},\
  \bibinfo {pages} {242} (\bibinfo {year} {2017})}\BibitemShut {NoStop}%
\bibitem [{\citenamefont {DiVincenzo}(1995)}]{divincenzo1995two}%
  \BibitemOpen
  \bibfield  {author} {\bibinfo {author} {\bibfnamefont {D.~P.}\ \bibnamefont
  {DiVincenzo}},\ }\bibfield  {title} {\bibinfo {title} {Two-bit gates are
  universal for quantum computation},\ }\href
  {https://doi.org/https://doi.org/10.1103/PhysRevA.51.1015} {\bibfield
  {journal} {\bibinfo  {journal} {Physical Review A}\ }\textbf {\bibinfo
  {volume} {51}},\ \bibinfo {pages} {1015} (\bibinfo {year}
  {1995})}\BibitemShut {NoStop}%
\bibitem [{\citenamefont {Lloyd}(1995)}]{lloyd1995almost}%
  \BibitemOpen
  \bibfield  {author} {\bibinfo {author} {\bibfnamefont {S.}~\bibnamefont
  {Lloyd}},\ }\bibfield  {title} {\bibinfo {title} {Almost any quantum logic
  gate is universal},\ }\href {https://doi.org/10.1103/PhysRevLett.75.346}
  {\bibfield  {journal} {\bibinfo  {journal} {Physical Review Letters}\
  }\textbf {\bibinfo {volume} {75}},\ \bibinfo {pages} {346} (\bibinfo {year}
  {1995})}\BibitemShut {NoStop}%
\bibitem [{\citenamefont {Sim}\ \emph {et~al.}(2019)\citenamefont {Sim},
  \citenamefont {Johnson},\ and\ \citenamefont
  {Aspuru-Guzik}}]{sim2019expressibility}%
  \BibitemOpen
  \bibfield  {author} {\bibinfo {author} {\bibfnamefont {S.}~\bibnamefont
  {Sim}}, \bibinfo {author} {\bibfnamefont {P.~D.}\ \bibnamefont {Johnson}},\
  and\ \bibinfo {author} {\bibfnamefont {A.}~\bibnamefont {Aspuru-Guzik}},\
  }\bibfield  {title} {\bibinfo {title} {Expressibility and entangling
  capability of parameterized quantum circuits for hybrid quantum-classical
  algorithms},\ }\href {https://doi.org/10.1002/qute.201900070} {\bibfield
  {journal} {\bibinfo  {journal} {Advanced Quantum Technologies}\ }\textbf
  {\bibinfo {volume} {2}},\ \bibinfo {pages} {1900070} (\bibinfo {year}
  {2019})}\BibitemShut {NoStop}%
\bibitem [{\citenamefont {Zeier}\ and\ \citenamefont
  {Schulte-Herbr{\"u}ggen}(2011)}]{zeier2011symmetry}%
  \BibitemOpen
  \bibfield  {author} {\bibinfo {author} {\bibfnamefont {R.}~\bibnamefont
  {Zeier}}\ and\ \bibinfo {author} {\bibfnamefont {T.}~\bibnamefont
  {Schulte-Herbr{\"u}ggen}},\ }\bibfield  {title} {\bibinfo {title} {Symmetry
  principles in quantum systems theory},\ }\href
  {https://doi.org/https://doi.org/10.1063/1.3657939} {\bibfield  {journal}
  {\bibinfo  {journal} {Journal of mathematical physics}\ }\textbf {\bibinfo
  {volume} {52}},\ \bibinfo {pages} {113510} (\bibinfo {year}
  {2011})}\BibitemShut {NoStop}%
\bibitem [{\citenamefont {Schirmer}\ \emph {et~al.}(2003)\citenamefont
  {Schirmer}, \citenamefont {Pullen},\ and\ \citenamefont
  {Solomon}}]{schirmer2003controllability}%
  \BibitemOpen
  \bibfield  {author} {\bibinfo {author} {\bibfnamefont {S.~G.}\ \bibnamefont
  {Schirmer}}, \bibinfo {author} {\bibfnamefont {I.~C.}\ \bibnamefont
  {Pullen}},\ and\ \bibinfo {author} {\bibfnamefont {A.~I.}\ \bibnamefont
  {Solomon}},\ }\bibfield  {title} {\bibinfo {title} {Controllability of
  quantum systems},\ }\href {https://doi.org/10.1016/S1474-6670(17)38905-X}
  {\bibfield  {journal} {\bibinfo  {journal} {IFAC Proceedings Volumes}\
  }\textbf {\bibinfo {volume} {36}},\ \bibinfo {pages} {281} (\bibinfo {year}
  {2003})}\BibitemShut {NoStop}%
\bibitem [{\citenamefont {Wang}\ \emph {et~al.}(2016)\citenamefont {Wang},
  \citenamefont {Burgarth},\ and\ \citenamefont {Schirmer}}]{wang2016subspace}%
  \BibitemOpen
  \bibfield  {author} {\bibinfo {author} {\bibfnamefont {X.}~\bibnamefont
  {Wang}}, \bibinfo {author} {\bibfnamefont {D.}~\bibnamefont {Burgarth}},\
  and\ \bibinfo {author} {\bibfnamefont {S.}~\bibnamefont {Schirmer}},\
  }\bibfield  {title} {\bibinfo {title} {Subspace controllability of spin-1 2
  chains with symmetries},\ }\href
  {https://doi.org/https://doi.org/10.1103/PhysRevA.94.052319} {\bibfield
  {journal} {\bibinfo  {journal} {Physical Review A}\ }\textbf {\bibinfo
  {volume} {94}},\ \bibinfo {pages} {052319} (\bibinfo {year}
  {2016})}\BibitemShut {NoStop}%
\bibitem [{\citenamefont {Sagan}(2001)}]{sagan2001symmetric}%
  \BibitemOpen
  \bibfield  {author} {\bibinfo {author} {\bibfnamefont {B.}~\bibnamefont
  {Sagan}},\ }\href {https://link.springer.com/book/10.1007/978-1-4757-6804-6}
  {\emph {\bibinfo {title} {The symmetric group: representations, combinatorial
  algorithms, and symmetric functions}}},\ Vol.\ \bibinfo {volume} {203}\
  (\bibinfo  {publisher} {Springer Science \& Business Media},\ \bibinfo {year}
  {2001})\BibitemShut {NoStop}%
\bibitem [{\citenamefont {West}\ \emph {et~al.}(2023)\citenamefont {West},
  \citenamefont {Sevior},\ and\ \citenamefont {Usman}}]{west2023reflection}%
  \BibitemOpen
  \bibfield  {author} {\bibinfo {author} {\bibfnamefont {M.~T.}\ \bibnamefont
  {West}}, \bibinfo {author} {\bibfnamefont {M.}~\bibnamefont {Sevior}},\ and\
  \bibinfo {author} {\bibfnamefont {M.}~\bibnamefont {Usman}},\ }\bibfield
  {title} {\bibinfo {title} {Reflection equivariant quantum neural networks for
  enhanced image classification},\ }\href
  {https://iopscience.iop.org/article/10.1088/2632-2153/acf096} {\bibfield
  {journal} {\bibinfo  {journal} {Machine Learning: Science and Technology}\
  }\textbf {\bibinfo {volume} {4}},\ \bibinfo {pages} {035027} (\bibinfo {year}
  {2023})}\BibitemShut {NoStop}%
\bibitem [{\citenamefont {East}\ \emph {et~al.}(2023)\citenamefont {East},
  \citenamefont {Alonso-Linaje},\ and\ \citenamefont {Park}}]{east2023all}%
  \BibitemOpen
  \bibfield  {author} {\bibinfo {author} {\bibfnamefont {R.~D.}\ \bibnamefont
  {East}}, \bibinfo {author} {\bibfnamefont {G.}~\bibnamefont
  {Alonso-Linaje}},\ and\ \bibinfo {author} {\bibfnamefont {C.-Y.}\
  \bibnamefont {Park}},\ }\bibfield  {title} {\bibinfo {title} {All you need is
  spin: Su (2) equivariant variational quantum circuits based on spin
  networks},\ }\href {https://arxiv.org/abs/2309.07250} {\bibfield  {journal}
  {\bibinfo  {journal} {arXiv preprint arXiv:2309.07250}\ } (\bibinfo {year}
  {2023})}\BibitemShut {NoStop}%
\bibitem [{\citenamefont {Chang}\ \emph {et~al.}(2023)\citenamefont {Chang},
  \citenamefont {Grossi}, \citenamefont {Le~Saux},\ and\ \citenamefont
  {Vallecorsa}}]{Chang2023Approximately}%
  \BibitemOpen
  \bibfield  {author} {\bibinfo {author} {\bibfnamefont {S.~Y.}\ \bibnamefont
  {Chang}}, \bibinfo {author} {\bibfnamefont {M.}~\bibnamefont {Grossi}},
  \bibinfo {author} {\bibfnamefont {B.}~\bibnamefont {Le~Saux}},\ and\ \bibinfo
  {author} {\bibfnamefont {S.}~\bibnamefont {Vallecorsa}},\ }\bibfield  {title}
  {\bibinfo {title} {Approximately equivariant quantum neural network for p4m
  group symmetries in images},\ }in\ \href
  {https://doi.org/10.1109/QCE57702.2023.00033} {\emph {\bibinfo {booktitle}
  {2023 IEEE International Conference on Quantum Computing and Engineering
  (QCE)}}},\ Vol.~\bibinfo {volume} {01}\ (\bibinfo {year} {2023})\ pp.\
  \bibinfo {pages} {229--235}\BibitemShut {NoStop}%
\bibitem [{\citenamefont {Zheng}\ \emph
  {et~al.}(2023{\natexlab{a}})\citenamefont {Zheng}, \citenamefont {Kang},
  \citenamefont {Ravi}, \citenamefont {Wang}, \citenamefont {Setia},
  \citenamefont {Chong},\ and\ \citenamefont {Liu}}]{zheng2023sncqa}%
  \BibitemOpen
  \bibfield  {author} {\bibinfo {author} {\bibfnamefont {H.}~\bibnamefont
  {Zheng}}, \bibinfo {author} {\bibfnamefont {C.}~\bibnamefont {Kang}},
  \bibinfo {author} {\bibfnamefont {G.~S.}\ \bibnamefont {Ravi}}, \bibinfo
  {author} {\bibfnamefont {H.}~\bibnamefont {Wang}}, \bibinfo {author}
  {\bibfnamefont {K.}~\bibnamefont {Setia}}, \bibinfo {author} {\bibfnamefont
  {F.~T.}\ \bibnamefont {Chong}},\ and\ \bibinfo {author} {\bibfnamefont
  {J.}~\bibnamefont {Liu}},\ }\bibfield  {title} {\bibinfo {title} {Sncqa: A
  hardware-efficient equivariant quantum convolutional circuit architecture},\
  }in\ \href {https://doi.org/10.1109/QCE57702.2023.00034} {\emph {\bibinfo
  {booktitle} {2023 IEEE International Conference on Quantum Computing and
  Engineering (QCE)}}},\ Vol.~\bibinfo {volume} {1}\ (\bibinfo {organization}
  {IEEE},\ \bibinfo {year} {2023})\ pp.\ \bibinfo {pages}
  {236--245}\BibitemShut {NoStop}%
\bibitem [{\citenamefont {Wierichs}\ \emph {et~al.}(2023)\citenamefont
  {Wierichs}, \citenamefont {East}, \citenamefont {Larocca}, \citenamefont
  {Cerezo},\ and\ \citenamefont {Killoran}}]{wierichs2023symmetric}%
  \BibitemOpen
  \bibfield  {author} {\bibinfo {author} {\bibfnamefont {D.}~\bibnamefont
  {Wierichs}}, \bibinfo {author} {\bibfnamefont {R.~D.}\ \bibnamefont {East}},
  \bibinfo {author} {\bibfnamefont {M.}~\bibnamefont {Larocca}}, \bibinfo
  {author} {\bibfnamefont {M.}~\bibnamefont {Cerezo}},\ and\ \bibinfo {author}
  {\bibfnamefont {N.}~\bibnamefont {Killoran}},\ }\bibfield  {title} {\bibinfo
  {title} {Symmetric derivatives of parametrized quantum circuits},\ }\href
  {https://arxiv.org/abs/2312.06752} {\bibfield  {journal} {\bibinfo  {journal}
  {arXiv preprint arXiv:2312.06752}\ } (\bibinfo {year} {2023})}\BibitemShut
  {NoStop}%
\bibitem [{\citenamefont {Le}\ \emph {et~al.}(2023)\citenamefont {Le},
  \citenamefont {Kiss}, \citenamefont {Schuhmacher}, \citenamefont
  {Tavernelli},\ and\ \citenamefont {Tacchino}}]{le2023symmetry}%
  \BibitemOpen
  \bibfield  {author} {\bibinfo {author} {\bibfnamefont {I.~N.~M.}\
  \bibnamefont {Le}}, \bibinfo {author} {\bibfnamefont {O.}~\bibnamefont
  {Kiss}}, \bibinfo {author} {\bibfnamefont {J.}~\bibnamefont {Schuhmacher}},
  \bibinfo {author} {\bibfnamefont {I.}~\bibnamefont {Tavernelli}},\ and\
  \bibinfo {author} {\bibfnamefont {F.}~\bibnamefont {Tacchino}},\ }\bibfield
  {title} {\bibinfo {title} {Symmetry-invariant quantum machine learning force
  fields},\ }\href {https://arxiv.org/abs/2311.11362} {\bibfield  {journal}
  {\bibinfo  {journal} {arXiv preprint arXiv:2311.11362}\ } (\bibinfo {year}
  {2023})}\BibitemShut {NoStop}%
\bibitem [{\citenamefont {Dong}\ \emph {et~al.}(2024)\citenamefont {Dong},
  \citenamefont {Comajoan~Cara}, \citenamefont {Dahale}, \citenamefont
  {Forestano}, \citenamefont {Gleyzer}, \citenamefont {Justice}, \citenamefont
  {Kong}, \citenamefont {Magorsch}, \citenamefont {Matchev}, \citenamefont
  {Matcheva} \emph {et~al.}}]{dong2024Z2}%
  \BibitemOpen
  \bibfield  {author} {\bibinfo {author} {\bibfnamefont {Z.}~\bibnamefont
  {Dong}}, \bibinfo {author} {\bibfnamefont {M.}~\bibnamefont {Comajoan~Cara}},
  \bibinfo {author} {\bibfnamefont {G.~R.}\ \bibnamefont {Dahale}}, \bibinfo
  {author} {\bibfnamefont {R.~T.}\ \bibnamefont {Forestano}}, \bibinfo {author}
  {\bibfnamefont {S.}~\bibnamefont {Gleyzer}}, \bibinfo {author} {\bibfnamefont
  {D.}~\bibnamefont {Justice}}, \bibinfo {author} {\bibfnamefont
  {K.}~\bibnamefont {Kong}}, \bibinfo {author} {\bibfnamefont {T.}~\bibnamefont
  {Magorsch}}, \bibinfo {author} {\bibfnamefont {K.~T.}\ \bibnamefont
  {Matchev}}, \bibinfo {author} {\bibfnamefont {K.}~\bibnamefont {Matcheva}},
  \emph {et~al.},\ }\bibfield  {title} {\bibinfo {title} {Z2$\times$ z2
  equivariant quantum neural networks: Benchmarking against classical neural
  networks},\ }\href {https://doi.org/10.3390/axioms13030188} {\bibfield
  {journal} {\bibinfo  {journal} {Axioms}\ }\textbf {\bibinfo {volume} {13}},\
  \bibinfo {pages} {188} (\bibinfo {year} {2024})}\BibitemShut {NoStop}%
\bibitem [{\citenamefont {Forestano}\ \emph {et~al.}(2024)\citenamefont
  {Forestano}, \citenamefont {Comajoan~Cara}, \citenamefont {Dahale},
  \citenamefont {Dong}, \citenamefont {Gleyzer}, \citenamefont {Justice},
  \citenamefont {Kong}, \citenamefont {Magorsch}, \citenamefont {Matchev},
  \citenamefont {Matcheva} \emph {et~al.}}]{forestano2024comparison}%
  \BibitemOpen
  \bibfield  {author} {\bibinfo {author} {\bibfnamefont {R.~T.}\ \bibnamefont
  {Forestano}}, \bibinfo {author} {\bibfnamefont {M.}~\bibnamefont
  {Comajoan~Cara}}, \bibinfo {author} {\bibfnamefont {G.~R.}\ \bibnamefont
  {Dahale}}, \bibinfo {author} {\bibfnamefont {Z.}~\bibnamefont {Dong}},
  \bibinfo {author} {\bibfnamefont {S.}~\bibnamefont {Gleyzer}}, \bibinfo
  {author} {\bibfnamefont {D.}~\bibnamefont {Justice}}, \bibinfo {author}
  {\bibfnamefont {K.}~\bibnamefont {Kong}}, \bibinfo {author} {\bibfnamefont
  {T.}~\bibnamefont {Magorsch}}, \bibinfo {author} {\bibfnamefont {K.~T.}\
  \bibnamefont {Matchev}}, \bibinfo {author} {\bibfnamefont {K.}~\bibnamefont
  {Matcheva}}, \emph {et~al.},\ }\bibfield  {title} {\bibinfo {title} {A
  comparison between invariant and equivariant classical and quantum graph
  neural networks},\ }\href {https://doi.org/10.3390/axioms13030160} {\bibfield
   {journal} {\bibinfo  {journal} {Axioms}\ }\textbf {\bibinfo {volume} {13}},\
  \bibinfo {pages} {160} (\bibinfo {year} {2024})}\BibitemShut {NoStop}%
\bibitem [{\citenamefont {T{\"u}ys{\"u}z}\ \emph {et~al.}(2024)\citenamefont
  {T{\"u}ys{\"u}z}, \citenamefont {Chang}, \citenamefont {Demidik},
  \citenamefont {Jansen}, \citenamefont {Vallecorsa},\ and\ \citenamefont
  {Grossi}}]{tuysuz2024symmetry}%
  \BibitemOpen
  \bibfield  {author} {\bibinfo {author} {\bibfnamefont {C.}~\bibnamefont
  {T{\"u}ys{\"u}z}}, \bibinfo {author} {\bibfnamefont {S.~Y.}\ \bibnamefont
  {Chang}}, \bibinfo {author} {\bibfnamefont {M.}~\bibnamefont {Demidik}},
  \bibinfo {author} {\bibfnamefont {K.}~\bibnamefont {Jansen}}, \bibinfo
  {author} {\bibfnamefont {S.}~\bibnamefont {Vallecorsa}},\ and\ \bibinfo
  {author} {\bibfnamefont {M.}~\bibnamefont {Grossi}},\ }\bibfield  {title}
  {\bibinfo {title} {Symmetry breaking in geometric quantum machine learning in
  the presence of noise},\ }\href {https://arxiv.org/abs/2401.10293} {\bibfield
   {journal} {\bibinfo  {journal} {arXiv preprint arXiv:2401.10293}\ }
  (\bibinfo {year} {2024})}\BibitemShut {NoStop}%
\bibitem [{\citenamefont {Garc{\'i}a-Mart{\'i}n}\ \emph
  {et~al.}(2024)\citenamefont {Garc{\'i}a-Mart{\'i}n}, \citenamefont
  {Larocca},\ and\ \citenamefont {Cerezo}}]{garcia2023effects}%
  \BibitemOpen
  \bibfield  {author} {\bibinfo {author} {\bibfnamefont {D.}~\bibnamefont
  {Garc{\'i}a-Mart{\'i}n}}, \bibinfo {author} {\bibfnamefont {M.}~\bibnamefont
  {Larocca}},\ and\ \bibinfo {author} {\bibfnamefont {M.}~\bibnamefont
  {Cerezo}},\ }\bibfield  {title} {\bibinfo {title} {Effects of noise on the
  overparametrization of quantum neural networks},\ }\href
  {https://doi.org/10.1103/PhysRevResearch.6.013295} {\bibfield  {journal}
  {\bibinfo  {journal} {Phys. Rev. Res.}\ }\textbf {\bibinfo {volume} {6}},\
  \bibinfo {pages} {013295} (\bibinfo {year} {2024})}\BibitemShut {NoStop}%
\bibitem [{\citenamefont {Wang}\ \emph {et~al.}(2021)\citenamefont {Wang},
  \citenamefont {Fontana}, \citenamefont {Cerezo}, \citenamefont {Sharma},
  \citenamefont {Sone}, \citenamefont {Cincio},\ and\ \citenamefont
  {Coles}}]{wang2020noise}%
  \BibitemOpen
  \bibfield  {author} {\bibinfo {author} {\bibfnamefont {S.}~\bibnamefont
  {Wang}}, \bibinfo {author} {\bibfnamefont {E.}~\bibnamefont {Fontana}},
  \bibinfo {author} {\bibfnamefont {M.}~\bibnamefont {Cerezo}}, \bibinfo
  {author} {\bibfnamefont {K.}~\bibnamefont {Sharma}}, \bibinfo {author}
  {\bibfnamefont {A.}~\bibnamefont {Sone}}, \bibinfo {author} {\bibfnamefont
  {L.}~\bibnamefont {Cincio}},\ and\ \bibinfo {author} {\bibfnamefont {P.~J.}\
  \bibnamefont {Coles}},\ }\bibfield  {title} {\bibinfo {title} {Noise-induced
  barren plateaus in variational quantum algorithms},\ }\href
  {https://doi.org/10.1038/s41467-021-27045-6} {\bibfield  {journal} {\bibinfo
  {journal} {Nature {C}ommunications}\ }\textbf {\bibinfo {volume} {12}},\
  \bibinfo {pages} {1} (\bibinfo {year} {2021})}\BibitemShut {NoStop}%
\bibitem [{\citenamefont {Cerezo}\ \emph {et~al.}(2023)\citenamefont {Cerezo},
  \citenamefont {Larocca}, \citenamefont {Garc{\'\i}a-Mart{\'\i}n},
  \citenamefont {Diaz}, \citenamefont {Braccia}, \citenamefont {Fontana},
  \citenamefont {Rudolph}, \citenamefont {Bermejo}, \citenamefont {Ijaz},
  \citenamefont {Thanasilp} \emph {et~al.}}]{cerezo2023does}%
  \BibitemOpen
  \bibfield  {author} {\bibinfo {author} {\bibfnamefont {M.}~\bibnamefont
  {Cerezo}}, \bibinfo {author} {\bibfnamefont {M.}~\bibnamefont {Larocca}},
  \bibinfo {author} {\bibfnamefont {D.}~\bibnamefont
  {Garc{\'\i}a-Mart{\'\i}n}}, \bibinfo {author} {\bibfnamefont {N.~L.}\
  \bibnamefont {Diaz}}, \bibinfo {author} {\bibfnamefont {P.}~\bibnamefont
  {Braccia}}, \bibinfo {author} {\bibfnamefont {E.}~\bibnamefont {Fontana}},
  \bibinfo {author} {\bibfnamefont {M.~S.}\ \bibnamefont {Rudolph}}, \bibinfo
  {author} {\bibfnamefont {P.}~\bibnamefont {Bermejo}}, \bibinfo {author}
  {\bibfnamefont {A.}~\bibnamefont {Ijaz}}, \bibinfo {author} {\bibfnamefont
  {S.}~\bibnamefont {Thanasilp}}, \emph {et~al.},\ }\bibfield  {title}
  {\bibinfo {title} {Does provable absence of barren plateaus imply classical
  simulability? or, why we need to rethink variational quantum computing},\
  }\href {https://arxiv.org/abs/2312.09121} {\bibfield  {journal} {\bibinfo
  {journal} {arXiv preprint arXiv:2312.09121}\ } (\bibinfo {year}
  {2023})}\BibitemShut {NoStop}%
\bibitem [{\citenamefont {Goodman}\ and\ \citenamefont
  {Wallach}(2009)}]{goodman2009symmetry}%
  \BibitemOpen
  \bibfield  {author} {\bibinfo {author} {\bibfnamefont {R.}~\bibnamefont
  {Goodman}}\ and\ \bibinfo {author} {\bibfnamefont {N.~R.}\ \bibnamefont
  {Wallach}},\ }\href@noop {} {\emph {\bibinfo {title} {Symmetry,
  representations, and invariants}}},\ Vol.\ \bibinfo {volume} {255}\ (\bibinfo
   {publisher} {Springer},\ \bibinfo {year} {2009})\BibitemShut {NoStop}%
\bibitem [{\citenamefont {Serre}\ \emph {et~al.}(1977)\citenamefont {Serre}
  \emph {et~al.}}]{serre1977linear}%
  \BibitemOpen
  \bibfield  {author} {\bibinfo {author} {\bibfnamefont {J.-P.}\ \bibnamefont
  {Serre}} \emph {et~al.},\ }\href@noop {} {\emph {\bibinfo {title} {Linear
  representations of finite groups}}},\ Vol.~\bibinfo {volume} {42}\ (\bibinfo
  {publisher} {Springer},\ \bibinfo {year} {1977})\BibitemShut {NoStop}%
\bibitem [{\citenamefont {Zheng}\ \emph
  {et~al.}(2023{\natexlab{b}})\citenamefont {Zheng}, \citenamefont {Li},
  \citenamefont {Liu}, \citenamefont {Strelchuk},\ and\ \citenamefont
  {Kondor}}]{zheng2021speeding}%
  \BibitemOpen
  \bibfield  {author} {\bibinfo {author} {\bibfnamefont {H.}~\bibnamefont
  {Zheng}}, \bibinfo {author} {\bibfnamefont {Z.}~\bibnamefont {Li}}, \bibinfo
  {author} {\bibfnamefont {J.}~\bibnamefont {Liu}}, \bibinfo {author}
  {\bibfnamefont {S.}~\bibnamefont {Strelchuk}},\ and\ \bibinfo {author}
  {\bibfnamefont {R.}~\bibnamefont {Kondor}},\ }\bibfield  {title} {\bibinfo
  {title} {Speeding up learning quantum states through group equivariant
  convolutional quantum ans{\"a}tze},\ }\href
  {https://doi.org/10.1103/PRXQuantum.4.020327} {\bibfield  {journal} {\bibinfo
   {journal} {PRX Quantum}\ }\textbf {\bibinfo {volume} {4}},\ \bibinfo {pages}
  {020327} (\bibinfo {year} {2023}{\natexlab{b}})}\BibitemShut {NoStop}%
\bibitem [{\citenamefont {Goodman}\ and\ \citenamefont
  {Wallach}(2000)}]{goodman2000representations}%
  \BibitemOpen
  \bibfield  {author} {\bibinfo {author} {\bibfnamefont {R.}~\bibnamefont
  {Goodman}}\ and\ \bibinfo {author} {\bibfnamefont {N.~R.}\ \bibnamefont
  {Wallach}},\ }\href@noop {} {\emph {\bibinfo {title} {Representations and
  invariants of the classical groups}}}\ (\bibinfo  {publisher} {Cambridge
  University Press},\ \bibinfo {year} {2000})\BibitemShut {NoStop}%
\end{thebibliography}%

\makeatletter
\close@column@grid
\makeatother
\cleardoublepage
\newpage
\onecolumngrid
\appendix

\section*{Supplemental Material for ``\textit{On the universality of $S_n$-equivariant $k$-body gates}''}

Here we present the proofs and additional details for the results in the main text. First, in Supplementary Note~\ref{section:notation} we present preliminaries and review the notation used throughout this Supplementary Information. Then, in Supplementary Note~\ref{section:proof_Prop-1} we provide a proof for Proposition~1. 
In particular, we note that we will prove Theorem~1 by proving first Theorems~3 and~4 (see Supplementary Notes~\ref{sec:theo:central-projection-two-body} and~\ref{sec:thm:true-AD-algebra}). Supplementary note~\ref{sec:theo:k-body-DLA} contains the proof of Theorem~5, and Supplementary Note~\ref{sec:theo:central-projection-k-body} a proof of Theorem~6. Finally, Supplementary Note~\ref{sec:error} explains the bug in the proof of Ref.~\cite{albertini2018controllability} which claims that $\mf{g}_2$ is universal.

\renewcommand\appendixname{Supplementary Note}

\renewcommand\figurename{Supplementary Figure}
\setcounter{figure}{0}

\section{Review of Notation}
\label{section:notation}

Let us review the notation we use throughout this paper. First, we define the Pauli matrices:
\begin{equation}
    I = \begin{bmatrix} 1 & 0 \\ 0 & 1 \end{bmatrix} \quad
    X = \begin{bmatrix} 0 & 1 \\ 1 & 0 \end{bmatrix} \quad
    Y = \begin{bmatrix} 0 & -i \\ i & 0 \end{bmatrix} \quad
    Z = \begin{bmatrix} 1 & 0 \\ 0 & -1 \end{bmatrix}.
\end{equation}
Next, for Hermitian operators $H_1, H_2, H_3$, we will use the notation
\begin{equation}
    [H_1, H_2] \propto H_3\,,
\end{equation}
to signify the equation
\begin{equation}
    [iH_1, iH_2] = -2iH_3.
\end{equation}
We will use this notation to avoid writing factors of $2i$ when taking the commutators of Pauli operators. For example, we can write $[X,Y]\propto Z$ in place of $[iX,iY] = -2iZ$.

\vspace{0.5\baselineskip}

Furthermore, we will typically abuse notation and say that $H\in\mf{g}$ for a Hermitian operator $H$ and a Lie algebra $\mf{g}$, when what we really mean is $iH\in\mf{g}$, since we always deal with Lie algebras of skew-Hermitian operators unless we explicitly state otherwise.

\vspace{0.5\baselineskip}

Now we recall the definition of $S_n$-symmetrized Pauli strings.

\begin{definition}
Define $P_{(k_x,k_y,k_z)}$ to denote the sum of all distinct Pauli strings that have $k_x$ $X$ symbols, $k_y$ $Y$ symbols, and $k_z$ $Z$ symbols. Furthermore, define the \textbf{level} of this operator to be the total number of non-$I$ symbols, i.e. $k_x + k_y + k_z$.
\end{definition}

In Sup. Fig.~\ref{fig:proof_1} we present a useful visual representation for these operators in terms of \textbf{barycentric coordinates}. Here, we define the barycentric coordinates as $k_x$, $k_y$, $k_z$, and the leftover as $k_i = n-(k_x+k_y+k_z)$. This means that the operators will be assembled into a regular tetrahedron, where for each integer $0\le k\le n$, the operators at level $k$ form an equilateral triangle at elevation $k$ (see the  tetrahedral stack of spheres in Sup. Fig.~\ref{fig:proof_1}(b)).

\vspace{0.5\baselineskip}

Finally, we recall that the $P_{(k_x,k_y,k_z)}$ operators can be used to define some of the relevant algebras we care for:

\begin{align}
\begin{split}\label{eq:maximal-algebra-P-ap}
\mf{u}^{S_n}(d) = \spn_{\mbb{R}}\{iP_{(k_x,k_y,k_z)} \,|\, k_x+k_y+k_z\le n\}\,,\\
\mf{su}^{S_n}(d) = \spn_{\mbb{R}}\{iP_{(k_x,k_y,k_z)} \,|\, 0< k_x+k_y+k_z\le n\}.
\end{split}
\end{align}
Using the visual aide based on barycentric coordinates in Sup. Fig.~\ref{fig:proof_1}, we see that the dimension of $\mf{u}^{S_n}(d)$ is equal to the $(n+1)^\text{th}$ tetrahedral number $\text{Te}_{n+1} = \binom{n+3}{3}$. There are many ways to prove the equality with the binomial coefficient, but one of the most intuitive ones is to use the ``stars and bars'' combinatorial argument. In particular, $\text{dim}\left(\mf{u}^{S_n}(d)\right)$ equals the number of ordered triples $(k_x,k_y,k_z)$ of non-negative integers such that $k_x+k_y+k_z\le n$. This equals the number of ways to line up $n+3$ objects and choose $3$ of them to be the ``boundaries'', such that $k_x$ equals the number of elements to the left of the first boundary, $k_y$ equals the number of elements between the first and second boundaries, $k_z$ equals the number of elements between the second and third boundaries (the leftover $n-(k_x+k_y+k_z)$ equals the number of elements to the right of the third boundary).

\begin{figure}[t]
    \centering
\includegraphics[width=1\columnwidth]{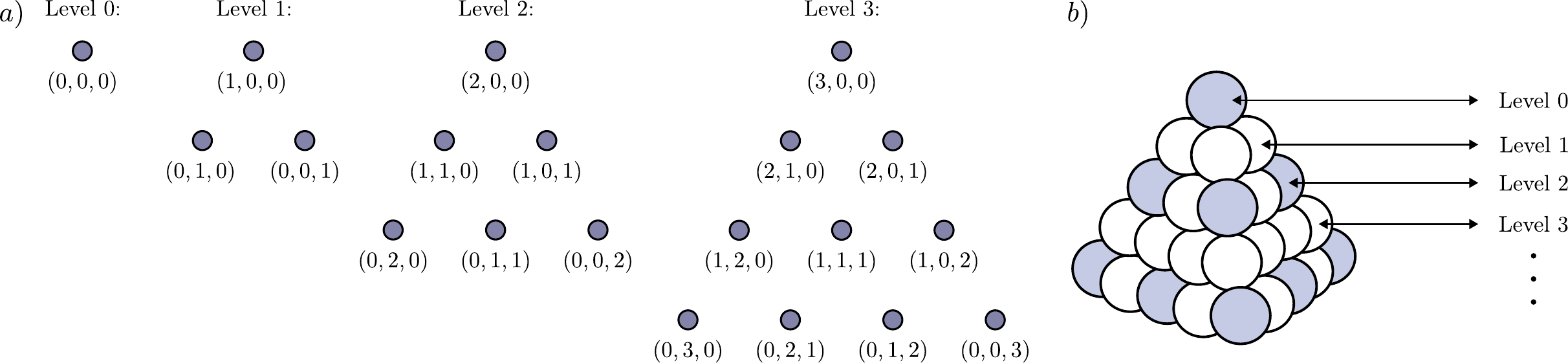}
    \caption{\textbf{Barycentric coordinates.} a) Here we show the barycentric coordinates for $n=3$, where the levels $0$, $1$, $2$ and $3$ shown separately for ease of visualization. b) One can imagine these levels stacked on top of each other to form a regular tetrahedron.
    \label{fig:proof_1}}
\end{figure}

\section{Proof of Proposition~1}\label{section:proof_Prop-1}

Here we provide a proof for Proposition~1, which we recall for convenience.
\setcounter{proposition}{0}
\begin{proposition}\label{lem:Ck-span-Sn-center-app} 
The center of the algebra $\mf{u}^{S_n}(d)$ is the span of the $C_\mu$ operators:
\begin{equation}
    \mf{z}(\mf{u}^{S_n}(d)) = \spn_{\mbb{R}}\Big\{i C_\mu \,|\, 0\le \mu\le \Big\lfloor\frac{n}{2}\Big\rfloor \Big\}.
\end{equation}
\end{proposition}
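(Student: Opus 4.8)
The plan is to exhibit the $\lfloor n/2\rfloor+1$ operators $C_0,\dots,C_{\lfloor n/2\rfloor}$ as an explicit basis of the center. Since the main text already records the dimension formula $\dim\big(\mf{z}(\mf{u}^{S_n}(d))\big)=L$, the proof reduces to three tasks: (i) each $C_\mu$ is central; (ii) the $C_\mu$ are linearly independent; and (iii) the counts match, i.e.\ $L=\lfloor n/2\rfloor+1$. For (iii) I would invoke Schur--Weyl duality: the qubit-permuting representation decomposes into $S_n$-irreps labelled by partitions of $n$ with at most two rows, namely $(n-k,k)$ for $0\le k\le\lfloor n/2\rfloor$, of which there are exactly $\lfloor n/2\rfloor+1$. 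For (ii) I would note that $C_\mu$ is supported entirely on symmetrized strings of level $2\mu$ (all $P_{(2a,2b,2c)}$ with $2a+2b+2c=2\mu$), with strictly positive coefficients, and its term $P_{(2\mu,0,0)}$ is nonzero since $2\mu\le n$. As the $P_{(k_x,k_y,k_z)}$ form a basis and distinct $C_\mu$ occupy disjoint levels, the $C_\mu$ are automatically independent.

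The crux is (i). First I would reduce centrality to a finite check: $\mf{u}^{S_n}(d)$ is the skew-Hermitian part of $\comm(S_n)$, which by Schur--Weyl is the associative algebra generated by the collective-spin operators $P_{(1,0,0)}$, $P_{(0,1,0)}$, $P_{(0,0,1)}$. Hence an operator is central in $\mf{u}^{S_n}(d)$ if and only if it commutes with these three generators (equivalently, is $\mathbb{SU}(2)$-invariant). Next I would establish, from $[X,Y]=2iZ$, $[X,Z]=-2iY$ and a multiplicity count, the elementary commutation rule
\[
[P_{(1,0,0)},P_{(k_x,k_y,k_z)}]=2i\big[(k_z+1)P_{(k_x,k_y-1,k_z+1)}-(k_y+1)P_{(k_x,k_y+1,k_z-1)}\big],
\]
together with its cyclic images under $X\to Y\to Z$. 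Applying this rule to $C_\mu$ sends each even-index term $P_{(2a,2b,2c)}$ to targets $P_{(K_x,K_y,K_z)}$ with $K_x$ even and $K_y,K_z$ odd. Each such target is reached by exactly two moves (a $Y\!\to\!Z$ move and a $Z\!\to\!Y$ move); collecting them and inserting the defining weights $(2a)!(2b)!(2c)!/(a!\,b!\,c!)$, the two contributions are seen to be equal, so they cancel. The $X\leftrightarrow Y\leftrightarrow Z$ symmetry of $C_\mu$ and the cyclic symmetry of the commutation rule then give $[P_{(0,1,0)},C_\mu]=[P_{(0,0,1)},C_\mu]=0$ identically, so $C_\mu$ is central.

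The main obstacle is precisely this cancellation. One must verify that the ratio of the two factorial weights reaching each fixed target is identically $1$; after cancelling common factors this collapses to a one-line factorial identity, and it is the only genuinely computational step. The barycentric-coordinate picture of Sup.\ Fig.~\ref{fig:proof_1} makes the bookkeeping of which $P$'s map to which targets transparent, and the specific coefficients in the definition of $C_\mu$ are exactly what is forced by requiring the cancellation. A conceptual cross-check that $C_\mu$ must be central is that it is proportional to the rotational average $\int_{S^2}\big(\sum_j \vec{v}\cdot\vec{\sigma}_j\big)^{2\mu}\,d\vec{v}$, which is manifestly invariant under simultaneous $\mathbb{SU}(2)$ rotation of all qubits. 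Once (i)--(iii) are in place, the $\lfloor n/2\rfloor+1$ linearly independent central elements $C_\mu$ span a center of the same dimension, which completes the proof.
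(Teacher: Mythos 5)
Your proof is correct, but it takes a genuinely different route from the paper's. The paper passes through Schur--Weyl duality to the \emph{permutation} side: it identifies $\mf{z}(\mf{u}^{S_n}(d))$ with the center of the algebra $\mf{S}_n$ spanned by the qubit-permutation operators, invokes the group-algebra fact that this center is spanned by conjugacy-class sums (so the operators $L_\mu=\sum_{\pi\in T_\mu}R(\pi)$, summed over products of $\mu$ disjoint transpositions, are central for free), counts two-row Young diagrams to see that the $L_\mu$ exhaust the center, and then does its real combinatorial work in expanding each $L_\mu$ in the symmetrized-Pauli basis to show $\spn\{L_\mu\}=\spn\{C_\mu\}$. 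You instead stay on the collective-spin side throughout: you reduce centrality to commutation with $P_{(1,0,0)},P_{(0,1,0)},P_{(0,0,1)}$ (which generate $\comm(S_n)$ as an associative algebra, again by Schur--Weyl), verify $[P_{(1,0,0)},C_\mu]=0$ by the pairwise cancellation --- each target $P_{(2a,2\beta+1,2\gamma+1)}$ receives exactly two contributions, of magnitudes $\frac{(2\beta+2)!\,(2\gamma+1)!}{(\beta+1)!\,\gamma!}$ and $\frac{(2\beta+1)!\,(2\gamma+2)!}{\beta!\,(\gamma+1)!}$, which are equal and enter with opposite signs --- and get linear independence immediately from the disjoint levels. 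Both routes rest on the same dimension count (partitions of $n$ with at most two rows), which you legitimately import via $\dim\mf{z}(\mf{u}^{S_n}(d))=L$ from the main text. The paper's route buys automatic centrality at the cost of a change-of-basis computation and a citation to group-algebra structure theory; your route buys a self-contained calculation that never leaves the $P_{(k_x,k_y,k_z)}$ basis and makes transparent why the coefficients $(2a)!(2b)!(2c)!/(a!\,b!\,c!)$ are forced.

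One small inaccuracy, harmless because it sits only in your non-load-bearing ``cross-check'': the rotational average $\int_{S^2}\bigl(\sum_j \vec{v}\cdot\vec{\sigma}_j\bigr)^{2\mu}d\vec{v}$ is \emph{not} proportional to $C_\mu$. Same-qubit Pauli products collapse to the identity, so the average equals $C_\mu$ only up to lower central terms; e.g.\ for $\mu=1$ it is $n\,\id+\tfrac{1}{3}C_1$. The average is still central, so the heuristic survives in a triangular (inductive) form, but the proportionality claim as stated is false. Since your actual argument is the commutator computation, this does not affect the validity of the proof.
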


Here we also recall the definition of the $C_\mu$ operators.
For each integer $0\le \mu\le\lfloor n/2\rfloor$, we have
\begin{equation}
    C_\mu = \sum_{a+b+c=\mu}\frac{(2a)!(2b)!(2c)!}{a!b!c!}P_{(2a,2b,2c)}.
\end{equation}
Each $C_\mu$ consists exclusively of $P_{(k_x,k_y,k_z)}$ terms where $k_x, k_y, k_z$ are all even and add to $2\mu$ (which also implies that each $C_\mu$ lives entirely on level $2\mu$, see Sup. Fig.~\ref{fig:proof_1}(a)).

A standard result in representation theory known as the \textit{Schur-Weyl duality}~\cite{goodman2009symmetry} states, in the context of a $n$-qudit quantum system, that
the representation $R$ of $S_n$ that permutes the qubits and the tensor product representation $T$ of $\mbb{U}(2)$ that acts locally and identically on each qubit, defined by $T(U\in \mbb{U}(2))=U\tn$, mutually centralize each other. That is,
\small
\begin{align}
&\mf{u}^{\mbb{U}(2)}(d) =  \spn_{\mbb{R}}\{iR (\pi) \,|\, \pi\in S_n\} = \mf{S}_n\\
&\mf{u}^{S_n}(d) = \spn_{\mbb{R}}\{ i(T(U)+T(U)\ad) \,|\, U\in \mbb{U}(2)\}=\mf{U}(2) \,,
\end{align}
\normalsize
where we use $\mf{S}_n$ and $\mf{U}(2)$ to denote the Lie algebras of qubit permutations and of $\mbb{U}(2)$ tensor-product rotations, respectively. Crucially, the previous equations highlight the fact that the $\mbb{U}(2)$-symmetric part of $\mf{u}(d)$ is given by qubit permutations and that the $S_n$-symmetric part of $\mf{u}(d)$ is given by $\mbb{U}(2)$ rotations. 

Instead of directly proving Proposition~\ref{lem:Ck-span-Sn-center-app} we will prove the following equivalent statement.
\begin{proposition}\label{lem:alternative}
The center of the algebra $\mf{u}^{\mbb{U}(2)}(d)$ is the span of the $C_\mu$ operators:
\begin{equation}
    \mf{z}(\mf{u}^{\mbb{U}(2)}(d)) = \spn_{\mbb{R}}\Big\{i C_\mu \,|\, 0\le \mu\le \Big\lfloor\frac{n}{2}\Big\rfloor \Big\}.
\end{equation}
\end{proposition}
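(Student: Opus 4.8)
The plan is to pin down $\mf{z}(\mf{u}^{\mbb{U}(2)}(d))$ by combining a dimension count with an explicit construction. By the isotypic decomposition of the representation $T(U)=U\tn$ of $\mbb{U}(2)$ we have $\mf{u}^{\mbb{U}(2)}(d)=\bigoplus_\lambda \id_{d_\lambda}\otimes\mf{u}(m_\lambda)$, so, as recalled in the main text, $\dim\mf{z}(\mf{u}^{\mbb{U}(2)}(d))$ equals the number $L$ of inequivalent $\mbb{U}(2)$-irreps appearing in $T$. For the $n$-qubit tensor representation these are labelled by the total spin $j$, whence $L=\lfloor n/2\rfloor+1$, which is exactly the number of operators $C_\mu$, $0\le\mu\le\lfloor n/2\rfloor$. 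It therefore suffices to prove that (i) each $iC_\mu$ lies in the center, and (ii) the $C_\mu$ are linearly independent; the claim then follows.

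For (i), note that an element lies in $\mf{z}(\mf{u}^{\mbb{U}(2)}(d))$ iff it both belongs to $\mf{u}^{\mbb{U}(2)}(d)$ (i.e.\ commutes with $T(U)=U\tn$ for all $U$) and commutes with all of $\mf{u}^{\mbb{U}(2)}(d)=\mf{S}_n$ (i.e.\ with every permutation $R(\pi)$). The second condition is immediate: $C_\mu$ is a sum of fully $S_n$-symmetrized Pauli strings, so $R(\pi)C_\mu R(\pi)\dg=C_\mu$ for all $\pi$. The substance of the proof is the first condition, the $\mbb{U}(2)$-invariance of each $C_\mu$.

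To establish it I would introduce the generating function $W(\vec r)=\prod_{j=1}^n\left(I+r_x X_j+r_y Y_j+r_z Z_j\right)$ for $\vec r=(r_x,r_y,r_z)\in\mbb{R}^3$, whose expansion is $W(\vec r)=\sum_{k_x,k_y,k_z}r_x^{k_x}r_y^{k_y}r_z^{k_z}\,P_{(k_x,k_y,k_z)}$. Since conjugation by $U\in\mathrm{SU}(2)$ rotates the single-qubit vector, $U(r_x X+r_y Y+r_z Z)U\dg=(R_U\vec r)\cdot\vec\sigma$ with $R_U\in\mathrm{SO}(3)$, this generating function is covariant, $U\tn W(\vec r)(U\dg)\tn=W(R_U\vec r)$. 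Averaging over the Haar measure then gives a manifestly rotationally invariant operator $\overline W(\vec r)=\int dU\,U\tn W(\vec r)(U\dg)\tn=\int dR\,W(R\vec r)$, which depends only on $\rho=|\vec r|$. Carrying out the spherical average term by term, the monomial moment $\int_{S^2}u_x^{k_x}u_y^{k_y}u_z^{k_z}\,d\vec u$ vanishes unless $k_x=2a$, $k_y=2b$, $k_z=2c$ are all even, in which case it equals $\tfrac{(2a)!(2b)!(2c)!}{2^\mu(2\mu+1)!!\,a!b!c!}$ with $\mu=a+b+c$. Collecting powers of $\rho$ yields $\overline W(\vec r)=\sum_{\mu=0}^{\lfloor n/2\rfloor}\tfrac{\rho^{2\mu}}{2^\mu(2\mu+1)!!}\,C_\mu$, so each $C_\mu$ appears (up to a positive constant) as the coefficient of $\rho^{2\mu}$ in a rotationally invariant operator. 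Because $\overline W$ is invariant for every $\rho$ and the functions $\rho^{2\mu}$ are linearly independent, each $C_\mu$ must itself be rotationally invariant, i.e.\ $[C_\mu,U\tn]=0$ for all $U$. This is the crux and the main obstacle of the argument, and it is precisely here that the otherwise mysterious coefficients $\tfrac{(2a)!(2b)!(2c)!}{a!b!c!}$ get explained: they are the (rescaled) moments of coordinate monomials over $S^2$.

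Finally, for (ii), each $C_\mu$ is a nonzero combination of the operators $P_{(2a,2b,2c)}$ living at level $2\mu$, and since the $P_{(k_x,k_y,k_z)}$ on distinct levels are linearly independent, so are the $C_\mu$. We thus exhibit $\lfloor n/2\rfloor+1$ linearly independent central elements inside a center of dimension $\lfloor n/2\rfloor+1$, forcing $\mf{z}(\mf{u}^{\mbb{U}(2)}(d))=\spn_{\mbb{R}}\{iC_\mu\,|\,0\le\mu\le\lfloor n/2\rfloor\}$. The equivalence with Proposition~\ref{lem:Ck-span-Sn-center-app} follows because, by Schur--Weyl duality, both centers coincide with the operators commuting simultaneously with $R(S_n)$ and with $T(\mbb{U}(2))$.
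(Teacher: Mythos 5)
Your proof is correct, but it takes a genuinely different route from the paper's. Both arguments share the same skeleton---count the dimension of the center and then exhibit $\lfloor n/2\rfloor+1$ linearly independent central elements---but they execute it on opposite sides of the Schur--Weyl duality. The paper works on the $S_n$ side: it identifies $\mf{u}^{\mbb{U}(2)}(d)$ with the span $\mf{S}_n$ of the permutation operators, invokes the classical fact that the center of a group-representation algebra is spanned by conjugacy-class sums, counts the center's dimension via Young diagrams with at most two rows, and then establishes that the $C_\mu$ span the center only indirectly, through an explicit combinatorial expansion of the transposition-class sums $L_\mu=\sum_{\pi\in T_\mu}R(\pi)$ in the symmetrized Pauli basis, which exhibits a triangular (hence invertible) change of basis between the $L_\mu$ and the $C_\mu$. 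You instead work on the $\mbb{U}(2)$ side: your dimension count comes from the number of total-spin sectors of $(\mbb{C}^2)\tn$ (via the general fact $\dim\mf{z}(\mf{u}^{\mf{G}}(d))=L$ recalled in the main text), and centrality of each $C_\mu$ is proved directly by Haar-averaging the generating function $W(\vec r)=\prod_{j}(I+r_xX_j+r_yY_j+r_zZ_j)$ and isolating the coefficient of $\rho^{2\mu}$; your spherical moment formula is correct (it reproduces $1/3$, $1/5$, $1/15$ for $\langle u_x^2\rangle$, $\langle u_x^4\rangle$, $\langle u_x^2u_y^2\rangle$), and the polynomial-in-$\rho$ identity argument legitimately yields $[C_\mu,U\tn]=0$ for each $\mu$ separately. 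What each approach buys: your averaging argument explains the otherwise opaque coefficients $(2a)!(2b)!(2c)!/(a!b!c!)$ as sphere moments and avoids the combinatorics of pairing qubits into transpositions, while the paper's route produces the explicit $L_\mu$--$C_\mu$ relation, tying the $C_\mu$ to conjugacy classes of $S_n$, which is the form in which these operators appear in the related literature. One point worth stating explicitly in your write-up: the step identifying ``commutes with all of $\mf{u}^{\mbb{U}(2)}(d)$'' with ``commutes with every $R(\pi)$'' is itself Schur--Weyl duality, so your proof does not bypass the duality---it simply invokes it at a different place than the paper does.
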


Crucially, Proposition~\ref{lem:alternative} and Proposition~\ref{lem:Ck-span-Sn-center-app} are equivalent since being mutual centralizers, their centers must necessarily coincide: $\mf{z}(\mf{u}^{S_n}(d))=\mf{z}(\mf{u}^{\mbb{U}(2)}(d))$ . Hence, proving Proposition~\ref{lem:alternative} is the same as proving Proposition~\ref{lem:Ck-span-Sn-center-app}.

We begin by citing a general characterization of the center of any algebra defined by a finite group representation:

\begin{lemma}
\label{lem:center-spanned-by-conj-class-sums}
Consider a (Hermitian unitary) representation $S$ of a finite group $G\subseteq S_n$, and consider the algebra $\mf{h} = \spn\{iS(g) \,|\, g\in G\}$. The center of this algebra, denoted $\mf{z}(\mf{h})$, is exactly the span of the conjugacy-class sums of $G$. More precisely, if $C_1,\cdots,C_p$ are the conjugacy classes of $G$, and $S(C_j) = \sum_{g\in C_j}S(g)$ are the conjugacy-class sums, then
\begin{equation}
    \mf{z}(\mf{h}) = \spn\{iS(C_j) \,|\, 1\le j\le p\}.
\end{equation}
\end{lemma}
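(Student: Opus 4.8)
The plan is to prove Lemma~\ref{lem:center-spanned-by-conj-class-sums} by showing the two inclusions between $\mf{z}(\mf{h})$ and the span of the conjugacy-class sums. The key structural fact is that $\mf{h}$ is spanned (over $\mbb{R}$, as skew-Hermitian operators $iS(g)$) by the group elements $S(g)$, so an element $iM\in\mf{h}$ can be written as $M = \sum_{g\in G}\alpha_g S(g)$ for real coefficients $\alpha_g$. The center of $\mf{h}$ consists of those $iM$ that commute with every $iS(h)$, equivalently those $M$ commuting with $S(h)$ for all $h\in G$. First I would translate this commutation condition into a condition on the coefficients $\alpha_g$, exploiting that $S$ is a representation so $S(h)S(g)S(h)^{-1} = S(hgh^{-1})$, which reshuffles the group elements by conjugation.

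The core computation runs as follows. Requiring $S(h) M S(h)^{-1} = M$ for all $h$ gives, after relabeling the summation index $g\mapsto hgh^{-1}$,
\begin{equation}
\sum_{g\in G}\alpha_{h^{-1}gh}\,S(g) = \sum_{g\in G}\alpha_g\,S(g)\,.
\end{equation}
Here I would need the linear independence of the operators $\{S(g)\}_{g\in G}$ to conclude that the coefficients match term by term, yielding $\alpha_{h^{-1}gh} = \alpha_g$ for all $g,h\in G$. This says precisely that $g\mapsto\alpha_g$ is constant on each conjugacy class. Consequently $M$ is a real linear combination of the conjugacy-class sums $S(C_j)=\sum_{g\in C_j}S(g)$, giving the inclusion $\mf{z}(\mf{h})\subseteq\spn\{iS(C_j)\}$. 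For the reverse inclusion, each $iS(C_j)$ lies in $\mf{h}$ by construction, and the relabeling argument run in reverse shows $S(h)S(C_j)S(h)^{-1}=S(C_j)$ for every $h$, so each class sum commutes with all $S(h)$ and hence with all of $\mf{h}$; therefore $iS(C_j)\in\mf{z}(\mf{h})$.

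I would expect the main obstacle to be the linear-independence issue. For a faithful representation the operators $\{S(g)\}$ need not be linearly independent in general (the regular representation is faithful and has them independent, but an arbitrary representation collapses relations), so one cannot blindly match coefficients. The clean route is to argue at the level of the abstract group algebra: the conjugation-invariance condition $\alpha_{h^{-1}gh}=\alpha_g$ is sufficient to land in the class-sum span regardless of independence, since any $M$ of that form is visibly a combination of class sums; and conversely any central element can be taken (by averaging the coefficient function over conjugacy classes) to have class-constant coefficients without changing $M$. Thus I would phrase the argument so that linear independence is not actually required: the forward direction uses only that conjugation-invariant coefficient functions produce class-sum combinations, and the backward direction uses only the representation property. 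This sidesteps the subtlety entirely. Finally, since the claimed statement is the standard fact that the center of the group algebra $\mbb{C}[G]$ (here its image under $S$) is spanned by class sums, I would remark that this identification is exactly what lets us apply it, via Schur--Weyl duality, to $G=S_n$ with $S=R$, matching conjugacy classes of $S_n$ to cycle types and thereby to the $C_\mu$ operators in the subsequent proof of Proposition~\ref{lem:alternative}.
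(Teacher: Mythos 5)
Your proof is correct, but it takes a genuinely different route from the paper: the paper does not prove this lemma at all, it simply cites Serre's \emph{Linear Representations of Finite Groups} (Proposition 6 in Section 2.5 on isotypic projections, and Proposition 12 in Section 6.3 on the center of the group algebra $\mathbb{C}[G]$). Your argument is a self-contained, elementary computation, and—importantly—you correctly identify and repair the one real subtlety that the bare citation glosses over. The textbook statement concerns the abstract group algebra, where the elements $g$ are linearly independent by construction; here the lemma is about the \emph{image} algebra $\mf{h}=\spn\{iS(g)\}$, and for the paper's representation of interest (the qubit-permuting $R$ on $(\mathbb{C}^2)^{\otimes n}$, whose span has dimension far below $n!$) the operators $S(g)$ are certainly dependent, so naive coefficient matching is invalid. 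Your fix is the right one and is worth stating explicitly in its strongest form: if $iM\in\mf{z}(\mf{h})$ with any expansion $M=\sum_g\alpha_g S(g)$, then centrality gives
\begin{equation}
M=\frac{1}{|G|}\sum_{h\in G}S(h)\,M\,S(h)^{-1}=\sum_{g\in G}\Bigl(\frac{1}{|G|}\sum_{h\in G}\alpha_{h^{-1}gh}\Bigr)S(g)\,,
\end{equation}
and the new coefficient function is constant on conjugacy classes, so $M$ lies in the span of the class sums; the reverse inclusion follows from the relabeling $S(h)S(C_j)S(h)^{-1}=S(C_j)$, exactly as you say. What the two approaches buy: the paper's citation is shorter and leans on standard semisimplicity/character machinery, whereas your proof is self-contained, avoids character theory entirely, and makes transparent why the statement survives the passage from $\mathbb{C}[G]$ to its (non-faithfully spanned) image under $S$—which is precisely the form in which the lemma is used later for $\mf{S}_n$ and the $L_\mu$ class sums.
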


\begin{proof}
See Ref. \cite{serre1977linear}, more specifically Proposition 6 in Section 2.5 and Proposition 12 in Section 6.3.
\end{proof}

Lemma \ref{lem:center-spanned-by-conj-class-sums} motivates us to define the following set of operators, each of which is a conjugacy-class sum within $\mf{S}_n$:

\begin{definition}
For each integer $0\le \mu\le\lfloor n/2\rfloor$, define $T_\mu$ to be the set of all permutations in the symmetric group $S_n$ that consist of exactly $\mu$ disjoint transpositions. Then define the following operator:
\begin{equation}
    L_\mu = \sum_{\pi\in T_\mu}R(\pi)\,.
\end{equation}
\end{definition}
As a quick example, consider the operator obtained from two element transpositions
\begin{align}
    L_1 &= \sum_{1\le \alpha <\beta \le n}R(\pi=(\alpha \,\beta)) \\
    &= \sum_{1\le \alpha<\beta\le n}\frac{1}{2}(I_\alpha I_\beta + X_\alpha X_\beta + Y_\alpha Y_\beta + Z_\alpha Z_\beta) \\
    &= \frac{1}{2}\binom{n}{2}P_{(0,0,0)} + \frac{1}{2}P_{(2,0,0)} + \frac{1}{2}P_{(0,2,0)} + \frac{1}{2}P_{(0,0,2)}.
\end{align}

Now, armed with Lemma~\ref{lem:center-spanned-by-conj-class-sums}, we can provide a proof for the following lemma (see also~\cite{zheng2021speeding} for an alternative proof).

\begin{lemma}
\label{lem:Lk-span-Sn-center}
The center of the algebra $\mf{S}_n$ is the span of the $L_\mu$ operators:
\begin{equation}
    \mf{z}(\mf{S}_n) = \spn\Big\{i L_\mu \,|\, 0\le \mu\le \Big\lfloor\frac{n}{2}\Big\rfloor \Big\}.
\end{equation}
\end{lemma}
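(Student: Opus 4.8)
The plan is to apply Lemma~\ref{lem:center-spanned-by-conj-class-sums} to the finite group $G = S_n$ with its permutation representation $R$. According to that lemma, the center of $\mf{S}_n = \spn\{iR(\pi)\,|\,\pi\in S_n\}$ is exactly the span of the conjugacy-class sums $R(C_j) = \sum_{\pi\in C_j}R(\pi)$. So the entire proof reduces to identifying which conjugacy classes of $S_n$ give rise to the operators $L_\mu$. The key combinatorial fact I would invoke is that the conjugacy classes of $S_n$ are in bijection with the cycle types (integer partitions of $n$), where two permutations are conjugate if and only if they have the same cycle structure. The operators $L_\mu$ correspond precisely to the cycle type consisting of $\mu$ disjoint transpositions (and $n-2\mu$ fixed points), i.e. the partition $(2^\mu, 1^{n-2\mu})$.

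First I would note that by Lemma~\ref{lem:center-spanned-by-conj-class-sums}, $\mf{z}(\mf{S}_n)$ is spanned by the images $R(C_j)$ over all conjugacy classes $C_j$ of $S_n$. This gives the center as a span indexed by all partitions of $n$, which a priori is much larger than the $\lfloor n/2\rfloor + 1$ operators $L_\mu$. So the crux is \emph{not} merely that each $L_\mu$ is central (that follows immediately since $T_\mu$ is a union of conjugacy classes — in fact a single one), but rather that these particular operators already span the \emph{full} center, i.e. the conjugacy-class sums for all \emph{other} cycle types are redundant. The mechanism for this redundancy is the mutual-centralizer relationship established by Schur-Weyl duality: since $\mf{z}(\mf{S}_n) = \mf{z}(\mf{u}^{S_n}(d)) = \mf{z}(\mf{u}^{\mbb{U}(2)}(d))$, and the latter has dimension $L$ equal to the number of irreps appearing in the $n$-qubit decomposition, I can count dimensions directly. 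The number of distinct $\mbb{U}(2)$ (equivalently $S_n$) irreps in the $n$-qubit space is exactly the number of total-spin values $j\in\{n/2, n/2-1,\ldots\}$ down to $0$ or $1/2$, which is $\lfloor n/2\rfloor + 1$. This matches the number of $L_\mu$ operators, so once I verify the $L_\mu$ are linearly independent, they must span the whole center.

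The main technical step, and the one I expect to be the real obstacle, is establishing \textbf{linear independence} of the $\{L_\mu\}_{\mu=0}^{\lfloor n/2\rfloor}$ and confirming they exhaust the center rather than being a proper subcollection. The cleanest route is the dimension count above: show $\dim\mf{z}(\mf{S}_n) = \lfloor n/2\rfloor + 1$ via the irrep count from Schur-Weyl duality, then show the $L_\mu$ are linearly independent, which forces them to be a basis. Linear independence can be argued by examining the barycentric/level structure: expanding $R(\pi)$ for $\pi$ with $\mu$ transpositions in terms of Pauli strings produces terms only at levels $0,2,4,\ldots,2\mu$ (since each transposition $(\alpha\,\beta)$ expands as $\tfrac12(I_\alpha I_\beta + X_\alpha X_\beta + Y_\alpha Y_\beta + Z_\alpha Z_\beta)$, as illustrated by the $L_1$ computation). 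The highest-level term in $L_\mu$ lives on level $2\mu$, and this top-level piece is distinct for each $\mu$, giving a triangular structure that immediately yields linear independence. A subtle point I would need to handle carefully is the correspondence between ``conjugacy-class sum'' indexing (partitions of $n$) and the fact that only the $(2^\mu,1^{n-2\mu})$ classes survive — I would argue that the conjugacy-class sums for cycle types involving cycles of length $\ge 3$ must, by the dimension count, be expressible as linear combinations of the $L_\mu$, even though each individual such sum is itself central. Alternatively, one could bypass the dimension count entirely and directly prove that the span of the $L_\mu$ equals the span of \emph{all} conjugacy-class sums, but this seems to require exactly the same representation-theoretic input, so the dimension argument is the more economical path.
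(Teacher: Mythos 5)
Your proposal is correct and follows essentially the same two-step structure as the paper's proof: centrality of each $L_\mu$ via Lemma~\ref{lem:center-spanned-by-conj-class-sums} (each $T_\mu$ is a single conjugacy class, of cycle type $(2^\mu,1^{n-2\mu})$), followed by a dimension count showing $\dim\mf{z}(\mf{S}_n)=\lfloor n/2\rfloor+1$. The two differences are minor but worth noting. First, the paper obtains the dimension count by citing the fact that the number of irreps in the qubit representation equals the number of Young diagrams with $n$ cells and at most two rows, whereas you obtain the same number by counting total-spin values on the $\mbb{U}(2)$ side of Schur--Weyl duality; these are the same fact in dual clothing, and the paper itself relies on exactly this mutual-centralizer identification ($\mf{z}(\mf{S}_n)=\mf{z}(\mf{u}^{S_n}(d))$) elsewhere in the supplement. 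Second, and to your credit, you explicitly address linear independence of the $L_\mu$ via the triangular level structure (the top-level piece of $L_\mu$ sits at level $2\mu$ with strictly positive coefficients, so no cancellation occurs); the paper's proof of this lemma passes from ``the center has dimension $\lfloor n/2\rfloor+1$ and there are $\lfloor n/2\rfloor+1$ operators $L_\mu$'' directly to ``thus they span,'' which tacitly assumes independence --- a gap that in the paper is only closed retroactively by Lemma~\ref{lem:Lk-in-terms-of-Ck}, which exhibits $L_\mu$ as a combination of $C_0,\ldots,C_\mu$ with nonzero top coefficient. Your version is therefore self-contained at this point where the paper's is not.
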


\begin{proof}
Each conjugacy class of the symmetric group is the set of all permutations of a given cycle type. Hence each $T_\mu$ is a conjugacy class. And since $L_\mu$ is a sum over all permutation operators in this conjugacy class, Lemma \ref{lem:center-spanned-by-conj-class-sums} immediately implies that $L_\mu$ commutes with all permutation operators. Hence each $L_\mu$ lies in the center of $\mf{S}_n$.

Now it remains to show that these are all of the central elements of $\mf{S}_n$. Clearly the $T_\mu$ are only a small number of the conjugacy classes of $S_n$, so Lemma~\ref{lem:center-spanned-by-conj-class-sums} will not help us with that. However, it is known by an argument involving Young diagrams that the dimension of the center of $\mf{S}_n$ is given by the number of Young diagrams with $n$ cells that have at most $2$ rows~\cite{goodman2000representations}. Such a diagram has $\mu$ cells in the second row and $n-\mu$ cells in the first row for some integer $0\le \mu\le\lfloor\frac{n}{2}\rfloor$, so there are exactly $\lfloor\frac{n}{2}\rfloor + 1$ such diagrams.  Hence, this means that there are $\lfloor\frac{n}{2}\rfloor + 1$ elements in the center which coincides with the number of $L_\mu$ operators. Thus, $L_\mu$ span the center of the algebra $\mf{S}_n$.
\end{proof}

As a brief aside, it is worth mentioning that the theorem that we cited above regarding Young diagrams depends on the fact that we are working with qubits. For example, if we were working with qutrits instead of qubits, then one would instead need to look at Young diagrams with at most $3$ rows~\cite{goodman2000representations}.

Up to this point, we have shown that the $L_\mu$ operators span $\mf{z}(\mf{S}_n)$, so all that remains is to show that the $C_\mu$ operators actually have the same span as the $L_\mu$ operators. We demonstrate this now via the following lemma.

\begin{lemma}
\label{lem:Lk-in-terms-of-Ck}
For each integer $0\le \mu\le\lfloor n/2\rfloor$,
\begin{equation}
    L_\mu = \frac{4^{-\mu}}{(n-2k)!}\sum_{a+b+c+f=\mu}\frac{(2a)!(2b)!(2c)!(n-2(\mu-f))!}{a!b!c!f!}P_{(2a,2b,2c)}.
\end{equation}
As a result, we can also write
\begin{equation}
    L_\mu = \frac{4^{-\mu}}{(n-2\mu)!}\sum_{\mu'=0}^{\mu}\frac{(n-2\mu')!}{(\mu-\mu')!}C_{\mu'}.
\end{equation}
\end{lemma}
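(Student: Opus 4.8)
The plan is to reduce everything to the Pauli decomposition of a single transposition. In the qubit‑permuting representation, a transposition $(\alpha\,\beta)$ maps to the SWAP operator on qubits $\alpha,\beta$, which satisfies
\[
R((\alpha\,\beta)) = \tfrac{1}{2}\left(I_\alpha I_\beta + X_\alpha X_\beta + Y_\alpha Y_\beta + Z_\alpha Z_\beta\right).
\]
Since every $\pi\in T_\mu$ is a product of $\mu$ disjoint transpositions, $R(\pi)$ is the product of the corresponding SWAP operators, and expanding this product assigns to each of the $\mu$ pairs one of the symbols $\{I,X,Y,Z\}$, each contributing a factor $1/2$. Thus $L_\mu=\sum_{\pi\in T_\mu}R(\pi)$ is automatically a real linear combination of the $P_{(k_x,k_y,k_z)}$, and the heart of the argument is to compute the coefficient with which a single Pauli string of type $(2a,2b,2c)$ appears.

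First I would fix a specific Pauli string $s$ carrying $X$ on a set $S_X$ of size $2a$, $Y$ on a set $S_Y$ of size $2b$, $Z$ on a set $S_Z$ of size $2c$, and identity elsewhere, and count the pairs (permutation, symbol assignment) producing $s$. Because each non‑identity symbol must arise from a transposition assigned that symbol (contributing two equal symbols), $S_X$ must be partitioned into $a$ transpositions, $S_Y$ into $b$, and $S_Z$ into $c$; the numbers of such partitions are $\frac{(2a)!}{2^a a!}$, $\frac{(2b)!}{2^b b!}$, and $\frac{(2c)!}{2^c c!}$. The remaining $f=\mu-(a+b+c)$ transpositions must be assigned identity and must be disjoint pairs drawn from the $n-2(a+b+c)$ qubits outside $S_X\cup S_Y\cup S_Z$; the number of ways to pick $f$ disjoint pairs from $m$ labeled elements is $\frac{m!}{(m-2f)!\,2^f f!}$, which with $m=n-2(\mu-f)$ and $m-2f=n-2\mu$ becomes $\frac{(n-2(\mu-f))!}{(n-2\mu)!\,2^f f!}$.

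Next I would assemble these counts and multiply by the overall factor $2^{-\mu}$ coming from the product of $1/2$'s. The powers of two in the denominators of the partition counts combine with this prefactor as $2^{-\mu}\cdot 2^{-(a+b+c+f)}=4^{-\mu}$, using $a+b+c+f=\mu$, so the coefficient of $s$ — and hence, by $S_n$‑symmetry, of $P_{(2a,2b,2c)}$ — is exactly
\[
\frac{4^{-\mu}}{(n-2\mu)!}\,\frac{(2a)!(2b)!(2c)!\,(n-2(\mu-f))!}{a!\,b!\,c!\,f!},
\]
which is the first claimed formula (correcting the evident typo $(n-2k)!\to(n-2\mu)!$). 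To obtain the second form I would reindex by $\mu'=a+b+c$, so that $f=\mu-\mu'$ and $n-2(\mu-f)=n-2\mu'$, pull out $\frac{(n-2\mu')!}{(\mu-\mu')!}$, and recognize the remaining inner sum $\sum_{a+b+c=\mu'}\frac{(2a)!(2b)!(2c)!}{a!b!c!}P_{(2a,2b,2c)}$ as precisely $C_{\mu'}$.

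The main obstacle I anticipate is the bookkeeping in the counting step, in particular the identity‑assigned transpositions: one must ensure these pairs range only over the qubits untouched by $S_X,S_Y,S_Z$, and that the factors of two arising from pairing labeled elements combine correctly with the $2^{-\mu}$ prefactor to collapse into the clean $4^{-\mu}$. Once this count is set up correctly the remainder is routine algebra, and the reindexing to $C_{\mu'}$ is immediate.
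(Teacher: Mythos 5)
Your proposal is correct and follows essentially the same route as the paper's own proof: expanding each $\pi\in T_\mu$ as a product of SWAP operators $\tfrac{1}{2}(I I+XX+YY+ZZ)$, counting the pairings of the $X$-, $Y$-, $Z$-, and identity-carrying qubits into transpositions with exactly the factors $\tfrac{(2a)!}{2^a a!}$, $\tfrac{(2b)!}{2^b b!}$, $\tfrac{(2c)!}{2^c c!}$, $\tfrac{(n-2(\mu-f))!}{2^f f!(n-2\mu)!}$, and then reindexing by $\mu'=a+b+c=\mu-f$ to recover the $C_{\mu'}$. You also correctly identified the typo $(n-2k)!\to(n-2\mu)!$ in the statement, which is consistent with the paper's derivation.
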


\begin{corollary}
\label{cor:Lk-Ck-same-span}
For each integer $0\le \mu\le\lfloor n/2\rfloor$, the $L_{\mu}$ operators and the $C_{\mu'}$ operators for $0\le \mu'\le \mu$ have the same span. In particular, the full set of $(\lfloor n/2\rfloor + 1)$ $L_\mu$ operators has the same span as the full set of $(\lfloor n/2\rfloor + 1)$ $C_\mu$ operators.
\end{corollary}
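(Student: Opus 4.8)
The plan is to read off everything from the explicit formula already furnished by Lemma~\ref{lem:Lk-in-terms-of-Ck}, which expresses each $L_\mu$ as a linear combination of $C_0,\dots,C_\mu$:
\begin{equation}
    L_\mu = \frac{4^{-\mu}}{(n-2\mu)!}\sum_{\mu'=0}^{\mu}\frac{(n-2\mu')!}{(\mu-\mu')!}C_{\mu'}.
\end{equation}
Since the right-hand side only involves $C_{\mu'}$ with $\mu'\le\mu$, this relation immediately gives the inclusion $\spn\{L_0,\dots,L_\mu\}\subseteq\spn\{C_0,\dots,C_\mu\}$ for every $\mu$, so one direction costs nothing.

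The key observation for the reverse inclusion is that the above is a \emph{triangular} change of basis. Collecting the relations into a matrix equation $\vec{L}=M\vec{C}$, where $\vec{L}=(L_0,\dots,L_{\lfloor n/2\rfloor})$ and $\vec{C}=(C_0,\dots,C_{\lfloor n/2\rfloor})$, the coefficient matrix $M$ is lower triangular precisely because $L_\mu$ contains no $C_{\mu'}$ with $\mu'>\mu$. Its diagonal entry is the coefficient of $C_\mu$ in $L_\mu$, obtained by setting $\mu'=\mu$ above, namely $M_{\mu\mu}=\frac{4^{-\mu}}{(n-2\mu)!}\cdot\frac{(n-2\mu)!}{0!}=4^{-\mu}\neq 0$. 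A lower-triangular matrix with nonzero diagonal entries is invertible, and its inverse is again lower triangular.

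Consequently, each $C_\mu$ is expressible as a linear combination of $L_0,\dots,L_\mu$, which yields $\spn\{C_0,\dots,C_\mu\}\subseteq\spn\{L_0,\dots,L_\mu\}$ for every $\mu$; together with the first inclusion this gives equality at each level $\mu$, and in particular at $\mu=\lfloor n/2\rfloor$ the full spans coincide. I do not anticipate any real obstacle here, as the argument is purely the invertibility of a triangular transition matrix; the only point demanding care is confirming that the diagonal coefficient $4^{-\mu}$ is genuinely nonzero, so that invertibility (and hence the reverse inclusion) is guaranteed.
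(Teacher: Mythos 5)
Your proof is correct and takes essentially the same route as the paper, which derives this corollary as an ``immediate consequence'' of Lemma~\ref{lem:Lk-in-terms-of-Ck}: the relation $L_\mu = \frac{4^{-\mu}}{(n-2\mu)!}\sum_{\mu'=0}^{\mu}\frac{(n-2\mu')!}{(\mu-\mu')!}C_{\mu'}$ is a triangular change of basis with nonzero diagonal entries $4^{-\mu}$, hence invertible with triangular inverse. You merely spell out explicitly the invertibility argument that the paper leaves implicit, and your verification of the diagonal coefficient is accurate.
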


\begin{proof}
The transposition $(\alpha\,\beta)$ has qubit representation $R(\pi={(\alpha\,\beta)}) = \frac{1}{2}(I_\alpha I_\beta + X_\alpha X_\beta + Y_\alpha Y_v + Z_\alpha Z_\beta)$, and an element $\pi\in T_\mu$ takes the form $\pi = (\alpha_1\,\beta_1)\cdots(\alpha_\mu\,\beta_\mu)$ for distinct $\alpha_1,\cdots,\alpha_\mu,v_1,\cdots,v_\mu\in\{1,\cdots,n\}$. Therefore,
\begin{equation}
    R(\pi) = 2^{-\mu}\prod_{j=1}^{\mu}(I + X_{\alpha_j}X_{\beta_j} + Y_{\alpha_j}Y_{\beta_j} + Z_{\alpha_j}Z_{\beta_j})\,,
\end{equation}
which after expansion implies that $R(\pi)$ consists solely of Pauli strings with an even number of $X$ symbols, an even number of $Y$ symbols, and an even number of $Z$ symbols. Hence the same is true of $L_\mu$, and since $L_\mu$ is $S_n$-invariant, we know that $L_\mu$ is a linear combination of symmetrized Pauli strings $P_{(2a,2b,2c)}$.

The coefficient of $P_{(2a,2b,2c)}$ in $L_\mu$ equals $2^{-\mu}$ times the number of permutations $\pi\in T_\mu$ for which any individual Pauli string within $P_{(2a,2b,2c)}$ appears in the expansion of $R(\pi)$. So, consider an arbitrary Pauli string with $2a$ $X$'s, $2b$ $Y$'s, $2b$ $Z$'s, and $n-2(a+b+c)$ $I$'s. The $2a$ qubits with $X$'s must be paired up into $a$ transpositions within $\pi$. The number of ways to choose how they are paired up is
\begin{equation}
    \frac{1}{a!}\binom{2a}{2}\binom{2a-2}{2}\cdots\binom{2}{2} = \frac{(2a)!}{2^aa!}.
\end{equation}
The same is true for the $2b$ qubits with $Y$'s, and the same is also true for the $2c$ qubits with $Z$'s. Hence we additionally get factors of $\frac{(2b)!}{2^bb!}$ and $\frac{(2c)!}{2^cc!}$. The remaining freedom in defining $\pi$ comes from the number of ways to pair up some of the $n-2(\mu-f)$ qubits with $I$'s into the remaining $d$ transpositions, where we have defined $f=\mu-(a+b+c)$ for convenience. The number of ways to do this is
\begin{equation}
    \frac{1}{f!}\binom{n-2(\mu-f)}{2}\binom{n-2(\mu-f)-2}{2}\cdots\binom{n-2\mu+2}{2} = \frac{(n-2(\mu-f))!}{2^ff!(n-2\mu)!}.
\end{equation}
Putting this all together, we obtain that the coefficient of the Pauli string in question within $L_\mu$ is
\begin{equation}
    2^{-\mu}\frac{(2a)!}{2^aa!}\frac{(2b)!}{2^bb!}\frac{(2c)!}{2^cc!}\frac{(n-2(\mu-f))!}{2^ff!(n-2\mu)!} = \frac{4^{-\mu}}{(n-2\mu)!}\frac{(2a)!(2b)!(2c)!(n-2(\mu-f))!}{a!b!c!f!}.
\end{equation}
We conclude that
\begin{equation}
    L_\mu = \frac{4^{-\mu}}{(n-2\mu)!}\sum_{a+b+c+f=\mu}\frac{(2a)!(2b)!(2c)!(n-2(\mu-f))!}{a!b!c!f!}P_{(2a,2b,2c)},
\end{equation}
which completes the first part of this proof.

\vspace{0.5\baselineskip}

Now, for the second part of this proof, simply let $\mu'=a+b+c=\mu-f$ and regroup the summation as follows:
\begin{align}
    L_\mu &= \frac{4^{-\mu}}{(n-2\mu)!}\sum_{a+b+c+f=\mu}\frac{(2a)!(2b)!(2c)!(n-2(\mu-f))!}{a!b!c!f!}P_{(2a,2b,2c)} \\
    &= \frac{4^{-\mu}}{(n-2\mu)!}\sum_{f=0}^{\mu}\frac{(n-2(\mu-f))!}{f!}\sum_{a+b+c=\mu-f}\frac{(2a)!(2b)!(2c)!}{a!b!c!}P_{(2a,2b,2c)} \\
    &= \frac{4^{-\mu}}{(n-2\mu)!}\sum_{f=0}^{\mu}\frac{(n-2(\mu-f))!}{f!}C_{\mu-f} \\
    &= \frac{4^{-\mu}}{(n-2\mu)!}\sum_{\mu'=0}^{\mu}\frac{(n-2\mu')!}{(\mu-\mu')!}C_{\mu'}.
\end{align}
\end{proof}

To summarize, we now know from Lemma \ref{lem:Lk-span-Sn-center} that the $L_\mu$ operators span $\mf{z}(\mf{S}_n)$, and we also know from Corollary \ref{cor:Lk-Ck-same-span} (an immediate consequence of Lemma \ref{lem:Lk-in-terms-of-Ck}) that the $L_\mu$ operators and the $C_\mu$ operators have the same span. We conclude that the $C_\mu$ operators span $\mf{z}(\mf{S}_n)$, which is exactly the statement of Proposition~\ref{lem:alternative}. In addition, since  $\mf{z}(\mf{u}^{S_n}(d))=\mf{z}(\mf{S}_n)$, we can also conclude that the $C_\mu$ operators span $\mf{z}(\mf{u}^{S_n}(d))$, as stated in Proposition~\ref{lem:Ck-span-Sn-center-app}.

After completion of this manuscript, it was brought to our attention that Proposition \ref{lem:Ck-span-Sn-center-app} in this manuscript is also presented in a slightly more general form as Proposition 1 in \cite{marvian2022rotationally}. Some of the lemmas used to prove this proposition, namely Lemma \ref{lem:Lk-span-Sn-center}, Lemma \ref{lem:Lk-in-terms-of-Ck}, and Corollary \ref{cor:Lk-Ck-same-span}, are also presented in \cite{marvian2022rotationally}. The $L_\mu$ and $C_\mu$ operators in this manuscript correspond respectively to the $B_m$ and $C_l$ operators in \cite{marvian2022rotationally}, with $m = 2\mu$ and $l = 2\mu$.

\section{Proof of Theorem~3}\label{sec:theo:central-projection-two-body}

Let us now prove Theorem~3, which we restate for convenience.

\setcounter{theorem}{2}
\begin{theorem}
Consider the set $\GC_2$ of $S_n$-equivariant generators in Eq.~(23) of the main text. Then, one has that
\begin{equation}
    \Tr[H C_\mu]=0\,,
\end{equation}
for all $H\in\mathcal{G}_2$ and for all $C_\mu$ with $0\le \mu\le\lfloor n/2\rfloor$ except $\mu=1$.
\end{theorem}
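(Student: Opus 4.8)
The plan is to exploit the mutual orthogonality of Pauli strings under the Frobenius inner product together with the fact that every term of $C_\mu$ lives at a single even level and carries only even symbol counts. First I would recall that for two $n$-qubit Pauli strings $\sigma,\sigma'$ one has $\Tr[\sigma\sigma']=2^n\,\delta_{\sigma,\sigma'}$, since the product of two Pauli strings is, up to a phase, another Pauli string whose trace vanishes unless it is the identity. Because two distinct symmetrized operators $P_{(k_x,k_y,k_z)}$ and $P_{(k_x',k_y',k_z')}$ are sums over disjoint sets of Pauli strings whenever $(k_x,k_y,k_z)\neq(k_x',k_y',k_z')$, this immediately gives
\begin{equation}
\Tr\!\left[P_{(k_x,k_y,k_z)}\,P_{(k_x',k_y',k_z')}\right]
= 2^n\,\frac{n!}{k_x!\,k_y!\,k_z!\,(n-k_x-k_y-k_z)!}\;\delta_{(k_x,k_y,k_z),(k_x',k_y',k_z')}\,,
\end{equation}
so the inner product of two symmetrized Pauli strings is nonzero only when their labels coincide exactly.

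Next I would observe that, by definition, $C_\mu$ is a linear combination of operators $P_{(2a,2b,2c)}$ with $a+b+c=\mu$; every such term therefore sits at level $2a+2b+2c=2\mu$ and has all three symbol counts even. The generators in $\GC_2$, on the other hand, are $P_{(1,0,0)}$ and $P_{(0,1,0)}$ (both at level $1$) and $P_{(0,0,2)}$ (at level $2$). By the orthogonality relation above, $\Tr[H\,C_\mu]$ can be nonzero only if the label of $H$ equals $(2a,2b,2c)$ for some term occurring in $C_\mu$.

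For $H=P_{(1,0,0)}$ and $H=P_{(0,1,0)}$ this is impossible for every $\mu$, since their labels contain an odd entry whereas all labels in $C_\mu$ have only even entries; hence $\Tr[P_{(1,0,0)}C_\mu]=\Tr[P_{(0,1,0)}C_\mu]=0$ for all $\mu$. For $H=P_{(0,0,2)}$ the label $(0,0,2)$ can appear in $C_\mu$ only with $a=b=0$, $c=1$, forcing $\mu=a+b+c=1$; for any $\mu\neq 1$ the operator $C_\mu$ lives entirely at level $2\mu\neq 2$ and shares no Pauli string with $P_{(0,0,2)}$, so the trace vanishes. This handles every $\mu\neq 1$, including the boundary case $\mu=0$, where $C_0=I$ and $\Tr[H]=0$ for each of the (traceless) generators, completing the argument.

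I expect no genuine obstacle here: the statement reduces to level/parity bookkeeping once orthogonality is in place. The only points deserving explicit care are the $\mu=0$ case, which collapses to the vanishing trace of each generator, and making clear that the exceptional role of $\mu=1$ arises entirely from the single overlap between $P_{(0,0,2)}$ and the $P_{(0,0,2)}$ term of $C_1$.
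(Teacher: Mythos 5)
Your proposal is correct and follows essentially the same route as the paper's proof: both arguments reduce to the observation that the generators share no Pauli strings with $C_\mu$ for $\mu\neq 1$ (the only overlap being $P_{(0,0,2)}$ with $C_1$), so the product is a combination of non-identity, hence traceless, Pauli strings. Your phrasing via the Frobenius orthogonality relation for symmetrized Pauli strings is just a slightly more formalized version of the paper's disjoint-support argument, with the $\mu=0$ case handled identically.
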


\begin{proof}
None of the generators in $\GC_2$ share any Pauli strings in common with any $C_\mu$, with the exception of $P_{(0,0,2)}$ which shares a Pauli strings of the form $Z_\alpha Z_\beta$ with $C_1$. In particular, for any $0\le \mu\le\lfloor n/2\rfloor$ such that $\mu\neq 1$, the generators $P_{(1,0,0)}$, $P_{(0,1,0)}$, and $P_{(0,0,2)}$ do not share any Pauli strings with $C_\mu$. Therefore, $P_{(1,0,0)}C_\mu$ is a linear combination of non-identity Pauli strings. Since all non-identity Pauli strings are traceless, $\Tr[P_{(1,0,0)}C_\mu] = 0$. The same is true if $P_{(1,0,0)}$ is replaced with either of the other two generators.
\end{proof}

The importance of Theorem~3 comes from the central projections condition. In particular, Theorem~3 and Theorem~2 immediately give us the following:

\begin{corollary}
\label{cor:central-projection-two-body-ap}
For every $M\in\mf{g}_2$ and every integer $0\le \mu\le\lfloor n/2\rfloor$ except $\mu=1$, $\Tr[MC_\mu] = 0$.
\end{corollary}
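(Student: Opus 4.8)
The plan is to obtain the statement as a direct composition of the Central Projections theorem (Theorem~\ref{theo:central-projection}) with Theorem~\ref{theo:central-projection-two-body}, applying the former separately to each admissible central element $C_\mu$. Since Theorem~\ref{theo:central-projection} has two hypotheses—namely $[H,C]=0$ and $\Tr[HC]=0$ for every generator $H$—the entire argument reduces to checking that both hold with $C=C_\mu$ for each $\mu\neq 1$ in the range $0\le\mu\le\lfloor n/2\rfloor$.

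First I would verify the commutation hypothesis $[H,C_\mu]=0$ for every $H\in\GC_2$. By Proposition~\ref{lem:Ck-span-Sn-center-main}, each $C_\mu$ belongs to the center $\mf{z}(\mf{u}^{S_n}(d))$, so by definition it commutes with every element of the maximal $S_n$-equivariant algebra $\mf{u}^{S_n}(d)$. The three generators $P_{(1,0,0)}$, $P_{(0,1,0)}$, and $P_{(0,0,2)}$ are symmetrized Pauli strings and hence lie in $\mf{u}^{S_n}(d)$, so $[H,C_\mu]=0$ holds automatically for all $H\in\GC_2$ and all $\mu$.

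Next I would supply the trace-orthogonality hypothesis from Theorem~\ref{theo:central-projection-two-body}, which asserts $\Tr[HC_\mu]=0$ for all $H\in\GC_2$ precisely when $\mu\neq 1$ and $0\le\mu\le\lfloor n/2\rfloor$. With both hypotheses of Theorem~\ref{theo:central-projection} now satisfied for $C=C_\mu$, its conclusion yields $\Tr[MC_\mu]=0$ for every $M\in\mf{g}_2$; iterating over each admissible $\mu$ finishes the proof. The only point requiring attention—rather than a genuine obstacle—is the commutation check: one must recall that Theorem~\ref{theo:central-projection} demands $[H,C_\mu]=0$ in addition to orthogonality, a condition guaranteed here exactly because the $C_\mu$ span the center of the equivariant algebra containing all the generators. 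Everything else is immediate.
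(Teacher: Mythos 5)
Your proposal is correct and follows essentially the same route as the paper, which obtains the corollary immediately by combining Theorem~\ref{theo:central-projection-two-body} (trace-orthogonality of the generators with each $C_\mu$, $\mu\neq 1$) with the central projections condition of Theorem~\ref{theo:central-projection}. Your explicit verification of the commutation hypothesis $[H,C_\mu]=0$ via Proposition~\ref{lem:Ck-span-Sn-center-main} is a detail the paper leaves implicit, but it is exactly the right justification.
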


\section{Proof of Theorem~4}\label{sec:thm:true-AD-algebra}

Let us now prove Theorem~4, which we restate for convenience.

\begin{theorem}
\label{thm:true-AD-algebra-ap}
Consider the set $\GC_2$ of $S_n$-equivariant generators in Eq.~(23) of the main text. The associated  DLA  contains all linear combinations of the form
\begin{equation}
    i\sum_{0\le k_x+k_y+k_z\le n}c_{(k_x,k_y,k_z)}P_{(k_x,k_y,k_z)},
\end{equation}
for a collection of real coefficients $c_{(k_x,k_y,k_z)}$ that satisfy
\begin{equation}
    \sum_{a+b+c=\mu}\frac{c_{(2a,2b,2c)}}{a!b!c!} = 0\,,
\end{equation}
for each integer $0\le \mu\le\lfloor n/2\rfloor$ except $\mu=1$.
\end{theorem}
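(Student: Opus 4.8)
The plan is to establish the stated inclusion $\mf{g}_2\supseteq\mf{su}^{S_n}_{\rm{cless}}(d)\boxplus\spn_{\mbb{R}}\{iC_1\}$ by passing to the collective-spin picture and using Schur--Weyl duality, rather than working level-by-level in the $P_{(k_x,k_y,k_z)}$ basis. First I would note that $[P_{(1,0,0)},P_{(0,1,0)}]\propto P_{(0,0,1)}$, so all three collective spin operators $J_x\sim P_{(1,0,0)}$, $J_y\sim P_{(0,1,0)}$, $J_z\sim P_{(0,0,1)}$ lie in $\mf{g}_2$, and since $P_{(0,0,2)}=2J_z^2$ up to a multiple of the identity, $\mf{g}_2=\lie{J_x,J_y,J_z,J_z^2}$ up to the central identity direction. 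By Schur--Weyl duality the Hilbert space decomposes into collective-spin blocks $V_j$ of dimension $2j+1$, one for each spin $j\in\{n/2,n/2-1,\dots\}$; on each block the generators act as the standard spin-$j$ operators. The key structural fact I would record here is that the $\lfloor n/2\rfloor+1$ block dimensions $2j+1$ are pairwise distinct.

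Second, I would invoke single collective-spin controllability: within each block, $J_x,J_y,J_z$ together with the quadratic $J_z^2$ generate all of $\su(2j+1)$. This is the standard statement that the spin operators plus one quadrupole generate, via the spherical-tensor ladder $[T^{(2)},T^{(k)}]\sim T^{(k+1)}\oplus\cdots$, every irreducible tensor operator $T^{(k)}_q$ with $1\le k\le 2j$, and these exhaust $\su(2j+1)$; I would either cite this or include the short tensor-operator computation. Consequently the projection of $\mf{g}_2$ onto each block surjects onto $\su(2j+1)$.

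Third---and this is the crux---I would upgrade this blockwise surjectivity to the genuine direct-sum statement $[\mf{g}_2,\mf{g}_2]=\bigoplus_j\su(2j+1)=\mf{su}^{S_n}_{\rm{cless}}(d)$. Since commutators of block-diagonal operators are traceless in each block, $[\mf{g}_2,\mf{g}_2]$ is a subalgebra of $\bigoplus_j\su(2j+1)$ that, by step two, projects onto each simple factor. By the Lie-algebra form of Goursat's lemma, such a subalgebra is a direct sum of diagonal copies glued only across mutually isomorphic factors; but $\su(2j+1)\cong\su(2j'+1)$ forces $j=j'$, and the block dimensions are all distinct, so no gluing is possible and the subalgebra must be the full direct sum. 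This gives $\mf{g}_2\supseteq\mf{su}^{S_n}_{\rm{cless}}(d)$ with the blocks \emph{independently} controllable, i.e. semi-universality rather than mere subspace controllability.

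Finally, for the single surviving central direction I would use that $C_1=4(J_x^2+J_y^2+J_z^2)-3nI$, i.e. $C_1$ is (up to identity) the Casimir. Since $\mf{g}_2$ is invariant under conjugation by the collective-rotation subgroup $e^{\lie{J_x,J_y,J_z}}$, averaging $P_{(0,0,2)}\in\mf{g}_2$ over this $SO(3)$ action yields its rotation-invariant part, which is proportional to $P_{(2,0,0)}+P_{(0,2,0)}+P_{(0,0,2)}=\frac{1}{2}C_1$; hence $iC_1\in\mf{g}_2$. Combining the three pieces gives the claimed inclusion, and re-expressing ``central component along $C_1$ only'' in the $P_{(k_x,k_y,k_z)}$ basis recovers the coefficient conditions $\sum_{a+b+c=\mu}c_{(2a,2b,2c)}/(a!b!c!)=0$ for $\mu\ne 1$. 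The main obstacle I anticipate is the rigorous no-locking step: one must verify that $[\mf{g}_2,\mf{g}_2]$ really surjects onto every factor---which needs the full single-spin controllability of step two, not just subspace controllability---before Goursat's lemma and the distinctness of the dimensions can be applied.
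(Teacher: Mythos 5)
Your proposal is correct, but it takes a genuinely different route from the paper's own proof. The paper works entirely combinatorially in the symmetrized-Pauli basis: after showing that $\mf{g}_2$ is invariant under interchanges of $X,Y,Z$, it proves ``lattice-hopping'' lemmas (commutators with $P_{(1,0,0)}$, $P_{(0,1,0)}$, $P_{(0,0,1)}$ move one or two steps within a level of the barycentric lattice), runs an induction over levels to generate every $P_{(k_x,k_y,k_z)}$ with at least one odd index, and finishes with a ``snaking'' linear-algebra argument using two-coordinate operators to produce every combination of the all-even $P_{(2a,2b,2c)}$ orthogonal to the relevant $C_\mu$. You instead pass through Schur--Weyl duality: reduce to collective-spin blocks, invoke single-spin controllability of $\{J_x,J_y,J_z,J_z^2\}$ on each spin-$j$ block, rule out diagonal ``locking'' between blocks via the Lie-algebra form of Goursat's lemma together with the pairwise distinctness of the dimensions $2j+1$, and recover the central element $C_1$ by averaging $P_{(0,0,2)}$ over collective rotations. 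All of your steps check out: $[\mf{g}_2,\mf{g}_2]$ is indeed blockwise traceless, its block projections are onto each $\su(2j+1)$ by perfectness, $\su(m)\cong\su(m')$ forces $m=m'$, and the orthogonal decomposition $\mf{u}^{S_n}(d)=\mf{su}^{S_n}_{\rm{cless}}(d)\oplus\mf{z}(\mf{u}^{S_n}(d))$ together with Proposition~1 translates your inclusion into the stated coefficient conditions (incidentally, the $SO(3)$ integral is avoidable: conjugating $P_{(0,0,2)}$ by a collective $\pi/2$ rotation, which lies in $e^{\mf{g}_2}$, already yields $P_{(2,0,0)}$ and $P_{(0,2,0)}$, hence $C_1$). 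As for what each approach buys: yours is shorter and more conceptual, making transparent \emph{why} semi-universality holds --- no two blocks carry isomorphic simple factors, so independent control is automatic --- but it outsources two nontrivial ingredients, namely the quadrupole/tensor-operator controllability of a single spin (whose ladder argument needs non-vanishing of the relevant structure constants) and the Goursat-type lemma, both standard but requiring citation or proof. The paper's proof is longer but elementary and self-contained, produces the generated operators explicitly in the very basis in which the theorem is stated, and deliberately parallels the construction of Ref.~\cite{albertini2018controllability} so as to pinpoint and correct the error there, which was one of the stated goals of the paper.
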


Let us start by noting that since the number of terms in $P_{(k_x,k_y,k_z)}$ equals the multinomial coefficient
\begin{equation}
    \binom{n}{k_x, k_y, k_z, n - (k_x + k_y + k_z)} = \frac{n!}{k_x!k_y!k_z!(n - (k_x + k_y + k_z))!},
\end{equation}
the projection of $P_{(k_x,k_y,k_z)}$ onto $C_k$ equals
\begin{equation}
    \frac{n!}{(2a)!(2b)!(2c)!(n-(2a+2b+2c))!}\frac{(2a)!(2b)!(2c)!}{a!b!c!} = \frac{n!}{(n-2\mu)!}\frac{1}{a!b!c!}\,,
\end{equation}
if $(k_x,k_y,k_z) = (2a,2b,2c)$ such that $a+b+c=2\mu$, and $0$ otherwise. Therefore, expressing any operators $M$ in the DLA $\mf{g}$ as  a linear combination of the form
\begin{equation}
    M = \sum_{0\le k_x+k_y+k_z\le n}c_{(k_x,k_y,k_z)}P_{(k_x,k_y,k_z)}\,,
\end{equation}
and using the condition $\Tr[MC_\mu] = 0$ for each integer $0\le \mu\le\lfloor n/2\rfloor$ except $\mu=1$ (from Corollary~\ref{cor:central-projection-two-body-ap}), after factoring out a term $\frac{n!}{(n-2\mu)!}$, yields the constraint
\begin{equation}
    \sum_{a+b+c=\mu}\frac{c_{(2a,2b,2c)}}{a!b!c!} = 0\,.
\end{equation}

We must now show that we can in fact generate all operators that satisfy the aforementioned conditions. In what follows, we will follow essentially the same process as that used in Ref.~\cite{albertini2018controllability}. Given the length of the proof, we find it convenient to  summarize the main steps  before diving in:
\begin{itemize}
    \item First, we establish some preliminary results that will make the construction easier. As just one example, we prove that, if $P_{(k_x,k_y+1,k_z-1)}\in\mf{g}$ and $P_{(k_x,k_y,k_z)}\in\mf{g}$, then $P_{(k_x,k_y-1,k_z+1)}\in\mf{g}$ (as long as $k_y,k_z\ge 1$, of course).
    \item Second, we prove that, if $P_{(k-1,1,0)}\in\mf{g}$, then $P_{(k_x,k_y,k_z)}\in\mf{g}$ for all $(k_x,k_y,k_z)$ such that $k_x + k_y + k_z = k$ and at least one of $k_x,k_y,k_z$ is odd. Notice that, if $k$ is odd, then at least one of $k_x,k_y,k_z$ is guaranteed to be odd, so then we can fill out the entirety of level $k$.
    \item Third, we show that $P_{(k-1,1,0)}\in\mf{g}$ for all $k$. In combination with the previous result, this lets us fill out all of the odd levels and all of the even levels except for operators of the form $P_{(2a,2b,2c)}$.
    \item Fourth, we prove that we can generate $\sum_{a+b+c=k}c_{(2a,2b,2c)}P_{(2a,2b,2c)}$ for any set of coefficients $c_{(2a,2b,2c)}$ such that $\sum_{a+b+c=k}\frac{c_{(2a,2b,2c)}}{a!b!c!}=0$.
\end{itemize}

\vspace{0.5\baselineskip}

\begin{figure}[t]
    \centering
\includegraphics[width=.5\columnwidth]{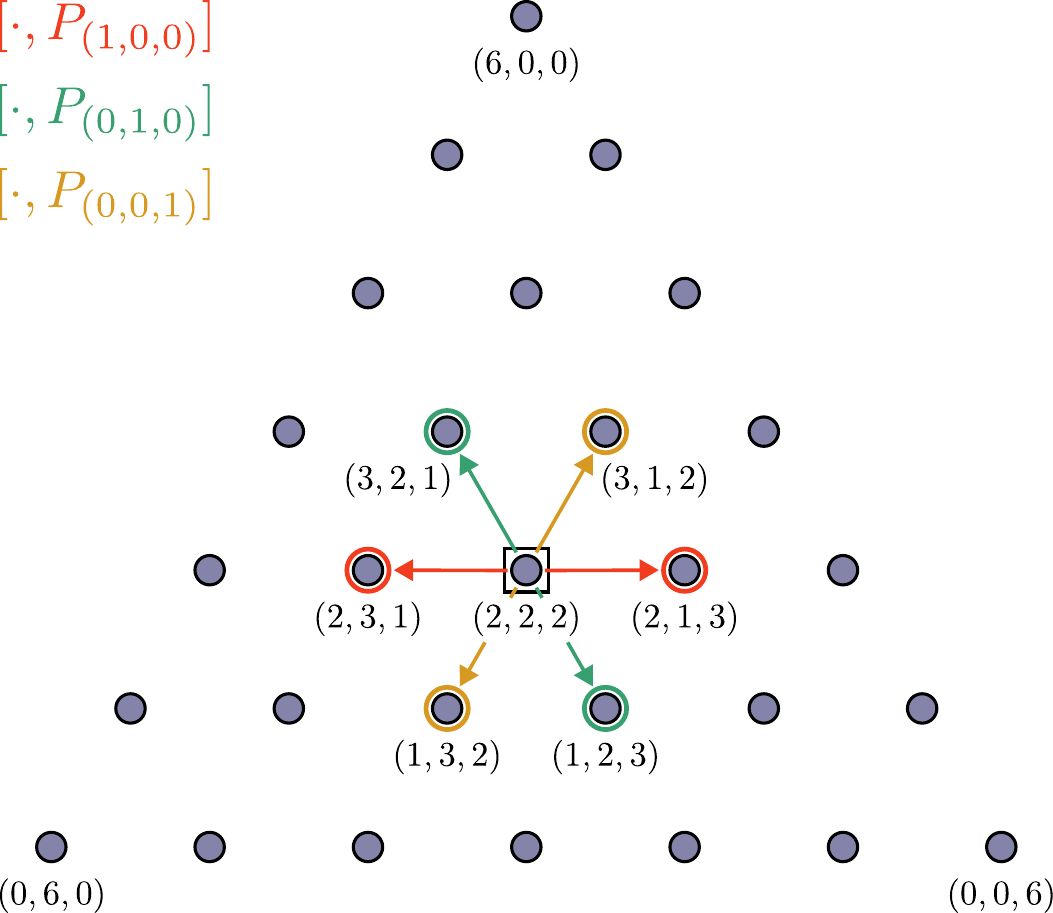}
    \caption{\textbf{Hopping by one site.} In the figure we schematically show how one can hop in the barycentric lattice by one site within the same level by taking specific  commutators. Shown is the  $k=6$ level. The red (horizontal) arrows show what happens when  taking the commutator of a symmetrized Pauli with $P_{(1,0,0)}=\sum_{j=1}^n X_j$. Similarly, the green (top-left to bottom-right) and the orange (bottom-left to top-right) arrows respectively show the connections made when taking the commutator with $P_{(0,1,0)}=\sum_{j=1}^n Y_j$ and $P_{(0,0,1)}=\sum_{j=1}^n Z_j$.
    \label{fig:proof_2}}
\end{figure}

We begin with an observation that, while seemingly trivial, is still worth mentioning. We note that in the following lemmas,  when we say $\mf{g}$, we will assume a DLA containing $P_{(1,0,0)}$, $P_{(0,1,0)}$, $P_{(0,0,1)}$ and $P_{(0,0,2)}$. That is, $\mf{g}_2\subseteq \mf{g}$.
\begin{lemma}
\label{lem:invariant-under-XYZ-interchange}
The DLA $\mf{g}_2$ is invariant under interchanges of the Pauli matrices $X,Y,Z$.
\end{lemma}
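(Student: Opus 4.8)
The plan is to realize each interchange of the labels $X,Y,Z$ as conjugation by a single-qubit unitary applied identically to all $n$ qubits, and to observe that the corresponding conjugation preserves $\mf{g}_2$ precisely because its generator already lies inside $\mf{g}_2$. The starting point is that all three one-body symmetrized strings belong to $\mf{g}_2$: the operators $P_{(1,0,0)}$ and $P_{(0,1,0)}$ are generators, and $P_{(0,0,1)}$ is obtained from $[P_{(1,0,0)},P_{(0,1,0)}]\propto P_{(0,0,1)}$. Consequently the real span $\spn_{\mbb{R}}\{iP_{(1,0,0)},iP_{(0,1,0)},iP_{(0,0,1)}\}$, which is a copy of $\mf{su}(2)$ realized as collective rotations, is contained in $\mf{g}_2$.

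Next I would record the two facts I need about these rotations. For the cyclic relabeling $X\to Y\to Z\to X$, the relevant single-qubit unitary is the Bloch-sphere rotation by $2\pi/3$ about the $(1,1,1)$ axis, whose $n$-fold tensor power is $V=e^{-i\phi H_0}$ with $H_0=P_{(1,0,0)}+P_{(0,1,0)}+P_{(0,0,1)}$. Since $iH_0\in\mf{g}_2$ and $\mf{g}_2$ is closed under the bracket, $\mathrm{ad}_{iH_0}$ maps $\mf{g}_2$ into itself, and hence so does $e^{\phi\,\mathrm{ad}_{iH_0}}=\mathrm{Ad}_{V}$; that is, conjugation by $V$ stabilizes $\mf{g}_2$. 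A direct computation on Pauli strings then shows $V\,P_{(k_x,k_y,k_z)}\,V^\dagger=P_{(k_z,k_x,k_y)}$, so this conjugation is exactly the cyclic interchange. For a transposition, say $X\leftrightarrow Y$, I would use the $\pi$-rotation about the $(1,1,0)$ axis, whose tensor power is $W=e^{-i\varphi(P_{(1,0,0)}+P_{(0,1,0)})}$; since $i(P_{(1,0,0)}+P_{(0,1,0)})\in\mf{g}_2$, the same adjoint-action argument yields invariance, and conjugation sends $P_{(k_x,k_y,k_z)}\mapsto(-1)^{k_z}P_{(k_y,k_x,k_z)}$. Because a $3$-cycle together with one transposition generate $S_3$, these two maps suffice to cover all six relabelings of $\{X,Y,Z\}$.

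The hard part is conceptual rather than computational: one must justify that conjugation by $u^{\otimes n}$ preserves the \emph{subalgebra} $\mf{g}_2$, not merely the ambient commutant $\mf{u}^{S_n}(d)$. This is exactly where it matters that the generators $H_0$ (respectively $P_{(1,0,0)}+P_{(0,1,0)}$) lie \emph{inside} $\mf{g}_2$, since the adjoint action $\mathrm{Ad}_{\exp(iH)}=e^{\mathrm{ad}_{iH}}$ of a group element whose generator is in a Lie algebra always stabilizes that Lie algebra. A second, minor subtlety is that no unitary conjugation can implement a bare swap of two Pauli axes, as that would be an orientation-reversing map of the Bloch sphere; the proper-rotation implementation necessarily flips the sign of the third axis, producing the factor $(-1)^{k_z}$ above. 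This factor is harmless because $\mf{g}_2$ is a real vector space, hence closed under multiplication by $\pm 1$, so membership of $P_{(k_y,k_x,k_z)}$ is unaffected.
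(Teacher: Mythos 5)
Your proof is correct, but it takes a genuinely different route from the paper's. The paper's proof never invokes conjugation: it computes six explicit commutators to show that \emph{every} symmetrized Pauli string at levels $1$ and $2$ (i.e., $P_{(0,0,1)}$, $P_{(2,0,0)}$, $P_{(0,2,0)}$, $P_{(1,1,0)}$, $P_{(1,0,1)}$, $P_{(0,1,1)}$, in addition to the original generators) lies in $\mf{g}_2$, and then argues that since this enlarged generating set is manifestly symmetric under relabelings of $X,Y,Z$, so is the Lie closure. That argument implicitly assumes that an abstract relabeling of Pauli letters is compatible with the Lie bracket (it is, up to a global sign, but the paper does not verify this); your inner-automorphism route sidesteps exactly this issue, since $\mathrm{Ad}_{u^{\otimes n}}=e^{\pm\phi\,\mathrm{ad}_{iH}}$ with $iH\in\mf{g}_2$ automatically stabilizes $\mf{g}_2$ --- this is the cleaner foundation, and your handling of the orientation obstruction (the unavoidable $(-1)^{k_z}$ factor for transpositions) is exactly right and suffices because the lemma is only ever applied to individual basis elements $P_{(k_x,k_y,k_z)}$, where real-linearity absorbs the sign. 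Two caveats. First, a trivial sign slip: with $V=e^{-i\phi H_0}$ one has $\mathrm{Ad}_V=e^{-\phi\,\mathrm{ad}_{iH_0}}$, not $e^{+\phi\,\mathrm{ad}_{iH_0}}$. Second, for a general element $\sum_k c_k P_k\in\mf{g}_2$ your maps produce the \emph{signed} relabeling $\sum_k \pm c_k P_{\sigma(k)}$, so invariance under the literal unsigned relabeling of arbitrary linear combinations would need one more step (e.g., that the map $P\mapsto(-1)^{k_z}P$ fixes $\mf{g}_2$); this is harmless here, but worth flagging. Note also that the paper's proof yields a useful by-product that yours does not: explicit membership of all level-$2$ operators (e.g., $P_{(2,0,0)}$) in $\mf{g}_2$, which the paper reuses in later lemmas, so in the context of the full argument the explicit commutators would still need to be carried out somewhere.
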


\begin{proof}
We can quickly carry out the following commutators that establish that $P_{(0,0,1)}$, $P_{(2,0,0)}$, $P_{(0,2,0)}$, $P_{(1,1,0)}$, $P_{(1,0,1)}$, and $P_{(0,1,1)}$ are all in $\mf{g}_2$:
\begin{align}
    \left[P_{(1,0,0)}, P_{(0,1,0)}\right] &\propto P_{(0,0,1)} \\
    \left[P_{(0,0,2)}, P_{(1,0,0)}\right] &\propto P_{(0,1,1)} \\
    \left[P_{(0,0,2)}, P_{(0,1,0)}\right] &\propto -P_{(1,0,1)} \\
    \left[P_{(0,1,1)}, P_{(1,0,0)}\right] &\propto P_{(0,2,0)} - P_{(0,0,2)} \\
    \left[P_{(1,0,1)}, P_{(0,1,0)}\right] &\propto -P_{(2,0,0)} + P_{(0,0,2)} \\
    \left[P_{(2,0,0)}, P_{(0,0,1)}\right] &\propto -P_{(1,1,0)}.
\end{align}
Hence the generators of $\mf{g}_2$ can be taken to be the set of operators $P_{(k_x,k_y,k_z)}$ for all $1\le k_x + k_y + k_z\le 2$. This generating set is invariant under interchanges of the Pauli matrices $X,Y,Z$, so the same must be true for $\mf{g}_2$ itself.
\end{proof}

We now present two lemmas that will help us greatly in systematically constructing the operators we claim to be in a given $\mf{g}$. The basic idea is that, under certain conditions, one can "hop" by one or two spaces within the barycentric lattice of level-$k$ symmetrized Pauli strings (see Sup. Fig.~\ref{fig:proof_2}). We first present the lemma that allows one to hop by one space.

\begin{lemma}
\label{lem:lattice-hop-one}
If $P_{(k_x,k_y,k_z)}\in\mf{g}$ and $P_{(k_x,k_y+1,k_z-1)}\in\mf{g}$, then $P_{(k_x,k_y-1,k_z+1)}\in\mf{g}$ (assuming, of course, that $k_y,k_z\ge 1$). If $P_{(k_x,k_y,0)}\in\mf{g}$ for $k_y\ge 1$, then $P_{(k_x,k_y-1,1)}\in\mf{g}$. Finally, if $P_{(k_x,0,0)}\in\mf{g}$ for $k_x\ge 1$, then $P_{(k_x-1,1,0)}\in\mf{g}$.
\end{lemma}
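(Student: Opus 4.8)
The plan is to compute, once and for all, the adjoint action of the one-body generators $P_{(1,0,0)}=\sum_j X_j$ and $P_{(0,0,1)}=\sum_j Z_j$ on an arbitrary symmetrized string $P_{(k_x,k_y,k_z)}$, and then obtain each of the three claims as a special case. The two identities I would establish are
\[
[P_{(1,0,0)}, P_{(k_x,k_y,k_z)}] \propto (k_z+1)\,P_{(k_x,k_y-1,k_z+1)} - (k_y+1)\,P_{(k_x,k_y+1,k_z-1)},
\]
\[
[P_{(0,0,1)}, P_{(k_x,k_y,k_z)}] \propto (k_y+1)\,P_{(k_x-1,k_y+1,k_z)} - (k_x+1)\,P_{(k_x+1,k_y-1,k_z)},
\]
where any symmetrized string carrying a negative index is understood to be absent. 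These are precisely the red ($X$) and orange ($Z$) single-site hops of Sup. Fig.~\ref{fig:proof_2}: commuting with $\sum_j X_j$ fixes $k_x$ and swaps a single $Y\leftrightarrow Z$, whereas commuting with $\sum_j Z_j$ fixes $k_z$ and swaps a single $X\leftrightarrow Y$.

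To derive the first identity I would expand $[\sum_j X_j, P_{(k_x,k_y,k_z)}] = \sum_j\sum_s [X_j,s]$, with $s$ ranging over the distinct Pauli strings of type $(k_x,k_y,k_z)$, and apply $[X,Y]\propto Z$ and $[X,Z]\propto Y$ one site at a time. Only sites carrying a $Y$ (sent to $Z$) or a $Z$ (sent to $Y$) survive, so the image is supported on the two strings $P_{(k_x,k_y-1,k_z+1)}$ and $P_{(k_x,k_y+1,k_z-1)}$, and $S_n$-symmetry guarantees that every string of a given type appears with equal weight. The coefficients follow from a brief count: a fixed target of type $(k_x,k_y-1,k_z+1)$ is produced once for each of its $(k_z+1)$ $Z$-sites, all contributions carrying the same sign so that none cancel, giving the stated $k_z+1$; the $Z$-identity is obtained by the identical argument after relabeling. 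Importantly, the downstream argument only ever uses that the relevant coefficient is \emph{nonzero}, which is automatic since $k_x+1,k_y+1,k_z+1\ge 1$.

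Granting these identities, the three parts fall out immediately. For the first, since $P_{(1,0,0)}$ and $P_{(k_x,k_y,k_z)}$ both lie in $\mf{g}$, so does their commutator, which is a nonzero multiple of $P_{(k_x,k_y-1,k_z+1)}$ plus a multiple of $P_{(k_x,k_y+1,k_z-1)}$; the latter is in $\mf{g}$ by hypothesis, so subtracting an appropriate multiple of it and rescaling by the nonzero coefficient $k_z+1$ leaves $P_{(k_x,k_y-1,k_z+1)}\in\mf{g}$. For the second, setting $k_z=0$ removes the $P_{(k_x,k_y+1,-1)}$ term outright, so $[P_{(1,0,0)},P_{(k_x,k_y,0)}]\propto P_{(k_x,k_y-1,1)}$ is already a single element of $\mf{g}$. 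For the third, the $X$-commutator vanishes on the all-$X$ corner, so I switch to $P_{(0,0,1)}\in\mf{g}_2\subseteq\mf{g}$; setting $k_y=k_z=0$ in the second identity removes the $P_{(k_x+1,-1,0)}$ term, leaving $[P_{(0,0,1)},P_{(k_x,0,0)}]\propto P_{(k_x-1,1,0)}\in\mf{g}$.

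The only genuine obstacle is nailing down the two commutator identities — in particular verifying that, for each target, the several contributing site-conversions reinforce rather than cancel, so the coefficients are truly nonzero. Since membership in the Lie algebra is unaffected by overall scalars, I would not track the constants precisely; establishing the sign pattern and the nonvanishing of the leading coefficient is all that the construction requires.
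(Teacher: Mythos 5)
Your proof is correct and follows essentially the same route as the paper's: both hinge on the single-hop commutator identity $[P_{(1,0,0)},P_{(k_x,k_y,k_z)}]\propto (k_z+1)P_{(k_x,k_y-1,k_z+1)}-(k_y+1)P_{(k_x,k_y+1,k_z-1)}$ (the paper writes it with the arguments in the opposite order), specializing to $k_z=0$ for the edge case and switching to $P_{(0,0,1)}$ for the all-$X$ corner, exactly as you do. Your coefficient count (each target arises once per $Z$-site with reinforcing signs) matches the paper's stated coefficients, and stating the general adjoint-action identities up front before specializing is only a cosmetic reorganization, not a different argument.
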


\begin{proof}
If $k_y,k_z\ge 1$, then simply use the commutator
\begin{equation}
    [P_{(k_x,k_y,k_z)}, P_{(1,0,0)}] \propto (k_y+1)P_{(k_x,k_y+1,k_z-1)} - (k_z+1)P_{(k_x,k_y-1,k_z+1)}\,,
\end{equation}
where we have used the fact that $P_{(1,0,0)}$ is in $\mf{g}$. Then,  since $P_{(k_x,k_y+1,k_z-1)}\in\mf{g}$, we can see that $P_{(k_x,k_y-1,k_z+1)}$ must also be in $\mf{g}$. If $k_z = 0$ but $k_y\ge 1$, then just use the commutator
\begin{equation}
    [P_{(k_x,k_y,0)}, P_{(1,0,0)}] \propto -(k_z+1)P_{(k_x,k_y-1,1)}\,.
\end{equation}
If $k_y = k_z = 0$, then just use the commutator
\begin{equation}
    [P_{(k_x,0,0)}, P_{(0,0,1)}] \propto -P_{(k_x-1,1,0)}\,.
\end{equation}
\end{proof}

Note that, although the lemma focuses on hopping one space only in certain directions, the lemma stills work regardless of which direction one hops, so long as one stays on level $k$. For instance, as schematically shown in Sup. Fig.~\ref{fig:proof_3}, taking the commutator with $P_{(1,0,0)}$ corresponds to hopping in the direction where one keeps the $x$-coordinate constant, but one increases (decreases) the $y$-coordinate by $1$ and respectively decreases (increases) the $z$-coordinate by $1$. Similarly, taking the commutator with $P_{(0,1,0)}$ corresponds to hopping in the direction where one keeps the $y$-coordinate constant, and taking the commutator with $P_{(0,0,1)}$ corresponds to hopping in the direction where one keeps the $z$-coordinate constant (see Sup. Fig.~\ref{fig:proof_3}). 

\medskip

We now present the lemma that lets us hop by two spaces. This lemma follows in spirit the same idea as Lemma \ref{lem:lattice-hop-one}, except now one  takes the commutator with $P_{(1,0,0)}$, $P_{(0,1,0)}$, or $P_{(0,0,1)}$ twice.

\begin{lemma}
\label{lem:lattice-hop-two}
If $P_{(k_x,k_y,k_z)}\in\mf{g}$ and $P_{(k_x,k_y+2,k_z-2)}\in\mf{g}$, then $P_{(k_x,k_y-2,k_z+2)}\in\mf{g}$ (assuming, of course, that $k_y,k_z\ge 2$). If $P_{(k_x,k_y,k_z)}\in\mf{g}$ for $k_y\ge 2$ and $k_z\le 1$, then $P_{(k_x,k_y-2,k_z+2)}\in\mf{g}$.
\end{lemma}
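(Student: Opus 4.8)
The plan is to produce a two-site hop by taking the commutator with $P_{(1,0,0)}$ twice, exploiting the fact established in Lemma~\ref{lem:lattice-hop-one} that a single commutator with $P_{(1,0,0)}$ fixes $k_x$ and mixes only the two nearest $y$--$z$ neighbors. First I would apply the single-hop identity
\begin{equation}
[P_{(k_x,k_y,k_z)},P_{(1,0,0)}] \propto (k_y+1)P_{(k_x,k_y+1,k_z-1)} - (k_z+1)P_{(k_x,k_y-1,k_z+1)}
\end{equation}
once, and then apply it again to each of the two resulting neighbors. Collecting terms yields
\begin{align}
[[P_{(k_x,k_y,k_z)},P_{(1,0,0)}],P_{(1,0,0)}]
&\propto (k_y+1)(k_y+2)P_{(k_x,k_y+2,k_z-2)} \nonumber \\
&\quad - [(k_y+1)k_z+(k_z+1)k_y]P_{(k_x,k_y,k_z)} \nonumber \\
&\quad + (k_z+1)(k_z+2)P_{(k_x,k_y-2,k_z+2)}.
\end{align}

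Since $P_{(1,0,0)}\in\mf{g}$ and $\mf{g}$ is closed under commutators, the left-hand side lies in $\mf{g}$. Under the hypotheses $P_{(k_x,k_y,k_z)}\in\mf{g}$ and $P_{(k_x,k_y+2,k_z-2)}\in\mf{g}$, the first two terms on the right are already in $\mf{g}$, so the remaining term $P_{(k_x,k_y-2,k_z+2)}$ must be in $\mf{g}$ as well, because its coefficient $(k_z+1)(k_z+2)$ is strictly positive and hence nonzero. This settles the interior claim, for which the standing assumptions $k_y,k_z\ge 2$ guarantee that both $P_{(k_x,k_y+2,k_z-2)}$ and $P_{(k_x,k_y-2,k_z+2)}$ are genuine level-$k$ operators.

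For the second claim I would treat the boundary $k_z\le 1$ separately, where the ``incoming'' operator $P_{(k_x,k_y+2,k_z-2)}$ carries a negative $z$-index and therefore simply does not appear in the expansion. Redoing the double commutator in the two cases $k_z=0$ and $k_z=1$, and dropping at each step any term whose index would become negative, the double commutator collapses to a combination of only $P_{(k_x,k_y,k_z)}$ and $P_{(k_x,k_y-2,k_z+2)}$, with the latter again carrying a nonzero coefficient. Since $P_{(k_x,k_y,k_z)}\in\mf{g}$ by hypothesis, this immediately yields $P_{(k_x,k_y-2,k_z+2)}\in\mf{g}$.

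The computation is essentially routine, so the only genuine hazard — and the step I would be most careful about — is the boundary bookkeeping: correctly discarding every term whose $y$- or $z$-index goes negative when iterating the single-hop formula near the edge of the lattice, and verifying in each such case that the coefficient of the target operator $P_{(k_x,k_y-2,k_z+2)}$ stays nonzero so that it can be isolated.
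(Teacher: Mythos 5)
Your proof is correct and follows essentially the same route as the paper's: a double commutator with $P_{(1,0,0)}$, isolating $P_{(k_x,k_y-2,k_z+2)}$ from the known terms, with the boundary cases $k_z\le 1$ handled by dropping the terms whose indices would go negative. The only difference is cosmetic — you compute the coefficients $(k_y+1)(k_y+2)$, $(k_y+1)k_z+(k_z+1)k_y$, and $(k_z+1)(k_z+2)$ explicitly (correctly), whereas the paper merely notes that the coefficient of the target operator is nonzero.
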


\begin{proof}
First, let us compute the commutator of $P_{(k_x,k_y,k_z)}$ with $P_{(1,0,0)}$ twice, i.e., $[[P_{(k_x,k_y,k_z)}, P_{(1,0,0)}], P_{(1,0,0)}]$. When one takes the first commutator, one gets a linear combination of $P_{(k_x,k_y+1,k_z-1)}$ and $P_{(k_x,k_y-1,k_z+1)}$. Then, when one takes the second commutator, one gets a linear combination of $P_{(k_x,k_y+2,k_z-2)}$, $P_{(k_x,k_y,k_z)}$, and $P_{(k_x,k_y-2,k_z+2)}$. (The exact coefficients are not important, which is why we choose not to emphasize them. The only important part is that the coefficient of $P_{(k_x,k_y-2,k_z+2)}$ is  not zero.) Since we are given that $P_{(k_x,k_y,k_z)}$ and $P_{(k_x,k_y+2,k_z-2)}$ are in $\mf{g}$, we conclude that $P_{(k_x,k_y-2,k_z+2)}\in\mf{g}$, as desired.

\begin{figure}[t]
    \centering
\includegraphics[width=.5\columnwidth]{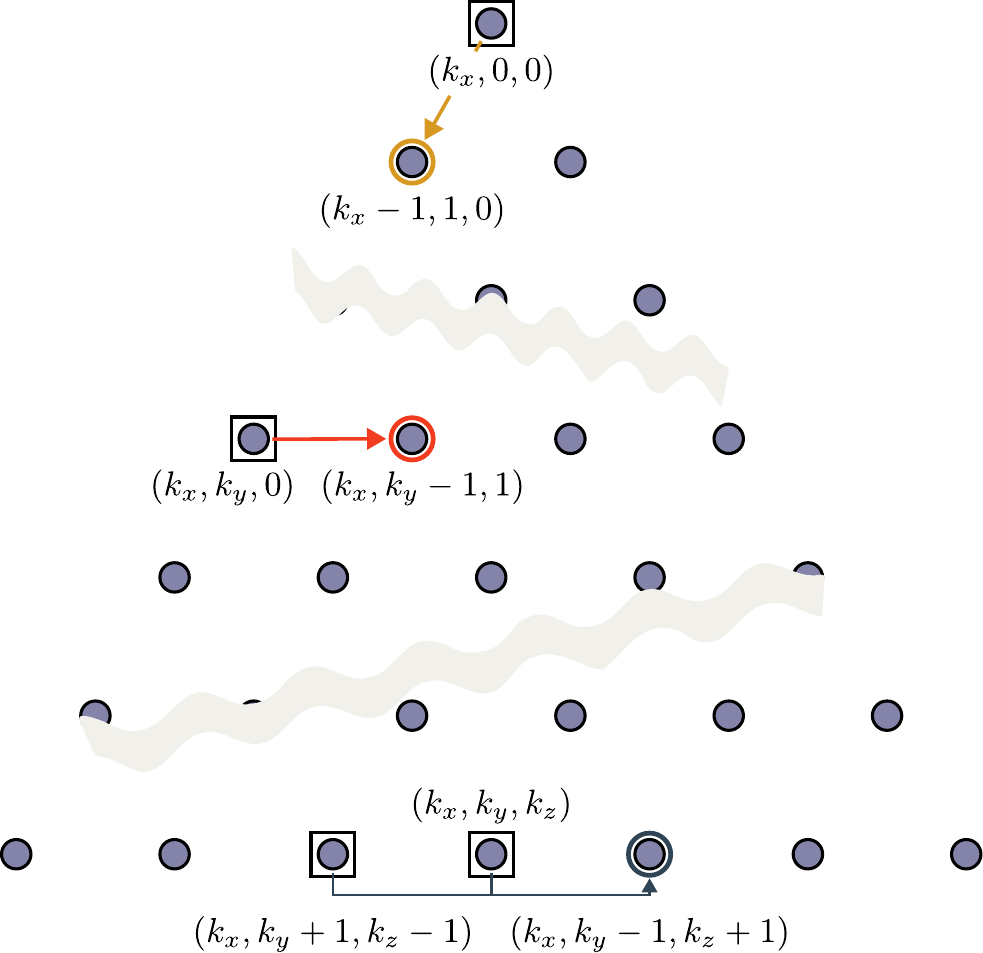}
    \caption{\textbf{Examples movements in a barycentric lattice.} In the figure we schematically show different possible movements in the barycentric lattice as allowed by Lemma~\ref{lem:lattice-hop-one}. At the top, the orange arrow shows that if one has a symmetrized Pauli string on a corner of the level, one can generate an adjacent symmetrized Pauli string. In the middle, the red arrow shows that if one has a  symmetrized Pauli string on the edge of a level, one can generate another symmetrized Pauli string one step inward in the same level of the lattice. Finally, in the bottom we show with black arrows that if one has two adjacent symmetrized Pauli operators, one can generate another one in the same level. 
    \label{fig:proof_3}}
\end{figure}

In the case where $k_z\le 1$, the method is exactly the same. Simply take the commutator with $P_{(1,0,0)}$ twice. The result is a linear combination of $P_{(k_x,k_y,k_z)}$ and $P_{(k_x,k_y-2,k_z+2)}$, where the coefficient of the latter is nonzero. (The only difference from the case above is that $P_{(k_x,k_y+2,k_z-2)}$ no longer exists, since $k_z\le 1$.) Since we are given that $P_{(k_x,k_y,k_z)}\in\mf{g}$, we conclude that $P_{(k_x,k_y-2,k_z+2)}\in\mf{g}$, as desired.
\end{proof}

We now use Lemmas \ref{lem:lattice-hop-one} and \ref{lem:lattice-hop-two} to efficiently hop around the barycentric lattice and construct the operators in $\mf{g}_2$. This idea is captured in the following lemma and in Sup. Fig.~\ref{fig:proof_4}:

\begin{lemma}
\label{lem:fill-out-odd-within-level}
If $P_{(k-1,1,0)}\in\mf{g}_2$, $P_{(0,k-1,1)}\in\mf{g}_2$, and  $P_{(1,0,k-1)}\in\mf{g}_2$, then $P_{(k_x,k_y,k_z)}\in\mf{g}_2$ for any ordered triple $(k_x,k_y,k_z)$ at level $k$ where at least one of  $k_x, k_y, k_z$ is odd. In particular, if $k = k_x + k_y + k_z$ is odd, then any ordered triple at level $k$ will satisfy this, which means that every operator at level $k$ is in $\mf{g}_2$.
\end{lemma}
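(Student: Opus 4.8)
The plan is to exploit the $X,Y,Z$-interchange symmetry of $\mf{g}_2$ (Lemma~\ref{lem:invariant-under-XYZ-interchange}) together with the hopping moves of Lemmas~\ref{lem:lattice-hop-one} and~\ref{lem:lattice-hop-two} to populate the entire level-$k$ triangle of the barycentric lattice, with the sole exception of the purely-even triples $(2a,2b,2c)$. The organizing principle is a \emph{parity} observation: a single commutator (hop-by-one) flips the parity of exactly two coordinates while leaving the third fixed, whereas a double commutator (hop-by-two) preserves all three parities. Consequently, any hop that keeps a fixed \emph{odd} coordinate moves only between target operators (those with at least one odd index), never touching a forbidden all-even operator. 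I will first fill the three edges of the triangle and then sweep the interior, at every step keeping some odd coordinate fixed so that every operator appearing as an input to a hopping lemma is guaranteed to already lie in $\mf{g}_2$.

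First I would fill the edge $k_z=0$. Starting from the seed $P_{(k-1,1,0)}$, the edge-case of Lemma~\ref{lem:lattice-hop-two}, transported by the interchange symmetry to the move fixing $k_z$ and sending $(k_x,k_y)\mapsto(k_x-2,k_y+2)$ (valid since $k_x=k-1\ge 2$ and $k_y=1\le 1$), yields $P_{(k-3,3,0)}$; iterating the general case of Lemma~\ref{lem:lattice-hop-two} then produces every $P_{(k-(2j+1),\,2j+1,\,0)}$, i.e.\ all edge operators with $k_y$ odd. Applying the same walk to $P_{(1,k-1,0)}$ (available from $P_{(k-1,1,0)}$ via the interchange symmetry) gives all edge operators with $k_x$ odd. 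Their union is precisely the set of targets on the edge $k_z=0$, since the only points left out are those with $k_x,k_y$ both even, which together with $k_z=0$ are exactly the all-even triples. It is essential that hop-by-two is used here rather than hop-by-one: a one-step walk along the edge flips both $k_x$ and $k_y$ and would pass through the forbidden all-even operators. By the interchange symmetry the same argument fills the $k_x=0$ and $k_y=0$ edges, so every target on the boundary lies in $\mf{g}_2$.

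Next I would fill the interior. Let $(k_x,k_y,k_z)$ be an interior target and, relabelling coordinates via Lemma~\ref{lem:invariant-under-XYZ-interchange}, assume $k_x$ is odd. Consider the line of constant $k_x$; its endpoint $P_{(k_x,k-k_x,0)}$ on the edge $k_z=0$ is already known from the previous step (it has $k_x$ odd). The edge-case of Lemma~\ref{lem:lattice-hop-one} steps one unit inward to produce $P_{(k_x,k-k_x-1,1)}$, after which the general case of Lemma~\ref{lem:lattice-hop-one}, fed by two consecutive known operators on the line, marches across the whole line and in particular reaches $P_{(k_x,k_y,k_z)}$. Because $k_x$ stays fixed and odd, every operator generated along the way is a target, so no all-even operator is ever needed as an input. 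This shows all targets at level $k$ lie in $\mf{g}_2$. Finally, when $k$ is odd every triple at level $k$ has at least one odd coordinate (the sum of three even numbers is even, not odd), so the argument fills the entire level, giving the stated ``in particular'' conclusion.

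The hard part will be the parity bookkeeping. Since hop-by-one is the only move that changes the triple's parity class yet it requires \emph{two} adjacent operators as inputs, and since the all-even operators are genuinely absent from $\mf{g}_2$, one must sequence the construction so that each hop's inputs are already-established targets rather than forbidden all-even operators. Restricting every walk to keep one odd coordinate fixed is exactly what makes this possible, and the delicate points are verifying that the required seed pairs are always available to start each line and checking the small-$k$ corner base cases where the near-corner seeds coincide.
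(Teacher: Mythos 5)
Your proof is correct and follows essentially the same route as the paper's: hop-by-two (Lemma~\ref{lem:lattice-hop-two}) along the $k_z=0$ edge starting from the seed $P_{(k-1,1,0)}$, then hop-by-one (Lemma~\ref{lem:lattice-hop-one}) sweeping inward along lines on which a fixed odd coordinate is held constant, with the $X,Y,Z$-interchange symmetry (Lemma~\ref{lem:invariant-under-XYZ-interchange}) supplying the remaining cases. Your explicit parity bookkeeping (hop-by-one flips two parities, hop-by-two preserves all three, so walks fixing an odd coordinate never need an all-even operator as input) makes precise what the paper leaves implicit, but the underlying argument is the same.
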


\begin{proof}
In this proof we will first show that if then $P_{(k-1,1,0)}\in\mf{g}_2$, then $P_{(k_x,k_y,k_z)}\in\mf{g}_2$ for any ordered triple $(k_x,k_y,k_z)$ at level $k$ where $k_y$ is odd. Once this result is established, we can use Lemma~\ref{lem:invariant-under-XYZ-interchange} to obtain the remaining cases. 

\vspace{0.5\baselineskip}

We first prove that $P_{(k-(2j+1),2j+1,0)}\in\mf{g}_2$ for all $0\le j\le\big\lfloor\frac{k-1}{2}\big\rfloor$. To do this, we simply repeatedly apply Lemma \ref{lem:lattice-hop-two}. If $P_{(k-1,1,0)}\in\mf{g}_2$, then $P_{(k-3,3,0)}\in\mf{g}_2$ by Lemma \ref{lem:lattice-hop-two}. Then, for each $2\le j\le\big\lfloor\frac{k-1}{2}\big\rfloor$ in sequence, we can use the fact that $P_{(k-(2j-3),2j-3,0)}\in\mf{g}_2$ and $P_{(k-(2j-1),2j-1,0)}\in\mf{g}_2$ to conclude using Lemma \ref{lem:lattice-hop-two} that $P_{(k-(2j+1),2j+1,0)}\in\mf{g}_2$. Therefore, we can generate $P_{(k-(2j+1),2j+1,0)}$ for all $0\le j\le\big\lfloor\frac{k-1}{2}\big\rfloor$.

\vspace{0.5\baselineskip}

Now, given some odd $k_y$, we know that for any $k_x$ and $k_z$ such that $k_x+k_y+k_z=k$, then $P_{(k_x+k_z,k_y,0)}\in\mf{g}_2$. We can now repeatedly apply Lemma \ref{lem:lattice-hop-one} to repeatedly subtract $1$ from the first coordinate and add $1$ to to the third coordinate to ultimately conclude that $P_{(k_x,k_y,k_z)}\in\mf{g}_2$ where   $k_y$ is odd. As previously mentioned, the rest of the cases follow from Lemma~\ref{lem:invariant-under-XYZ-interchange}.
\end{proof}

So far, we have proved that, if we can generate one particular operator at level $k$, e.g., $P_{(k-1,1,0)}$, then we can generate many other operators at the same level (i.e., all the $P_{(k_x,k_y,k_z)}\in\mf{g}_2$ at level $k$ such that  $k_y$ is odd). In fact, if $k$ is odd, and one has $P_{(k-1,1,0)}$, $P_{(0,k-1,1)}$ and $P_{(1,k-1,0)}$, then one can generate all other operators at level $k$. What remains is to prove that you can actually generate $P_{(k-1,1,0)}$, $P_{(0,k-1,1)}$ and $P_{(1,k-1,0)}$ for each $k$. We take care of this in the following lemma:

\begin{figure}[t]
    \centering
\includegraphics[width=.5\columnwidth]{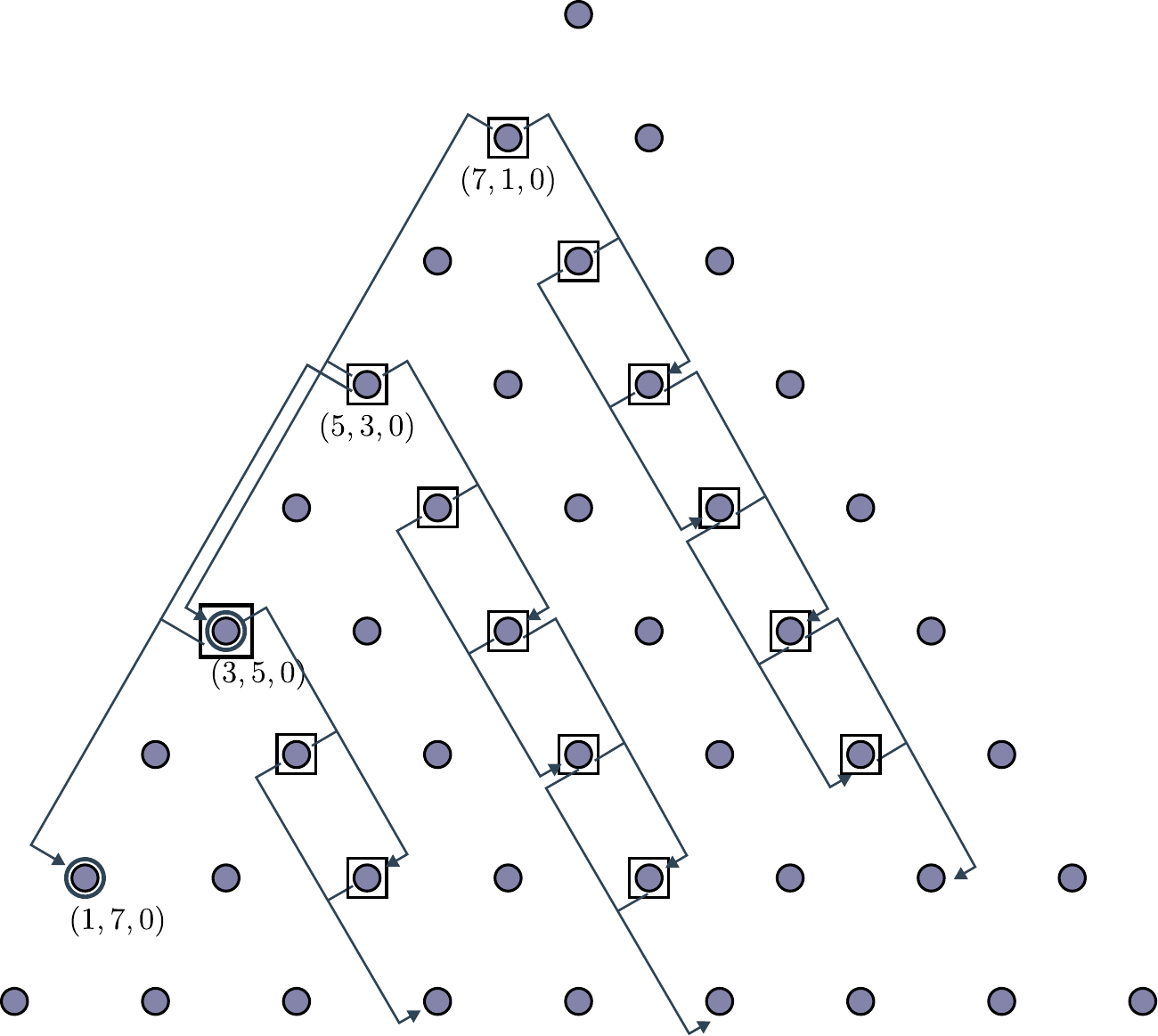}
    \caption{\textbf{Schematic diagram for the proof of Lemma~\ref{lem:fill-out-odd-within-level}}. In this figure we exemplify the key steps used in deriving Lemma~\ref{lem:fill-out-odd-within-level} for the barycentric lattice with $k=8$. We start with $P_{(k-1,1,0)}$, which and we repeatedly apply Lemma~\ref{lem:lattice-hop-two} to generate $P_{(k-3,3,0)}$, $P_{(k-5,5,0)}$, and so on. Then, we use these operators and Lemma~\ref{lem:lattice-hop-one} to generate $P_{(k_x,k_y,k_z)}$ for all $k_x+k_y+k_z=k$ with $k_y$ being odd. By Lemma~\ref{lem:invariant-under-XYZ-interchange}, this suffices to generate    all $P_{(k_x,k_y,k_z)}$ for all $k_x+k_y+k_z=k$ where at least one $k_x$, $k_y$ or $k_z$ is odd.
    \label{fig:proof_4}}
\end{figure}

\begin{lemma}
\label{lem:any-triple-at-least-1-odd}
If $(k_x,k_y,k_z)$ is any ordered triple such that at least one of $k_x, k_y, k_z$ is odd, then $P_{(k_x,k_y,k_z)}\in\mf{g}_2$.
\end{lemma}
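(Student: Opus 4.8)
The plan is to reduce the claim to a single family of ``seed'' operators $P_{(\ell-1,1,0)}$, one at each level $\ell$, and then let Lemma~\ref{lem:fill-out-odd-within-level} do the rest. Indeed, once $P_{(\ell-1,1,0)}\in\mf{g}_2$ is known, Lemma~\ref{lem:invariant-under-XYZ-interchange} immediately supplies the two relabeled seeds $P_{(0,\ell-1,1)}$ and $P_{(1,0,\ell-1)}$ (these have the same multiset of exponents), so the hypotheses of Lemma~\ref{lem:fill-out-odd-within-level} are met and every level-$\ell$ operator with at least one odd coordinate lies in $\mf{g}_2$. Thus the whole statement follows once we show $P_{(\ell-1,1,0)}\in\mf{g}_2$ for every $1\le\ell\le n$.

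To this end I would set up a strong induction on the level $\ell$, with hypothesis $H(\ell)$: every $P_{(k_x,k_y,k_z)}$ of level at most $\ell$ with at least one odd coordinate belongs to $\mf{g}_2$. The base case $H(1)$ is immediate, since $P_{(1,0,0)}$, $P_{(0,1,0)}$, $P_{(0,0,1)}$ are all in $\mf{g}_2$; one may also record $H(2)$ directly from the level-$2$ operators $P_{(1,1,0)},P_{(1,0,1)},P_{(0,1,1)}$ produced in the proof of Lemma~\ref{lem:invariant-under-XYZ-interchange}.

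The heart of the argument is the inductive step, which is a \emph{level-raising} move. Assuming $H(\ell)$, I would commute the seed $P_{(\ell-1,1,0)}$ with $P_{(2,0,0)}$ (which is in $\mf{g}_2$ by Lemma~\ref{lem:invariant-under-XYZ-interchange}). A parity/overlap count shows that a constituent string $X_aX_b$ of $P_{(2,0,0)}$ anticommutes with a string of $P_{(\ell-1,1,0)}$ precisely when exactly one of $a,b$ sits on the unique $Y$ site (turning that $Y$ into a $Z$), while the other index lands either on an identity site (creating a new $X$, raising the level) or on an $X$ site (annihilating it, lowering the level). Consequently
\begin{equation}
[P_{(\ell-1,1,0)},P_{(2,0,0)}] = \alpha\, P_{(\ell,0,1)} + \beta\, P_{(\ell-2,0,1)},
\end{equation}
with $\alpha\neq 0$. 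The lower-order term $P_{(\ell-2,0,1)}$ has level $\ell-1$ and carries an odd ($Z$) coordinate, so it is already in $\mf{g}_2$ by $H(\ell)$; hence $P_{(\ell,0,1)}\in\mf{g}_2$, and Lemma~\ref{lem:invariant-under-XYZ-interchange} converts it into the next seed $P_{(\ell,1,0)}$. Feeding this seed into Lemma~\ref{lem:fill-out-odd-within-level} yields all level-$(\ell+1)$ operators with an odd coordinate, which together with $H(\ell)$ establishes $H(\ell+1)$.

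The step I expect to be the main obstacle is precisely this level-raising commutator: one must confirm that it produces \emph{only} the two symmetrized types above (so that nothing uncontrolled at level $\ell+1$ contaminates the equation), that the leading coefficient $\alpha$ is genuinely nonzero, and that the residual term always carries an odd coordinate so the induction hypothesis can absorb it. The choice of $P_{(2,0,0)}$ rather than the native generator $P_{(0,0,2)}$ is deliberate: commuting with $P_{(0,0,2)}$ would produce two distinct level-$(\ell+1)$ operators and no lower-order remainder, leaving the new seed underdetermined, whereas $P_{(2,0,0)}$ conveniently deposits its contamination one level down, where it is already known. The small-$\ell$ edge cases (where $P_{(\ell-2,0,1)}$ does not exist and $\beta=0$) must also be checked, but they are benign.
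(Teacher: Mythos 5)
Your proposal is correct and follows essentially the same route as the paper's proof: a strong induction on the level whose inductive step commutes a known seed with $P_{(2,0,0)}$ to produce the next-level seed plus a lower-level remainder absorbed by the induction hypothesis, after which Lemmas~\ref{lem:invariant-under-XYZ-interchange} and~\ref{lem:fill-out-odd-within-level} fill out the level. The only difference is cosmetic (the paper commutes $P_{(k-1,0,1)}$ with $P_{(2,0,0)}$ to get $kP_{(k,1,0)}+(n-k+1)P_{(k-2,1,0)}$, which is your identity under the $Y\leftrightarrow Z$ relabeling that Lemma~\ref{lem:invariant-under-XYZ-interchange} makes harmless), and your commutator, including the nonvanishing of $\alpha$, checks out.
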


\begin{proof}
We prove this by strong induction on the level $k = k_x + k_y + k_z$. We start with the base cases $k=0$ and $k=1$. If $k=0$, then $k_x = k_y = k_z = 0$, so there are no applicable ordered triples. If $k=1$, then the ordered triples are $(k_x,k_y,k_z) = (1,0,0)$ and its permutations. But $P_{(1,0,0)}$ is already a generator of $\mf{g}_2$, so we are done with the base case.

\vspace{0.5\baselineskip}

Now we proceed to the inductive step. Suppose that we have proved that $P_{(k_x,k_y,k_z)}\in\mf{g}_2$ for all ordered triples on level at most $k$ where at least one of $k_x, k_y, k_z$ is odd. Then by the inductive hypothesis, $P_{(k-1,0,1)}$ and $P_{(k-2,1,0)}$ are both in $\mf{g}_2$. Therefore, recalling that $P_{(2,0,0)}\in \mf{g}_2$, we have that the commutator
\begin{equation}
    \left[P_{(k-1,0,1)}, P_{(2,0,0)}\right] \propto (k)P_{(k,1,0)} + (n-k+1)P_{(k-2,1,0)}\,,
\end{equation}
implies that $P_{(k,1,0)}\in\mf{g}_2$. Then Lemmas~\ref{lem:invariant-under-XYZ-interchange} and~\ref{lem:fill-out-odd-within-level} imply that all operators $P_{(k_x,k_y,k_z)}$ on level $k+1$ with at least one of $k_x, k_y, k_z$ odd are contained in $\mf{g}_2$. This completes the induction and hence the proof.
\end{proof}

\begin{figure}[t]
    \centering
\includegraphics[width=1\columnwidth]{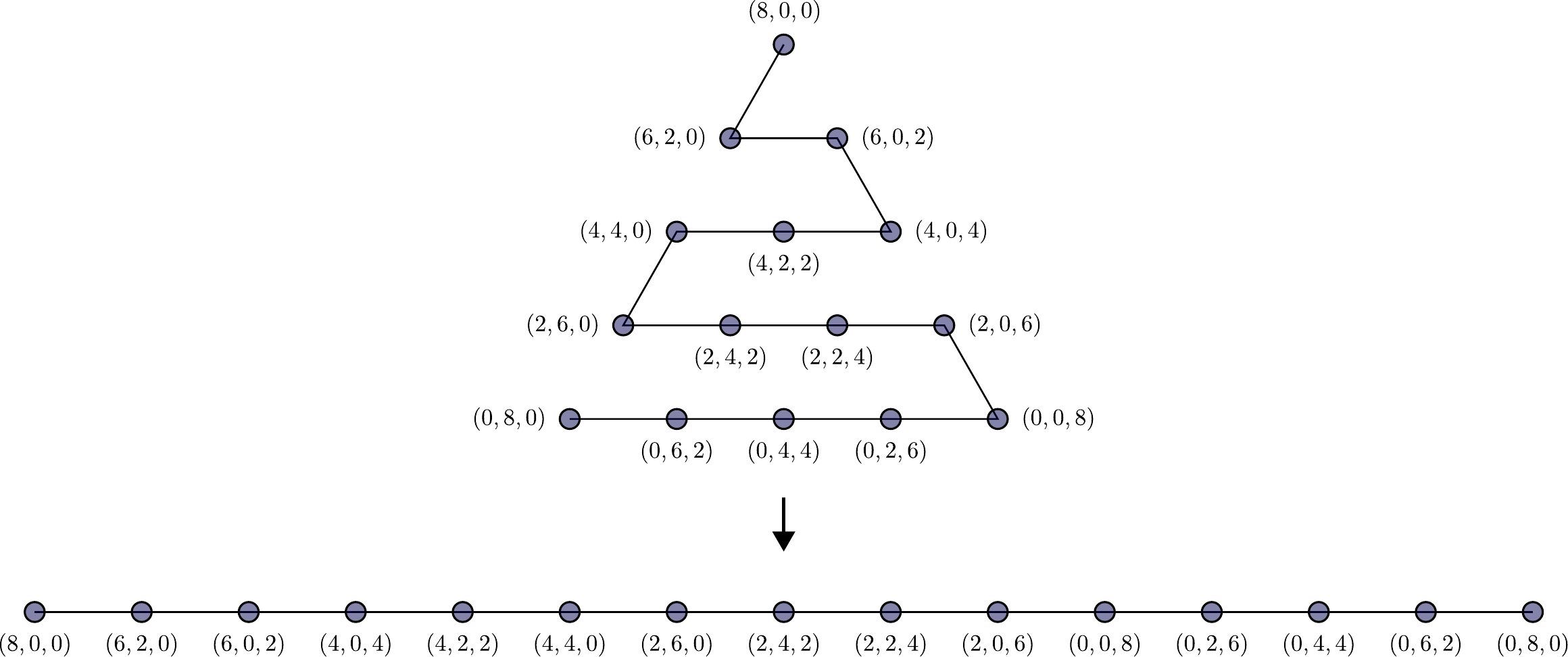}
\caption{\textbf{Ordering symmetrized Pauli strings in the barycentric lattice.} In the figure we schematically show how a ``snaking'' pattern can be created in the barycentric lattice that lines up all operators $P_{(2a,2b,2c)}$ such that $a+b+c=k$ for some fixed value of $k$. Notice that the operators have a distance of $2$ from their neighbours in the line-up (here we have shown $k=4$). The previous means that we can create a two-coordinate operator as in for each pair of neighbours in the line-up. 
    \label{fig:proof_5}}
\end{figure}

But what about the ordered triples $(k_x,k_y,k_z)$ where all three of these values are even? Well, these are exactly the ordered triples that show up in the central elements $C_\mu$, so we can expect these to be restricted by central projections, as articulated in Theorem~2. The one exception is if $(k_x,k_y,k_z)$ is a permutation of $(2,0,0)$, since we are already given $P_{(0,0,2)}$ as a generator. Sure enough, we already generate all of these operators in the proof of Lemma \ref{lem:invariant-under-XYZ-interchange}. Other than those exceptions, we will ultimately be unable to generate $P_{(k_x,k_y,k_z)}$ operators in isolation if $k_x,k_y,k_z$ are all even. However, as we will see, we can still generate all combinations of these operators that satisfy the condition of Theorem~4. This is what we show now, and it is the last step of the proof:

\begin{lemma}
\label{lem:all-even-triples-orthogonal-to-Ck}
For each integer $0\le \mu\le\lfloor\frac{n}{2}\rfloor$ such that $\mu\neq 1$, $\sum_{a+b+c=\mu}c_{(2a,2b,2c)}P_{(2a,2b,2c)}\in\mf{g}_2$ for any set of coefficients $c_{(2a,2b,2c)}$ such that $\sum_{a+b+c=\mu}\frac{c_{(2a,2b,2c)}}{a!b!c!}=0$.
\end{lemma}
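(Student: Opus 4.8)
The plan is to realize the target hyperplane of all-even operators as the span of a family of two-term operators living on the edges of a snaking Hamiltonian path through the triangular array of lattice points $\{(a,b,c):a+b+c=\mu\}$, exactly as depicted in Sup. Fig.~\ref{fig:proof_5}. Fix $\mu\neq 1$ with $0\le\mu\le\lfloor n/2\rfloor$ and set $V_\mu=\spn_{\mbb{R}}\{P_{(2a,2b,2c)}:a+b+c=\mu\}$; this space has dimension $\binom{\mu+2}{2}$, and the single linear constraint $\sum_{a+b+c=\mu}c_{(2a,2b,2c)}/(a!b!c!)=0$ cuts out the codimension-one subspace $V_\mu\cap C_\mu^\perp$ of operators orthogonal to $C_\mu$ (using that the projection of $P_{(2a,2b,2c)}$ onto $C_\mu$ is proportional to $1/(a!b!c!)$). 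The goal is precisely to show $V_\mu\cap C_\mu^\perp\subseteq\mf{g}_2$, so it suffices to produce $\binom{\mu+2}{2}-1$ linearly independent elements of $\mf{g}_2$ lying inside this subspace.

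The first step is to generate, for each pair of neighbouring all-even lattice points, a two-term operator supported on exactly those two points. Two such neighbours differ by an elementary hop of length $2$, say $(k_x,k_y,k_z)$ and $(k_x,k_y-2,k_z+2)$ with $k_x,k_y,k_z$ all even and $k_y\ge 2$. Their midpoint $P_{(k_x,k_y-1,k_z+1)}$ has two odd entries, hence lies in $\mf{g}_2$ by Lemma~\ref{lem:any-triple-at-least-1-odd}, and by Lemma~\ref{lem:lattice-hop-one},
\begin{equation}
[P_{(k_x,k_y-1,k_z+1)},P_{(1,0,0)}]\propto k_y\,P_{(k_x,k_y,k_z)}-(k_z+2)\,P_{(k_x,k_y-2,k_z+2)}\,,
\end{equation}
a nonzero element of $\mf{g}_2$ supported only on the two neighbours. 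A direct computation with the $1/(a!b!c!)$ weights (or, more cleanly, Corollary~\ref{cor:central-projection-two-body-ap}) shows this combination is automatically orthogonal to $C_\mu$. Hops in the remaining two directions are handled identically after invoking the $X\leftrightarrow Y\leftrightarrow Z$ symmetry of Lemma~\ref{lem:invariant-under-XYZ-interchange}, commuting the appropriate midpoint with $P_{(0,1,0)}$ or $P_{(0,0,1)}$.

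The second step is to assemble these edge operators along the snake. Ordering the $N=\binom{\mu+2}{2}$ lattice points as $v_1,\dots,v_N$ along a Hamiltonian path of the triangular grid, the $N-1$ edge operators $D_i$ (each supported on $v_i$ and $v_{i+1}$, with nonzero coefficient on the newly visited point $v_{i+1}$) are linearly independent by a standard echelon argument: $P_{v_{i+1}}$ appears in $D_i$ but in none of $D_1,\dots,D_{i-1}$. Since all $N-1$ of these operators lie in the $(N-1)$-dimensional space $V_\mu\cap C_\mu^\perp$ and are independent, they span it, yielding $V_\mu\cap C_\mu^\perp\subseteq\mf{g}_2$, which is exactly the claim.

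I expect the main obstacle to be the bookkeeping of the first step near the boundary of the triangle: verifying that the midpoint always carries at least one odd coordinate so that Lemma~\ref{lem:any-triple-at-least-1-odd} applies, and that the single-body commutator returns precisely the two intended all-even terms with nonzero coefficients, including the edge cases where one coordinate vanishes. The existence of a snaking Hamiltonian path on the triangular grid (Sup. Fig.~\ref{fig:proof_5}) is geometrically clear but should be stated explicitly, and although orthogonality to $C_\mu$ of each edge operator is guaranteed abstractly by central projections, confirming it via the $1/(a!b!c!)$ weight computation keeps the construction self-contained.
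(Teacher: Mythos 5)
Your proof is correct and follows essentially the same route as the paper's: the same construction of two-term edge operators (commuting the odd-coordinate midpoint $P_{(k_x,k_y-1,k_z+1)}$ with a one-body generator) assembled along a snaking path through the all-even sublattice. The only immaterial difference is the final step, where you conclude by linear independence plus dimension counting in $V_\mu\cap C_\mu^\perp$, while the paper matches coefficients greedily along the path with the last one fixed automatically by the central-projection constraint.
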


\begin{proof}
For $\mu=0$, note that $C_0 = I$, and the singular constraint is $c_{(0,0,0)} = 0$, so we do not need to generate anything here. From now on, assume that $\mu\ge 2$.

\vspace{0.5\baselineskip}

First, we show that, if $(2a,2b,2c)$ and $(2a',2b',2c')$ are two ordered triples on the same level that are separated by a distance of just $2$ on the barycentric lattice, then we can produce a linear combination using just $P_{(2a,2b,2c)}$ and $P_{(2a',2b',2c')}$ that lies in $\mf{g}_2$. We will call these operators \textbf{two-coordinate operators} for convenience. For example, suppose that $a'=a$, $b'=b+1$, and $c'=c-1$. Then we can take the following commutator:
\begin{equation}
    [P_{(2a,2b+1,2c-1)}, P_{(1,0,0)}] \propto (-2c)P_{(2a,2b,2c)} + (2(b+1))X_{2a,2(b+1),2(c-1)}.
\end{equation}
Notice that the two coefficients, $-2c$ and $2(b+1)$, indeed satisfy the required condition:
\begin{equation}
    \frac{-2c}{a!b!c!} + \frac{2(b+1)}{a!(b+1)!(c-1)!} = 0.
\end{equation}
If one  wishes to do repeat the same calculation as above, but for two coordinates separated in the direction that keeps the $y$-coordinate or $z$-coordinate constant, then one needs to take a commutator with $P_{(0,1,0)}$ or $P_{(0,0,1)}$, respectively.

\vspace{0.5\baselineskip}

We will now use these two-coordinate operators to produce an arbitrary linear combination of the form given in the statement of the lemma. In particular we will start by drawing an analogy to a very simple linear algebra exercise:  In $\mathbb{R}^n$, with $e_j$ for $1\le j\le n$ as the standard basis vectors, it is easy to show that the $n-1$ vectors $v_j = e_j - e_{j+1}$ span the space of all vectors whose components add up to $0$. In particular, suppose one wants to produce the vector $(c_1,\cdots,c_n)$ with $c_1 + \cdots + c_n = 0$. Then one just take the appropriate amount of $v_1$ to match the first component, then the appropriate amount of $v_2$ to match the second component, and so on, until one takes the appropriate amount of $v_{n-1}$ to match the $(n-1)^\text{th}$ component. Since $c_1 + \cdots + c_n = 0$, the $n^\text{th}$ component will automatically equal $c_n$. As we can see, following the previous procedure, one  successfully produces the vector $(c_1,\cdots,c_n)$.

\vspace{0.5\baselineskip}

To complete our proof, we will imitate the linear algebra exercise we just presented. Take the $N_\mu = \binom{\mu+2}{2}$ operators $P_{(2a,2b,2c)}$ such that $a+b+c=k$ for a fixed value of $\mu$, and line them up in an order such that each coordinate has distance $2$ from its nearest neighbors in the line. For each $j$ from $1$ to $N_\mu$, label the coordinate at position $j$ in the line as $(2a_j,2b_j,2c_j)$. There are many ways to do this, but one simple way is a ``snaking'' pattern that divides the operators into rows based on decreasing values of $a$. As schematically shown in Sup. Fig.~\ref{fig:proof_5}, one  starts with the first row, which is just $P_{(2\mu,0,0)}$. Then goes to the next row, which will have $P_{(2(\mu-1),2,0)}$ and $P_{(2(\mu-1),0,2)}$ in that order. Then one transitions to the next row and traverse it backward, so that one crosses  $P_{(2(\mu-2),0,4)}$, $P_{(2(\mu-2),2,2)}$, and $P_{(2(\mu-2),4,0)}$, in that order. If one follows this pattern of alternating traversing each successive row forward or backward, one will eventually  reached the end of the lattice (see Sup. Fig.~\ref{fig:proof_5}).

\vspace{0.5\baselineskip}

Now, for each $j$ from $1$ to $N_\mu-1$, we can construct a two-coordinate operator $T_j$ out of $P_{(2a_j,2b_j,2c_j)}$ and $P_{(2a_{j+1},2b_{j+1},2c_{j+1})}$. Now, suppose we wish to generate the operator $M = \sum_{a+b+c=\mu}c_{(2a,2b,2c)}P_{(2a,2b,2c)}$ for any set of coefficients $c_{(2a,2b,2c)}$ such that $\sum_{a+b+c=\mu}\frac{c_{(2a,2b,2c)}}{a!b!c!}=0$. Then start with the right amount of $T_1$ so that the coefficient of $P_{(2a_1,2b_1,2c_1)}$ matches that in $M$. Then add the right amount of $T_2$ so that the coefficient of $P_{(2a_2,2b_2,2c_2)}$ matches that in $M$. Keep doing this, until you finally add the right amount of $T_{N_\mu-1}$ so that the coefficient of $P_{(2a_{N_k-1},2b_{N_\mu-1},2c_{N_\mu-1})}$ matches that in $M$. Due to the central projections condition, the coefficient of $P_{(2a_{N_\mu},2b_{N_\mu},2c_{N_\mu})}$ will now automatically match that in $M$. Therefore, we have constructed our desired operator, so we are done.
\end{proof}

The proof is now complete, so let us summarize what we have done. As shown in Theorem~2 of the main text, we first used the central projections condition to show that being orthogonal to every $C_\mu$ except $\mu=1$ is a necessary condition for an operator to be in $\mf{g}_2$. After that, we showed that $\mf{g}_2$ indeed contains all operators that satisfy that condition. As shown in Lemma \ref{lem:any-triple-at-least-1-odd}, we successfully generated $P_{(k_x,k_y,k_z)}$ where at least one of $k_x,k_y,k_z$ is odd. The form of the $C_\mu$ operators makes it clear that all such $P_{(k_x,k_y,k_z)}$ are already orthogonal to all of the $C_\mu$ operators. Finally, as shown in Lemma \ref{lem:all-even-triples-orthogonal-to-Ck}, we successfully generated all linear combinations of the $P_{(2a,2b,2c)}$ operators on level $2\mu$ that are orthogonal to $C_\mu$. Note that we could ignore $\mu=1$ because we already showed in Lemma \ref{lem:invariant-under-XYZ-interchange} that $\mf{g}_2$ contains all $6$ level-$2$ operators.

\vspace{0.5\baselineskip}

The only operators in $\mf{u}^{S_n}(d)$ that are "missing" from $\mf{g}_2$ are the $C_\mu$ operators for $0\le \mu\le\lfloor n/2\rfloor$ except $\mu=1$. This includes $C_0 = I$, but it also includes the operators $C_\mu$ for $2\le \mu\le\lfloor n/2\rfloor$. As a result, we obtain $\text{dim}(\LC) = \binom{n+3}{3} - \lfloor n/2\rfloor$.

\vspace{0.5\baselineskip}

As a final observation for this section, note that the requirement of orthogonality with $C_0 = I$ is what prevents any operator with a nonzero $P_{(0,0,0)} = I$ component from being in $\mf{g}_2$. Of course, the identity will be in the center of the commutant of any set of generators, so the central projections condition of Ref. \cite{zimboras2015symmetry} will always exclude operators with nonzero trace from being in the algebra if the generators are all traceless. Of course, the fact that the identity cannot be in such an algebra is by itself a pretty trivial observation, since it is such a common fact that the commutator of two finite-dimensional operators is traceless. But this work already serves to demonstrate that the central projections condition can be used to prove a highly non-obvious result. It is interesting to note that the central projection generalizes such a seemingly mundane statement as the fact that the commutator of two finite dimensional operators is traceless, and yet it is powerful enough to crop up in unexpected ways.

\section{Proof of Theorem~5}\label{sec:theo:k-body-DLA}

Let us now prove Theorem~5, which we restate for convenience.

\begin{theorem}
\label{thm:k-body-DLA-ap}
Consider the set $\GC_k$ of $S_n$-equivariant generators in Eq.~(36). Then, the associated DLA $\mf{g}_k$ is
\begin{align}
\mf{g}_k&=Q\left(\bigoplus_{\lambda=1}^L\mf{su}(m_\lambda)\oplus\underbrace{\mf{u}(1)\oplus\cdots\oplus\mf{u}(1)}_{\lfloor k/2\rfloor}\right)\\
&=\mf{su}^{S_n}_{\rm{cless}}(d) \boxplus Q\left( \underbrace{\mf{u}(1)\oplus\cdots\oplus\mf{u}(1)}_{\lfloor k/2\rfloor}  \right) \,,
\end{align}
where $Q\left( \mf{u}(1)\oplus\cdots\oplus\mf{u}(1)  \right) = \spn_{\mbb{R}}\{iC_1,\cdots,iC_{\lfloor k/2\rfloor}\}$ is a $\lfloor k/2\rfloor$-dimensional subalgebra of $\mf{z}(\mf{u}^{S_n}(d))$.
\end{theorem}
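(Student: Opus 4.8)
The plan is to prove Theorem~\ref{thm:k-body-DLA-ap} as a pair of inclusions, bootstrapping off the already-established $k=2$ case (Theorems~\ref{theo:subspace-not-universal} and~\ref{thm:true-AD-algebra-ap}) so that essentially no new lattice-hopping construction is required for general $k$. The decomposition $\mf{u}^{S_n}(d)=\mf{su}^{S_n}_{\rm{cless}}(d)\oplus\mf{z}(\mf{u}^{S_n}(d))$ is orthogonal with respect to the Frobenius inner product (in each isotypic block a traceless multiplicity-space matrix is orthogonal to a multiple of the identity), and by Proposition~\ref{lem:Ck-span-Sn-center-app} the central summand is spanned by the $C_\mu$, which are moreover mutually orthogonal since each lives on a distinct level $2\mu$. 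I will pin down the central content of $\mf{g}_k$ against this orthogonal frame.

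For the containment $\mf{g}_k\subseteq\mf{su}^{S_n}_{\rm{cless}}(d)\boxplus\spn_{\mbb{R}}\{iC_1,\dots,iC_{\lfloor k/2\rfloor}\}$, I would argue entirely from central projections. Every generator in $\GC_k$ is a symmetrized Pauli string, hence lies in $\mf{u}^{S_n}(d)$, and since the commutant is closed under commutators the Lie closure obeys $\mf{g}_k\subseteq\mf{u}^{S_n}(d)$. Each $C_\mu$ is central in $\mf{u}^{S_n}(d)$, so $[H,C_\mu]=0$ for every $H\in\GC_k$, and Theorem~\ref{theo:central-projection-k-body} supplies $\Tr[HC_\mu]=0$ for all $H\in\GC_k$ and all $\mu\in\{0\}\cup\{\lfloor k/2\rfloor+1,\dots,\lfloor n/2\rfloor\}$. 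The central projections theorem (Theorem~\ref{theo:central-projection}) then upgrades this to $\Tr[MC_\mu]=0$ for every $M\in\mf{g}_k$ and these $\mu$. Writing $M=M_{\rm{cless}}+\sum_\nu a_\nu C_\nu$, orthogonality gives $\Tr[MC_\mu]=a_\mu\Tr[C_\mu^2]$ with $\Tr[C_\mu^2]\neq 0$, so $a_\mu=0$ for each excluded $\mu$ and the central part of $M$ lands in $\spn_{\mbb{R}}\{iC_1,\dots,iC_{\lfloor k/2\rfloor}\}$, yielding the inclusion.

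For the reverse containment I would peel off one central direction per even-bodyness generator. Since $\GC_2\subseteq\GC_k$, monotonicity of the Lie closure gives $\mf{g}_2\subseteq\mf{g}_k$, so Theorem~\ref{thm:true-AD-algebra-ap} already yields $\mf{su}^{S_n}_{\rm{cless}}(d)\subseteq\mf{g}_k$ together with $iC_1\in\mf{g}_k$. It then suffices to produce $C_\mu$ for $2\le\mu\le\lfloor k/2\rfloor$. For each such $\mu$ the operator $P_{(0,0,2\mu)}$ is a generator (as $2\le 2\mu\le k$), hence $iP_{(0,0,2\mu)}\in\mf{g}_k$. I would decompose it orthogonally as $P_{(0,0,2\mu)}=W_\mu+\gamma_\mu C_\mu$ with $W_\mu\in\mf{su}^{S_n}_{\rm{cless}}(d)$ and $\gamma_\mu\in\mbb{R}$; here $C_\mu$ is the unique central element supported on level $2\mu$, and the projection formula from the proof of Theorem~\ref{thm:true-AD-algebra-ap} gives $\gamma_\mu\propto n!/(\mu!\,(n-2\mu)!)\neq 0$. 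Because $iW_\mu\in\mf{su}^{S_n}_{\rm{cless}}(d)\subseteq\mf{g}_k$, subtraction yields $\gamma_\mu\,(iC_\mu)=iP_{(0,0,2\mu)}-iW_\mu\in\mf{g}_k$, so $iC_\mu\in\mf{g}_k$. Collecting all $\mu$ gives $\mf{su}^{S_n}_{\rm{cless}}(d)\boxplus\spn_{\mbb{R}}\{iC_1,\dots,iC_{\lfloor k/2\rfloor}\}\subseteq\mf{g}_k$; combined with the first inclusion, and using that central elements commute with everything so the Minkowski sum is genuinely a subalgebra, this establishes the theorem. The dimension count $\dim(\mf{g}_k)=\binom{n+3}{3}-(\lfloor n/2\rfloor+1)+\lfloor k/2\rfloor$ then follows immediately and reduces to $\binom{n+3}{3}-\lfloor n/2\rfloor$ at $k=2$.

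The hard part is almost entirely inherited rather than new: the genuinely delicate work sits in the base case Theorem~\ref{thm:true-AD-algebra-ap} (the lattice-hopping generation of all of $\mf{su}^{S_n}_{\rm{cless}}(d)$) and in the central-projection computation of Theorem~\ref{theo:central-projection-k-body}. The one point demanding care in the present argument is the claim that each added even generator $P_{(0,0,2\mu)}$ contributes exactly one new central direction: I must verify both that it has nonzero overlap with $C_\mu$ (so the direction is genuinely present) and that it is supported on a single level (so no other $C_{\mu'}$ is incidentally produced). Together these guarantee that the center picked up grows by precisely one dimension for each even $\kappa\le k$, so that the lower bound meets the central-projection upper bound exactly and no gap remains.
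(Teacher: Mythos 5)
Your proof is correct, and it shares the paper's overall skeleton (inherit $\mf{su}^{S_n}_{\rm{cless}}(d)$ and $iC_1$ from $\mf{g}_2\subseteq\mf{g}_k$ via Theorem~\ref{thm:true-AD-algebra-ap}; cap the center from above via Theorems~\ref{theo:central-projection} and~\ref{theo:central-projection-k-body}), but it handles the one genuinely new step --- producing $iC_\mu$ for $2\le\mu\le\lfloor k/2\rfloor$ --- by a different mechanism. The paper goes back to the generation machinery: for each generator $P_{(0,0,\kappa)}$ with $\kappa\le k$ it repeatedly applies the one-step hopping lemma (Lemma~\ref{lem:lattice-hop-one}) to fill out the \emph{entire} level $\kappa$ of the barycentric lattice, and then observes that $C_\mu$, living wholly on level $2\mu\le k$, lies in that span. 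You instead avoid any new lattice-hopping: you orthogonally split $P_{(0,0,2\mu)}=W_\mu+\gamma_\mu C_\mu$ against the Frobenius-orthogonal decomposition $\mf{u}^{S_n}(d)=\mf{su}^{S_n}_{\rm{cless}}(d)\oplus\mf{z}(\mf{u}^{S_n}(d))$, note that level-disjointness of the $C_{\mu'}$ forces the central part to be a multiple of $C_\mu$ alone with $\gamma_\mu\propto n!/(\mu!\,(n-2\mu)!)\neq 0$, and subtract $W_\mu\in\mf{su}^{S_n}_{\rm{cless}}(d)\subseteq\mf{g}_k$ inside the vector space $\mf{g}_k$. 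Your route is more economical and makes the ``one new central direction per even bodyness'' bookkeeping transparent, since the lower bound is matched against the central-projection upper bound coordinate-by-coordinate in the orthogonal frame $\{C_0,\dots,C_{\lfloor n/2\rfloor}\}$; the paper's route proves slightly more along the way (the full span of every level $\kappa\le k$ lies in $\mf{g}_k$), at the cost of re-invoking the constructive hopping argument. Both correctly conclude with the same two-sided identification of $\mf{g}_k$, and your dimension count $\binom{n+3}{3}-\bigl(\lfloor n/2\rfloor+1\bigr)+\lfloor k/2\rfloor$ is consistent with the paper's $k=2$ value.
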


Furthermore, $\mf{g}_k$ contains all linear combinations of the form
\begin{equation}
    i\sum_{0\le k_x+k_y+k_z\le n}c_{(k_x,k_y,k_z)}P_{(k_x,k_y,k_z)},
\end{equation}
for a collection of real coefficients $c_{(k_x,k_y,k_z)}$ that satisfy
\begin{equation}
    \sum_{a+b+c=\mu}\frac{c_{(2a,2b,2c)}}{a!b!c!} = 0\,,
\end{equation}
for $\mu=0$ and each integer $\lfloor k/2\rfloor +1\le \mu\le\lfloor n/2\rfloor$.

\begin{proof}
Since $\GC_k$ contains the generators $P_{(1,0,0)}$, $P_{(0,1,0)}$ and $P_{(0,0,2)}$, $\mf{g}_k$ contains $\mf{g}_2$. Therefore, by Theorem 4, $\mf{g}_k$ contains all linear combinations of the form
\begin{equation}
    i\sum_{0\le k_x+k_y+k_z\le n}c_{(k_x,k_y,k_z)}P_{(k_x,k_y,k_z)},
\end{equation}
for a collection of real coefficients $c_{(k_x,k_y,k_z)}$ that satisfy
\begin{equation}
    \sum_{a+b+c=\mu}\frac{c_{(2a,2b,2c)}}{a!b!c!} = 0\,,
\end{equation}
for each integer $0\le \mu\le\lfloor n/2\rfloor$ except $\mu=1$.

\vspace{0.5\baselineskip}

Let us now take advantage of our higher-body generators. For each $1\le \mu\le k$, since $\mf{g}_k$ contains the generator $P_{(0,0,\mu)}$, along with the one-body generators $P_{(0,0,1)}$ and $P_{(0,1,0)}$, $\mf{g}_k$ contains the full span of symmetrized Pauli strings at level $\mu$, since we can repeatedly apply Lemma \ref{lem:lattice-hop-one} to hop around one step at a time on level $\mu$ of the barycentric lattice until we have reached every symmetrized Pauli string at level $\mu$. This means that, among the conditions on the coefficients that come from the central projections condition, i.e.,
\begin{equation}
    \sum_{a+b+c=\mu}\frac{c_{(2a,2b,2c)}}{a!b!c!} = 0\,,
\end{equation}
for each integer $0\le \mu\le\lfloor n/2\rfloor$ except $\mu=1$, we are now able to additionally disregard the condition for $1\le \mu\le\lfloor k/2\rfloor$. In particular, since the central element $C_{\mu}$ lives entirely on level $2\mu$, we can now generate $C_{\mu}$ for every $1\le \mu\le\lfloor k/2\rfloor$. Since these are central elements, and there are $\lfloor k/2\rfloor$ of them, we know that these elements span some representation of a direct sum of $\lfloor k/2\rfloor$ algebras $\mf{u}(1)$.

\vspace{0.5\baselineskip}

In contrast, for $\mu=0$ and $\lfloor k/2\rfloor + 1\le \mu\le\lfloor n/2\rfloor$, the central projections condition still implies that $\Tr[MC_{\mu}] = 0$ for every $M\in\mf{g}_k$, because none of the generators in $\GC_k$ share any Pauli strings in common with $C_{\mu}$ for $\mu=0$ and $\lfloor k/2\rfloor + 1\le \mu\le\lfloor n/2\rfloor$. It follows that, for every $M\in\mf{g}_k$, its coefficients $c_{(k_x,k_y,k_z)}$ satisfy
\begin{equation}
    \sum_{a+b+c=\mu}\frac{c_{(2a,2b,2c)}}{a!b!c!} = 0\,,
\end{equation}
for $\mu=0$ and each integer $\lfloor k/2\rfloor + 1\le \mu\le\lfloor n/2\rfloor$.

\vspace{0.5\baselineskip}

By Theorem~6, we know that
\begin{align}
\mf{g}_2
&=\mf{su}^{S_n}_{\rm{cless}}(d)\boxplus Q\left(\mf{u}(1)\right)\,,
\end{align}
where $Q$ is some representation of $\mf{u}(1)$.

However, the only difference between $\mf{g}_k$ and $\mf{g}_2$ is the inclusion of $\lfloor k/2\rfloor - 1$ new central elements $C_{2\mu}$ for $2\le \mu\le\lfloor k/2\rfloor$. Therefore, the only possibility for the decomposition of $\mf{g}_k$ is
\begin{align}
\mf{g}_k
&=\mf{su}^{S_n}_{\rm{cless}}(d)\boxplus Q_k\left(\underbrace{\mf{u}(1)\oplus\cdots\oplus\mf{u}(1)}_{\lfloor k/2\rfloor}\right)\,,
\end{align}
where $Q_k$ is some representation  of  a direct sum of $\lfloor k/2\rfloor$ algebras $\mf{u}(1)$.
\end{proof}

\section{Proof of Theorem~6}\label{sec:theo:central-projection-k-body}

Let us now prove Theorem~6, which we restate for convenience.

\setcounter{theorem}{5}
\begin{theorem}
\label{theo:central-projection-k-body-sm}
Consider the set $\GC_k$ of $S_n$-equivariant generators in Eq.~(36). Then, one has that
\begin{equation}
    \Tr[H C_\mu]=0\,,
\end{equation}
for all $H\in\mathcal{G}_k$ and for all $C_\mu$ with $\mu=0$ and  $\lfloor \frac{k}{2}\rfloor< \mu\le\lfloor n/2\rfloor$.
\end{theorem}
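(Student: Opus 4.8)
The plan is to mirror the proof of Theorem~3, exploiting the fact that distinct Pauli strings are orthogonal under the Frobenius inner product $\langle A,B\rangle=\Tr[A^\dagger B]$. Since each generator $H\in\GC_k$ and each $C_\mu$ is a real linear combination of Pauli strings with unit coefficients, and since $\Tr[P_\sigma P_\tau]=2^n\delta_{\sigma\tau}$ for Pauli strings $P_\sigma,P_\tau$, the quantity $\Tr[H C_\mu]$ can be nonzero only when $H$ and $C_\mu$ share at least one Pauli string in their respective expansions. Thus the whole proof reduces to a combinatorial bookkeeping of which Pauli strings appear in the generators versus in the $C_\mu$; note in particular that it suffices to check the generators, not all of $\mf{g}_k$.

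First I would recall that $C_\mu$ lives entirely on level $2\mu$ and is supported exclusively on Pauli strings with an even number of $X$, of $Y$, and of $Z$ symbols (summing to $2\mu$). Comparing against the generators: the one-body generators $P_{(1,0,0)}$ and $P_{(0,1,0)}$ consist of weight-one strings (a single $X$, resp.\ a single $Y$), which can never have all-even symbol counts; hence they share no Pauli string with any $C_\mu$, giving $\Tr[P_{(1,0,0)}C_\mu]=\Tr[P_{(0,1,0)}C_\mu]=0$ for every $\mu$. The remaining generators $P_{(0,0,\kappa)}$ (for $2\le\kappa\le k$) consist of strings with $\kappa$ $Z$'s and no $X$ or $Y$; such a string matches a string inside $C_\mu$ precisely when $\kappa$ is even and $\mu=\kappa/2$ (matching the $P_{(0,0,2\mu)}$ term of $C_\mu$).

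It then remains to verify the two claimed ranges of $\mu$. For $\mu=0$ I would note $C_0=P_{(0,0,0)}=\id$, so $\Tr[HC_0]=\Tr[H]$, which vanishes because every generator is a sum of non-identity (hence traceless) Pauli strings. For $\lfloor k/2\rfloor<\mu\le\lfloor n/2\rfloor$, the only generators that could possibly overlap are the $P_{(0,0,\kappa)}$, and overlap with $C_\mu$ requires $\kappa=2\mu$. But $\mu\ge\lfloor k/2\rfloor+1$ forces $2\mu\ge 2\lfloor k/2\rfloor+2\ge k+1>k$, using $2\lfloor k/2\rfloor\ge k-1$; thus $\kappa=2\mu>k$ lies outside the allowed range $2\le\kappa\le k$, regardless of the parity of $k$. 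Hence no generator in $\GC_k$ shares a Pauli string with such a $C_\mu$, and $\Tr[HC_\mu]=0$.

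This argument is essentially elementary and self-contained; I do not expect a serious obstacle beyond the careful parity bookkeeping in the last step, where one must confirm uniformly for even and odd $k$ that $\mu>\lfloor k/2\rfloor$ pushes $2\mu$ strictly past $k$ (handled by the inequality $2\mu\ge 2\lfloor k/2\rfloor+2\ge k+1$). The only other point requiring a moment's care is observing that the two one-body generators contribute no overlap for \emph{any} $\mu$, which is immediate from the odd weight of their strings.
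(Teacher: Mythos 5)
Your proposal is correct and follows essentially the same route as the paper: the paper's proof of this theorem simply invokes the argument of Theorem~3, namely that $\Tr[HC_\mu]$ vanishes because the generators share no Pauli strings with the relevant $C_\mu$ (so the product is a combination of traceless, non-identity Pauli strings), which is exactly your orthogonality argument. Your version is merely more explicit, spelling out the parity bookkeeping ($2\mu\ge 2\lfloor k/2\rfloor+2>k$) and the $\mu=0$ case that the paper leaves implicit.
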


\begin{proof}
    The proof of this theorem follows that of Theorem~3, where it suffices to note that none of the generators in $\GC_k$ share any Pauli strings in common with any $C_\mu$, with the exception of $P_{(0,0,\mu)}$.
\end{proof}

\section{Error in the proof of Ref.~\cite{albertini2018controllability}}\label{sec:error}

In our main results we show that the DLA ensuing from $\mf{g}_2$ is semi-universal  and subspace controllable but fails to be universal. This result clashes with that of  Ref.~\cite{albertini2018controllability} where the authors claim that the DLA is universal, i.e., that $\mf{g}_2=\mf{su}^{S_n}(d)$. 

In this section we will follow we will follow the notation used in Ref.~\cite{albertini2018controllability} for ease of comparison of results. Hence, we will now redefine $Y\rightarrow - Y$ and $P_{(k_x,k_y,k_z)}\rightarrow i P_{(k_x,k_y,k_z)}$.

 We can narrow down the bug in the proof of Ref.~\cite{albertini2018controllability} as spanning from the following commutation relations 
\begin{align}
\left[P_{(\overline{k}-1,1,0)}, P_{(0,0,1)}\right] &= (-2)\textcolor{red}{(2)}P_{(\overline{k}-2,2,0)} + (2)\textcolor{red}{(\overline{k})}P_{(\overline{k},0,0)}\,, \label{eq:23}\\
    \left[P_{(\overline{k}-1,0,1)}, P_{(0,1,0)}\right] &= (2)\textcolor{red}{(2)}P_{(\overline{k}-2,0,2)} + (-2)\textcolor{red}{(\overline{k})}P_{(\overline{k},0,0)}\,,\label{eq:24}\\
\left[P_{(\overline{k}-2,1,0)}, P_{(1,0,1)}\right] =& (-2)\textcolor{red}{(2(n-\overline{k}+2))}P_{(\textcolor{red}{\overline{k}-4},2,0)}+ (2)\textcolor{red}{((\overline{k}-2)(n-\overline{k}+2))}P_{(\overline{k}-2,0,0)} \nonumber\\
    &+ (-2)\textcolor{red}{(2(\overline{k}-2))}P_{(\overline{k}-2,2,0)} + (-2)\textcolor{red}{(2)}P_{(\overline{k}-2,0,2)} + (2)\textcolor{red}{(\overline{k}(\overline{k}-1))}P_{(\overline{k},0,0)}.\label{eq:26}
\end{align}
Here we have added in red the coefficients and indexes that were incorrectly derived in Ref.~\cite{albertini2018controllability}. Let us explain how we derived the coefficients for the right-hand side of Eq.~\eqref{eq:23}. First, each Pauli string in $P_{(\overline{k}-2,2,0)}$ can be generated in $2$ ways from the commutator on the left, since there are $2$ ways to choose which Pauli $Y$ got introduced. After that, each Pauli string in $P_{(\overline{k},0,0)}$ can be generated in $\overline{k}$ ways from the commutator on the left, since there are $\overline{k}$ ways to choose which one of the $\overline{k}$ Pauli $X$ symbols came from the $[Y,Z]$ commutator. The coefficients in Eqs.~\eqref{eq:24} and~\eqref{eq:26} are derived with similar combinatorial arguments.

Let $A$ and $B$ denote the right-hand-sides of Eqs.~\eqref{eq:23} and~\eqref{eq:24}, respectively:
\begin{align}
    A &= (-2)\textcolor{red}{(2)}P_{(\overline{k}-2,2,0)} + (2)\textcolor{red}{(\overline{k})}P_{(\overline{k},0,0)}\label{eq:23-A} \\
    B &= (2)\textcolor{red}{(2)}P_{(\overline{k}-2,0,2)} + (-2)\textcolor{red}{(\overline{k})}P_{(\overline{k},0,0)}\,.\label{eq:24-B}
\end{align}

Next, the authors of Ref.~\cite{albertini2018controllability} claim that since $P_{(\overline{k}-4,2,0)}$ 
 and $P_{(\overline{k}-2,0,0)}$ are in the DLA, then one can  generate an operator, which we call $C$, and which we define as
\begin{equation}
    C = (-2)\textcolor{red}{(2(\overline{k}-2))}P_{(\overline{k}-2,2,0)} + (-2)\textcolor{red}{(2)}P_{(\overline{k}-2,0,2)} + (2)\textcolor{red}{(\overline{k}(\overline{k}-1))}P_{(\overline{k},0,0)}.
\end{equation}

Without the combinatorial corrections in red, the operators $A$, $B$ and $C$  would be linearly independent, and this is precisely the mistake in~\cite{albertini2018controllability} (see Eqs.~(23), (24) and (26) therein). However, with the combinatorial corrections, these three quantities satisfy the relation
\begin{equation}
    C = (\overline{k}-2)A + (-1)B.
\end{equation}
Hence these three quantities are not linearly independent. This small, albeit important difference ruins the ability to complete the proof in~\cite{albertini2018controllability}.

\end{document}